\newtheorem{theorem}{Theorem}
\newtheorem{lemma}{Lemma}
\newtheorem{remark}{Remark}
\newtheorem{problem}{Problem}
\title{ReLU Networks for Model Predictive Control: Network Complexity and Performance Guarantees}
\author{Xingchen Li, Keyou You%
\thanks{This work was supported by National Science and Technology Major Project of China (2022ZD0116700), National Natural Science Foundation of China (62325305, 62033006), and the BNRist project (No. BNR2024TD03003). (Corresponding author: Keyou You)}%
\thanks{Xingchen Li and Keyou You are with the Department of Automation, and Beijing National Research Center for Info. Sci. \& Tech. (BNRist), Tsinghua University, Beijing 100084, China. (e-mail: lixc21@mails.tsinghua.edu.cn, youky@tsinghua.edu.cn)}%
}
\date{\today}
\begin{document}
\maketitle
\thispagestyle{empty}
\pagestyle{empty}

\begin{abstract}
Recent years have witnessed a resurgence in using ReLU neural networks (NNs) to represent model predictive control (MPC) policies. However, determining the required network complexity to ensure closed-loop performance remains a fundamental open problem. This involves a critical precision-complexity trade-off: undersized networks may fail to capture the MPC policy, while oversized ones may outweigh the benefits of ReLU network approximation. In this work, we propose a projection-based method to enforce hard constraints and establish a state-dependent Lipschitz continuity property for the optimal MPC cost function, which enables sharp convergence analysis of the closed-loop system. For the first time, we derive explicit bounds on ReLU network width and depth for approximating MPC policies with guaranteed closed-loop performance. To further reduce network complexity and enhance closed-loop performance, we propose a non-uniform error framework with a state-aware scaling function to adaptively adjust both the input and output of the ReLU network. Our contributions provide a foundational step toward certifiable ReLU NN-based MPC.
\end{abstract}

\begin{IEEEkeywords}
Model predictive control, ReLU neural networks, complexity analysis, error bound, stability guarantees
\end{IEEEkeywords}

\section{Introduction}
Model predictive control (MPC) is a widely used control strategy in industrial applications, including chemical processes \cite{chen2012distributed}, power systems \cite{venkat2008distributed}, robotics \cite{nubert2020safe}, and autonomous vehicles \cite{cheng2019longitudinal}. However, it requires solving an optimization problem at each sampling interval, posing challenges for embedded platforms with limited computational resources \cite{nguyen2024tinympc}. Beyond developing efficient embedded solvers, it is essential to relocate the computation load from \textit{online} to \textit{offline}. 

One prominent approach  is explicit MPC, which leverages the solution structure of multi-parameter quadratic programs (mpQPs) and precomputes affine MPC policies for each critical region (CR)  \cite{bemporad2002explicit}.
Online, it simply locates the CR containing the current state and outputs the corresponding control gain. Although techniques such as region merging \cite{geyer2008optimal} and efficient indexing via binary trees \cite{tondel2003evaluation} or hash tables \cite{bayat2011using} can reduce the number of CRs, it still grows exponentially with the prediction horizon, input dimension, and number of constraints \cite{alessio2009survey,bemporad2021explicit}.

Another appealing approach directly approximates the MPC policy via function approximation, including kernel-based, barycentric, and orthogonal search tree methods \cite{tokmak2023automatic,jones2010polytopic,johansen2003approximate}. These, however, are generally applied to low-dimensional systems due to the curse of dimensionality. Recent advances in neural networks (NNs) have spurred growing interest in using NNs to approximate MPC policies. Various network architectures, e.g., feed-forward NNs, RNNs, and LSTMs \cite{nubert2020safe,quan2019approximate,kumar2018deep}, have been deployed on embedded platforms, such as ARM CPUs \cite{nubert2020safe}, GPUs \cite{kumar2018deep}, DSPs \cite{lu2014convergence}, and FPGAs \cite{wang2021model,dong2023standoff}, which have achieved successful applications in resonant power converters, UAVs, and robotic manipulators \cite{lucia2020deep,dong2023standoff,nubert2020safe}.

Research on NN-based MPC also focuses on enhancing sample efficiency with guaranteed performance of the closed-loop system. For efficient sample generation, methods such as nonlinear programming sensitivity-based data augmentation \cite{krishnamoorthy2021sensitivity} and the geometric random walk algorithm \cite{chen2022large} have been proposed. To preserve stability and recursive feasibility of the original MPC policy, techniques such as the output projection layer \cite{chen2018approximating}, robust MPC formulations \cite{hertneck2018learning}, and post-approximation optimization \cite{chen2022large} are employed. Stability of NN-based controllers can also be verified via deterministic methods of mixed-integer programs \cite{fabiani2022reliably,schwan2023stability} or sample-based probabilistic guarantees \cite{hertneck2018learning}. Further reviews on NN-based MPC are available in \cite{gonzalez2023neural}.

Despite the rapidly expanding body of literature, a foundational question regarding the necessary network complexity (i.e., depth and width) for NN-based MPC policy to achieve closed-loop performance guarantees remains open \cite{karg2023efficient,fabiani2022reliably,gonzalez2023neural}. This highlights the core precision-complexity trade-off: while an undersized network may fail to capture the MPC policy, an oversized one can incur high online inference latency that may outweigh the benefits of NN approximation, sometimes even becoming slower than iteratively solving the original QP online. To address this gap, existing results on ReLU approximation theory offer relevant insights for deriving the required network complexity. Shen et al. \cite{shen2022optimal} established {\em uniform} error bounds for ReLU networks approximating continuous functions. Specifically, for any continuous function $f:[0, 1]^n\rightarrow \mathbb{R}$, there exists a ReLU network achieving a uniform approximation error bound of $\mathcal{O}(\sqrt{n}\,\omega_f((N^2L^2\ln N)^{-1/n}))$, where $N$ and $L$ denote the network width and depth, and $\omega_f(\cdot)$ is a continuity modulus of $f$. Their construction attains the minimal error bound in big $\mathcal{O}$-notation and is shown to be tight via Vapnik-Chervonenkis (VC)-dimension arguments. ReLU networks can also approximate real analytic functions \cite{opschoor2022exponential} and those in Besov spaces \cite{ali2021approximation}. Further results are summarized in recent reviews \cite{berner2021modern,devore2021neural}.

Nevertheless, directly applying those results to approximate MPC policies  faces two fundamental hurdles. First,  ReLU networks lack mechanisms to enforce the hard constraints of the MPC formulation. When the MPC policy operates on constraint boundaries, even minor approximation errors can lead to constraint violations. To address this issue, we project the output of MPC policies onto a tightened constraint set before approximation. If the gap between the original and tightened sets is sufficient to encompass the approximation error bound, constraint satisfaction can be guaranteed.

Second, approximation errors can compromise closed-loop performance and even provoke instability. We establish a novel, state-dependent Lipschitz continuity for the optimal MPC cost function, instead of the commonly used global Lipschitz condition \cite{borrelli2017predictive}. This measure captures the local behavior more accurately, providing a sharper tool for closed-loop convergence analysis. Crucially, it permits approximation errors to scale linearly with the system state, motivating the design of non-uniform error bounds for ReLU networks. We accordingly propose state-aware scaling strategies that adaptively adjust network inputs/outputs to achieve these desired error bounds. This approach not only enhances closed-loop performance under the ReLU NN-based MPC policy but also reduces the required network complexity.

Overall,  our main contributions, along with their key takeaways, are summarized below:

\begin{itemize}
\item \textbf{Sharp tool for closed-loop convergence:} We establish a state-dependent Lipschitz continuity property for the optimal MPC cost function, which is sharper than the commonly used global Lipschitz condition in the literature \cite{borrelli2017predictive} (Lemma \ref{lem:3Lip}), thereby enabling a tighter analysis of the closed-loop convergence. 
\item \textbf{Non-uniform error bounds for ReLU network approximation:} Despite extensive research on ReLU network approximation,  a notable limitation of existing theories is their focus on uniform error bounds. Non-uniform bounds could better balance closed-loop performance and network complexity, but achieving them is non-trivial. To this end, we propose state-aware scaling functions to adaptively adjust the input and output of ReLU networks. This results in tunable error bounds even for a static ReLU network and concurrently lowers network complexity (cf. Section \ref{sec:pro2}). 

\item \textbf{Explicit bounds of ReLU network complexities:} We establish the first explicit bounds on the depth and width of ReLU NN-based MPC policies with certifiable closed-loop performance in terms of the mpQP dimensions and the geometry of the constraint set.
\end{itemize}

The remainder of this paper is organized as follows. In Section \ref{sec_formulation}, we describe the problem formalization. In Section \ref{sec:main_results}, we provide the main complexity results for ReLU NN-based MPC policy with a uniform error bound. In Section \ref{sec:pro2}, we sharpen the analysis by designing a non-uniform error bound to reduce network complexity and enhance closed-loop performance. In Section \ref{sec:simu}, we provide numerical examples. Finally, Section \ref{sec:conclusion} concludes the paper.

\textbf{Notation}:
We denote the sets of real, non-negative real, natural, and positive natural numbers by $\mathbb{R}$, $\mathbb{R}_+$, $\mathbb{N}$, and $\mathbb{N}_+$, respectively. The sets of $n \times n$ symmetric, positive semi-definite, and positive definite matrices are $\mathbb{S}^n$, $\mathbb{S}^n_+$, and $\mathbb{S}^n_{++}$. For a vector $x \in \mathbb{R}^n$, $\|x\|$ is the Euclidean norm, and for a matrix $Q \in \mathbb{S}^n_+$, $\|x\|_Q := \sqrt{x^\top Qx}$. The closed ball of radius $\epsilon$ centered at $x$ is $\mathcal{B}(x, \epsilon)$, and $\mathcal{B}(\epsilon) := \mathcal{B}(0, \epsilon)$. For a bounded set $\mathcal{T}$ containing the origin, its radius is $D(\mathcal{T}) := \sup_{x \in \mathcal{T}} \|x\|$, and its inradius is $d(\mathcal{T}) := \sup\{\delta : \mathcal{B}(\delta) \subseteq \mathcal{T}\}$. The projection of $x$ onto a convex set $\mathcal{T}$ is $\Pi_{\mathcal{T}}(x) := \arg\min_{y \in \mathcal{T}} \|x-y\|$. For a Lipschitz continuous function $f:\mathcal{X} \to \mathbb{R}^m$, its global Lipschitz constant (GLC)  is $L_f= \sup_{x,y \in \mathcal{X}, x \neq y} {\|f(x)-f(y)\|}/{\|x-y\|}$.  Throughout the paper, C-sets are convex and compact sets containing the origin in their interior.

\section{Problem Formulation}\label{sec_formulation}
This section first introduces the linear Model Predictive Control (MPC) setup  for quadratic stabilization of constrained linear time-invariant (LTI) systems and the ReLU NN-based MPC policy. Then, we formulate the problem of determining the network complexity to guarantee closed-loop performance.

\subsection{MPC for quadratic stabilization of constrained LTI systems}
\label{sec:mpc_stabilization}
We consider the quadratic stabilization problem of the following constrained discrete LTI system:
\begin{equation}\label{equ:system}
    x_{t+1}=Ax_t+Bu_t, ~t\in\mathbb{N},
\end{equation}
where $x_t\in\mathcal{X}\subseteq\mathbb{R}^{n_x}$ is the state vector, $u_t\in\mathcal{U}\subseteq\mathbb{R}^{n_u}$ is the control input, $A\in\mathbb{R}^{n_x\times n_x}$ and $B\in\mathbb{R}^{n_x\times n_u}$ are system matrices. MPC is a well-established method for stabilizing the above constrained systems, which computes the control input at each time step $t$ by solving
\begin{equation}\label{equ:mpc}
\begin{aligned}
    \operatorname*{minimize}_{u_{0|t},\dots,u_{n_p-1|t}} \quad& \|x_{n_p|t}\|_P^2+\sum_{k=0}^{n_p-1}\|x_{k|t}\|_Q^2+\|u_{k|t}\|_R^2\\
    \text{s.t.} \quad& x_{k+1|t}=Ax_{k|t}+Bu_{k|t},\\
    &x_{k|t}\in\mathcal{X},~u_{k|t}\in\mathcal{U}, ~x_{n_p|t}\in\mathcal{X}_{f}\\
    &x_{0|t}=x_t, k=0,\dots,n_p - 1.
\end{aligned}
\end{equation}
Here, $P, Q \in \mathbb{S}^{n_x}_+$ and $R \in \mathbb{S}^{n_u}_{++}$; $n_p \in \mathbb{N}_+$ is the prediction horizon. The sets $\mathcal{X}$, $\mathcal{U}$, and the terminal set $\mathcal{X}_f$ are C-sets.

Clearly, the optimal solution to \eqref{equ:mpc} is a function of the current state vector $x_t$. We denote the first element of the optimal control sequence as $u_{0|t}^*(x_t)$ and the optimal cost as $v_\text{mpc}(x_t)$. Then, the MPC policy is given as
\begin{equation}\label{equ:mpc_policy}
u_\text{mpc}(x_t) := u_{0|t}^*(x_t).
\end{equation}

Under standard MPC design choices \cite[Sec.\ 2.5.4]{rawlings2017model}, the MPC policy \eqref{equ:mpc_policy} renders the closed-loop system \eqref{equ:system} exponentially stable. In particular, there exist a C-set $\mathcal{X}_\text{inv}$ and positive constants $c_1,c_2,c_3 \in \mathbb{R}_+$ such that, for all $x \in \mathcal{X}_\text{inv}$, the following inequalities hold:
\begin{subequations}\label{equ:exp_stability1-3}
\begin{align}
&Ax + Bu_\text{mpc}(x) \in \mathcal{X}_\text{inv},\label{equ:exp_stability1}\\
&c_1 \|x\|^2 \leq v_\text{mpc}(x) \leq c_2 \|x\|^2,\label{equ:exp_stability2}\\
&v_\text{mpc}(Ax + Bu_\text{mpc}(x)) - v_\text{mpc}(x) \leq -c_3 \|x\|^2.\label{equ:exp_stability3}
\end{align}
\end{subequations}

For simplicity, we choose $\mathcal{X}_\text{inv}$ as a sublevel set of the optimal cost function $v_\text{mpc}(\cdot)$, i.e., 
\begin{equation}\label{sublevel}
\mathcal{X}_\text{inv}=\mathcal{V}(\gamma):=\{x \mid v_\text{mpc}(x)\le \gamma\}.\end{equation} 
Note that the constants $c_1, c_2, c_3$ and the maximal $\gamma$ can be computed by leveraging the piecewise affine structure of the MPC policy \cite{borrelli2017predictive}.

\subsection{The ReLU NN-based MPC policy}
Since the MPC policy \eqref{equ:mpc_policy} is piecewise affine, it is natural to approximate it with a ReLU neural network (NN), yielding a ReLU NN-based MPC policy:
\begin{equation}\label{equ:nn}
    u_\text{nn}(x) = l_{n_d} \circ \sigma  \circ \cdots \circ l_2 \circ \sigma \circ l_1(x),
\end{equation}
where $l_i:= W_i x + b_i$, $W_i\in\mathbb{R}^{n_{i+1}\times n_i}$, $b_i\in\mathbb{R}^{n_{i+1}}$, $i=1,\dots,n_d$, and $\sigma(\cdot):=\max\{0,\cdot\}$ is the vectorized ReLU activation function. Then, the closed-loop system is given by
\begin{equation}\label{equ:closed_loop}
    x_{t+1}=Ax_t+Bu_\text{nn}(x_t).
\end{equation}

\begin{figure}[t!]
    \centering
    \includegraphics[width=1\linewidth]{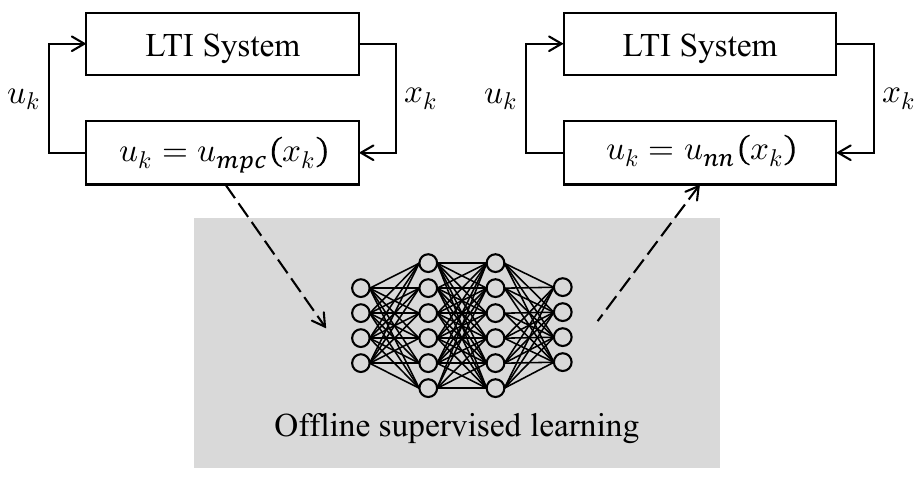}
    \caption{The MPC and the ReLU NN-based MPC}
    \label{fig:structure}
\end{figure}

Indeed, the idea of using ReLU NNs to approximate MPC policies is not new and has been widely adopted, see e.g., \cite{kumar2018deep,dong2023standoff,lucia2020deep,gonzalez2023neural}. Nevertheless, determining the network complexity (i.e., width $n_w:=\max_{i=1,\ldots,n_d-1}\{n_i\}$ and depth $n_d$) required to guarantee closed-loop performance of \eqref{equ:closed_loop} remains an open problem. In this paper, we aim to fill this gap by explicitly establishing bounds on the width and depth of ReLU NNs for approximating MPC policies.

\subsection{Objective of this work}
As in \cite{karg2023efficient,fabiani2022reliably,gonzalez2023neural,schwan2023stability}, we first establish the complexity of the ReLU NN-based MPC policy under the framework of {\em uniform} error bound.
\begin{problem}[Uniform error bound]\label{problem1} Consider the constrained LTI system \eqref{equ:system} with a MPC formulation \eqref{equ:mpc} satisfying
\eqref{equ:exp_stability1-3}, determine an upper bound $\bar\delta$ and, for any $\delta_1 \in (0, \bar\delta]$, characterize network complexity $n_w, n_d \in \mathbb{N}_+$ ensuring that the ReLU NN-based MPC policy in \eqref{equ:nn} satisfies
\begin{subequations}
\label{equ:con}
\begin{align}
    &u_\text{nn}(x)\in\mathcal{U},\label{equ:con2}\\
    &\|u_\text{nn}(x)-u_\text{mpc}(x)\|\leq\delta_1,\label{equ:con1}\\
    &Ax+Bu_\text{nn}(x)\in\mathcal{X}_\text{inv}, \forall x \in \mathcal{X}_\text{inv}, \label{equ:con3}
\end{align}
\end{subequations}
and for any $x_0\in\mathcal{X}_\text{inv}$, the closed-loop system \eqref{equ:closed_loop} converges to an invariant set of size $\mathcal{O}(\delta_1)$.
\end{problem}

Prior works \cite{karg2023efficient,fabiani2022reliably,gonzalez2023neural,schwan2023stability} typically assume the availability of a ReLU network satisfying \eqref{equ:con} but do not quantify the required network width and depth. Under this condition, the closed-loop system is shown to converge to an invariant set of size $\mathcal{O}(\sqrt{\delta_1})$. In addition to establishing explicit bounds for the width and depth of ReLU networks, we sharpen the closed-loop convergence to a smaller invariant set of size $\mathcal{O}(\delta_1)$.

Then, we take the perspective of {\em non-uniform} error bounds to further improve the closed-loop convergence to an arbitrarily small invariant C-set $\mathcal{X}_\text{inv}'$ while significantly reducing the network complexity. To the best of our knowledge, such an idea has not been considered in the literature.

\begin{problem}[Non-uniform error bounds]\label{problem2}
Consider the constrained LTI system \eqref{equ:system} with an MPC formulation \eqref{equ:mpc} satisfying \eqref{equ:exp_stability1-3}, and for any C-set $\mathcal{X}_\text{inv}'\subseteq\mathcal{X}_\text{inv}$, design a non-uniform error bound function $\delta_2(\cdot)$ and characterize network complexity $n_w,n_d\in\mathbb{N}_+$ ensuring that 
the NN-based MPC policy $u_\text{nn}(\cdot)$ in \eqref{equ:nn} satisfies
\begin{subequations}\label{equ:2con}
\begin{align}
    &u_\text{nn}(x)\in\mathcal{U},\label{equ:2con2}\\
    &\|u_\text{nn}(x)-u_\text{mpc}(x)\|\leq\delta_2(x),\label{equ:2con1}\\
    &Ax+Bu_\text{nn}(x)\in\mathcal{X}_\text{inv},\forall x \in \mathcal{X}_\text{inv},\label{equ:2con3}
\end{align}
\end{subequations}
and for any $x_0\in\mathcal{X}_\text{inv}$, the closed-loop system \eqref{equ:closed_loop} converges to the desired invariant set $\mathcal{X}_\text{inv}'$.
\end{problem}

A key distinction between \eqref{equ:2con1} and the uniform error bound in \eqref{equ:con1} is the use of $\delta_2(x)$ as a designable error-bound function. This allows the approximation errors to be actively tuned based on the state of the closed-loop system, which naturally leads to non-uniform approximation errors. Crucially, a properly designed error bound function $\delta_2(x)$ cannot only improve the closed-loop convergence to any arbitrary C-set but also reduce the required network complexity. In particular, we explicitly show that the required network complexity strictly improves upon that required for the uniform error bound. To realize \eqref{equ:2con1}, we design state-aware scaling functions to adaptively modulate the inputs and outputs of the ReLU networks. 
\section{Network Complexity for the ReLU NN-based MPC Policy in Solving Problem \ref{problem1}}\label{sec:main_results}

\begin{figure*}[thbp]
\centering
\includegraphics[width=1\linewidth]{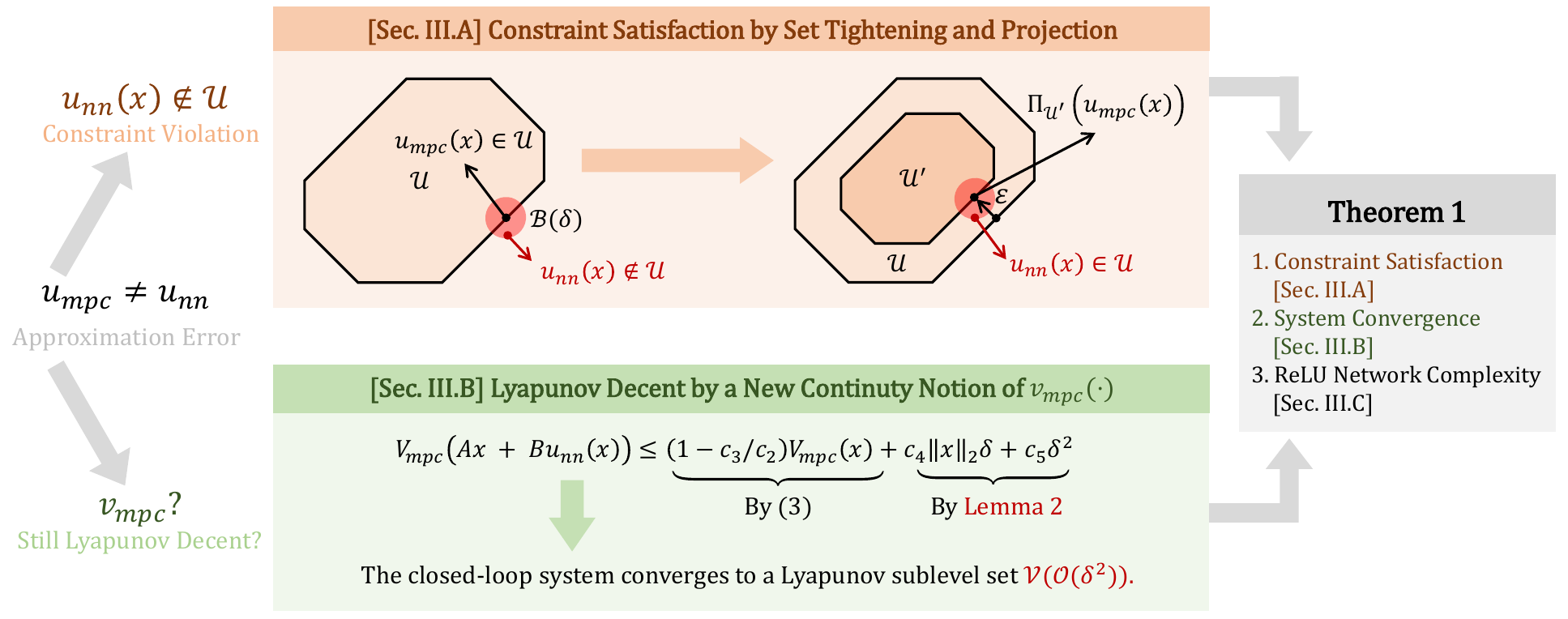}
\vspace{-2em}
\caption{Illustration of the main challenges and our analytical approach. }
\vspace{-1em}
\label{fig:constraint_issue}
\end{figure*}

In this section, we establish the explicit  complexity of the ReLU network required to solve Problem \ref{problem1}. The core challenge lies in managing approximation errors between the MPC policy in \eqref{equ:mpc_policy}  and the ReLU NN-based MPC policy in \eqref{equ:nn}, which lead to two critical issues: (a) violation of control input constraint in \eqref{equ:con2}, and (b) disruption of the Lyapunov function's descent property. 

Our approach, illustrated in Fig. \ref{fig:constraint_issue}, addresses these issues to derive the final complexity bounds and consists of three main steps:
\begin{itemize}
    \item \textbf{Constraint satisfaction via projection:} To prevent constraint violations, we first project the MPC policy onto a tightened constraint set. This ensures that even with approximation errors, the output of ReLU network remains within the feasible set $\mathcal{U}$. See Section \ref{sec:projection}.
    \item \textbf{Sharpened closed-loop convergence:} To study the impact of approximation errors on closed-loop convergence, we propose a novel state-dependent continuity measure for the optimal MPC cost function $v_\text{mpc}(\cdot)$ in Section \ref{sec:lip}. This refined tool allows for a sharper analysis than that of using the commonly used global Lipschitz condition, proving that the closed-loop system converges to an invariant set of size $\mathcal{O}(\delta_1)$, which is smaller than $\mathcal{O}(\sqrt{\delta_1})$ in prior works \cite{fabiani2022reliably,li2025relu} and consistent with the scaling invariance of LTI systems.
    \item \textbf{Complexity bounds of ReLU networks:} Building on the above guarantees, we leverage recent advances in ReLU network approximation theory \cite{shen2022optimal} to establish explicit upper bounds on the network width and depth necessary to satisfy the required error tolerance in Section \ref{sec:complexity}.
\end{itemize}

We now detail the proposed approach.

\subsection{Constraint satisfaction via set tightening and projection}\label{sec:projection}
Since the control action of the MPC policy may lie on the boundary of the constraint set $\mathcal{U}$, even an arbitrarily small approximation error can lead to constraint violations for the ReLU NN-based MPC policy. As illustrated in Fig.~\ref{fig:constraint_issue}, if the control input \( u_{\text{mpc}}(x) \) lies exactly on the boundary of $\mathcal{U}$, there is no way to guarantee that \( u_{\text{nn}}(x) \) will remain within $\mathcal{U}$.

To solve this issue, we propose a constraint tightening strategy to enable the constraint satisfaction of the ReLU NN-based MPC policy. Specifically, we first project the control action \(u_{\text{mpc}}(x)\) onto a tightened constraint set: 
\begin{equation}\label{tightset}
\mathcal{U}'(\epsilon) := \{ u \in \mathcal{U} : u + \Delta \in \mathcal{U}, \; \forall \Delta \in \mathcal{B}(\epsilon) \}
\end{equation}
and shift our approximation target  to the tightened counterpart 
\begin{equation}\label{tightproj}
\Pi_{\mathcal{U}'(\epsilon) }(u_{\text{mpc}}(x))\in \arg\min_{u\in\mathcal{U}'(\epsilon) }\|u-u_{\text{mpc}}(x)\|.\end{equation}

Since $\mathcal{U}'(\epsilon) $ is a C-set (c.f. Lemma~\ref{lem:3Lip1}), the above projection is well-defined. If the gap between $\mathcal{U}$ and $\mathcal{U}'(\epsilon) $ is sufficient to encompass approximation errors, constraint satisfaction can be guaranteed, which is formalized below.

\begin{lemma}\label{lem:projection}
Let $\epsilon \le \delta_1 d(\mathcal{U})/(2D(\mathcal{U}))$. If
\begin{equation}\label{tightedapprox}
\|u_{\text{nn}}(x)-\Pi_{\mathcal{U}'(\epsilon) }(u_{\text{mpc}}(x))\| \le \epsilon, \forall x \in \mathcal{X},
\end{equation}
 then $\| u_{\text{nn}}(x) - u_{\text{mpc}}(x) \| \leq \delta_1$ and $u_{\text{nn}}(x) \in \mathcal{U}$, i.e., \eqref{equ:con2} and \eqref{equ:con1} hold. 
\end{lemma}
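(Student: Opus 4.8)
The plan is to split the claim into its two assertions and handle the constraint-satisfaction part first, since it follows almost immediately from the definition of the tightened set. Write $u_{\text{nn}}(x) = \Pi_{\mathcal{U}'(\epsilon)}(u_{\text{mpc}}(x)) + e$ with $\|e\| \le \epsilon$ by hypothesis \eqref{tightedapprox}. Because $\Pi_{\mathcal{U}'(\epsilon)}(u_{\text{mpc}}(x)) \in \mathcal{U}'(\epsilon)$ and $e \in \mathcal{B}(\epsilon)$, the very definition \eqref{tightset} of $\mathcal{U}'(\epsilon)$ guarantees that adding any perturbation in $\mathcal{B}(\epsilon)$ keeps the point in $\mathcal{U}$; hence $u_{\text{nn}}(x) \in \mathcal{U}$, which is \eqref{equ:con2}.

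For the error bound \eqref{equ:con1}, I would first apply the triangle inequality to obtain
\begin{equation*}
\|u_{\text{nn}}(x)-u_{\text{mpc}}(x)\| \le \underbrace{\|u_{\text{nn}}(x)-\Pi_{\mathcal{U}'(\epsilon)}(u_{\text{mpc}}(x))\|}_{\le\,\epsilon} + \|\Pi_{\mathcal{U}'(\epsilon)}(u_{\text{mpc}}(x))-u_{\text{mpc}}(x)\|,
\end{equation*}
so everything reduces to bounding the projection displacement, i.e.\ how far the original feasible action $u_{\text{mpc}}(x)\in\mathcal{U}$ is from the eroded set $\mathcal{U}'(\epsilon)$. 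This geometric estimate is the crux of the argument and the step I expect to require the most care.

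The key idea for this displacement bound is a scaling/convexity argument exploiting the inradius $d(\mathcal{U})$. I would show that the contracted set $\lambda\,\mathcal{U}$ is contained in $\mathcal{U}'(\epsilon)$ whenever $\lambda = 1 - \epsilon/d(\mathcal{U})$: for $w\in\mathcal{U}$ and $\|\Delta\|\le\epsilon$, write $\lambda w + \Delta = \lambda w + (1-\lambda)\,\tfrac{\Delta}{1-\lambda}$, and note that $\tfrac{\Delta}{1-\lambda}\in\mathcal{B}(d(\mathcal{U}))\subseteq\mathcal{U}$ since $\|\Delta\|\le\epsilon=(1-\lambda)d(\mathcal{U})$; convexity of $\mathcal{U}$ then places the convex combination in $\mathcal{U}$, establishing $\lambda\,\mathcal{U}\subseteq\mathcal{U}'(\epsilon)$. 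Consequently $\lambda\,u_{\text{mpc}}(x)$ is a feasible competitor for the projection, so
\begin{equation*}
\|\Pi_{\mathcal{U}'(\epsilon)}(u_{\text{mpc}}(x))-u_{\text{mpc}}(x)\| \le \|\lambda\,u_{\text{mpc}}(x)-u_{\text{mpc}}(x)\| = (1-\lambda)\|u_{\text{mpc}}(x)\| \le \frac{\epsilon\,D(\mathcal{U})}{d(\mathcal{U})},
\end{equation*}
where I used $1-\lambda=\epsilon/d(\mathcal{U})$ and $\|u_{\text{mpc}}(x)\|\le D(\mathcal{U})$.

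Finally I would substitute the hypothesis $\epsilon \le \delta_1\,d(\mathcal{U})/(2D(\mathcal{U}))$. This makes the displacement term at most $\delta_1/2$, and since the inradius never exceeds the radius, $d(\mathcal{U})\le D(\mathcal{U})$ forces $\epsilon\le\delta_1/2$ as well. Adding the two contributions yields $\|u_{\text{nn}}(x)-u_{\text{mpc}}(x)\|\le \epsilon + \delta_1/2 \le \delta_1$, completing \eqref{equ:con1}. The only genuine subtlety is the scaling inclusion $\lambda\,\mathcal{U}\subseteq\mathcal{U}'(\epsilon)$; once it is in hand, both conclusions follow by elementary triangle-inequality bookkeeping.
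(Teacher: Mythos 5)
Your proposal is correct and follows the same overall decomposition as the paper: the feasibility claim from the definition \eqref{tightset} of the tightened set, and the error bound from the triangle inequality plus an estimate of the projection displacement $\|\Pi_{\mathcal{U}'(\epsilon)}(u_{\text{mpc}}(x))-u_{\text{mpc}}(x)\|\le \epsilon D(\mathcal{U})/d(\mathcal{U})$. The only real difference is how that displacement bound is obtained: the paper invokes its appendix Lemma \ref{lem:3Lip2} on the function $r(\mathcal{U},\epsilon)=\max_{u\in\mathcal{U}}\|u-\Pi_{\mathcal{U}'(\epsilon)}(u)\|$, which is proved there via a monotonicity/Lipschitz argument on nested tightenings, whereas you prove exactly what is needed inline through the inclusion $\lambda\,\mathcal{U}\subseteq\mathcal{U}'(\epsilon)$ with $\lambda=1-\epsilon/d(\mathcal{U})$ and the suboptimality of the competitor $\lambda\,u_{\text{mpc}}(x)$. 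This is in fact the same scaling-plus-convexity mechanism hidden inside the paper's proof of Lemma \ref{lem:3Lip2} (the construction of the point $w$ there), so your route is a leaner, self-contained special case that avoids the auxiliary machinery; the paper's version buys the stronger statement (strict monotonicity and two-sided Lipschitz bounds for $r(\mathcal{U},\cdot)$), which it reuses elsewhere (e.g.\ in Lemma \ref{lem:transformation}). One minor point to make explicit: your scaling step needs $\lambda\ge 0$, i.e.\ $\epsilon\le d(\mathcal{U})$; this holds here since $\epsilon\le\delta_1 d(\mathcal{U})/(2D(\mathcal{U}))\le d(\mathcal{U})/2$ once $\delta_1\le\bar\delta\le d(\mathcal{U})$, and the paper imposes the same restriction $\epsilon\in[0,d(\mathcal{U}))$ in Lemma \ref{lem:3Lip2}.
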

\begin{proof}
Define the maximum projection distance
$$r(\mathcal{U},\epsilon) := \max_{u \in \mathcal{U}} \|u - \Pi_{\mathcal{U}'(\epsilon) }(u)\|_2.$$ 
By Lemma \ref{lem:3Lip1}  of Appendix \ref{sec:proof_3Lip2}, we obtain that
\begin{align*}
\|u_{\text{nn}}(x)-u_{\text{mpc}}(x)\| & \le  \|u_{\text{nn}}(x)-\Pi_{\mathcal{U}'(\epsilon) }(u_{\text{mpc}}(x))\| \\
&~~~+ \|\Pi_{\mathcal{U}'(\epsilon) }(u_{\text{mpc}}(x)) - u_{\text{mpc}}(x)\| \\
& \le \epsilon + r(\mathcal{U}, \epsilon) \leq \epsilon + \epsilon D(\mathcal{U})/d(\mathcal{U}) \\
&\le \delta_1.
\end{align*}

Since $\Pi_{\mathcal{U}'(\epsilon) }(u_{\text{mpc}}(x)) \in \mathcal{U}'(\epsilon) $ and $u_{\text{nn}}(x)-\Pi_{\mathcal{U}'(\epsilon) }(u_{\text{mpc}}(x))\in \mathcal{B}(\epsilon)$ (c.f. \eqref{tightedapprox}), it follows from \eqref{tightset} that $u_{\text{nn}}(x) \in \mathcal{U}$.
\end{proof}
Our approach offers an elegant and theoretically guaranteed solution to the constraint violation problem. Consequently, the objective of the ReLU network with respect to \eqref{equ:con2} and \eqref{equ:con1} is reformulated as \eqref{tightedapprox}. Building on state-of-the-art results for the approximation of ReLU networks \cite{shen2022optimal}, we can derive the required network complexity. 

\begin{remark}
As illustrated in Fig. \ref{fig:structure}, the ReLU NN-based MPC policy \eqref{equ:nn} can be  obtained through supervised learning. A training set is constructed by sampling state vectors $x \in \mathcal{X}$, solving the MPC problem \eqref{equ:mpc} (with $x_t = x$) to get $u_{\text{mpc}}(x)$, and applying the projection in \eqref{tightproj} to obtain the target control input. This dataset of state-target pairs is then used to train the ReLU network, yielding the ReLU NN-based MPC policy of \eqref{equ:nn}.
\end{remark}

\subsection{State-dependent Lipschitz continuity  of  $v_\text{mpc}(\cdot)$ for improved closed-loop convergence}\label{sec:lip}
In the presence of asymmetric approximation errors, the closed-loop system \eqref{equ:closed_loop} does not converge to the exact origin, but rather to an invariant C-set. This result has been established in \cite{fabiani2022reliably} under \eqref{equ:con2} and \eqref{equ:con1}  by exploiting the global Lipschitz continuity of the optimal MPC cost function $v_\text{mpc}(\cdot)$ \cite{borrelli2017predictive}, i.e., 
\begin{equation}\label{glc}
|v_\text{mpc}(x_1) - v_\text{mpc}(x_2)| \leq {L}_{v}\cdot \|x_1 - x_2\|
\end{equation}
where ${L}_{v}$ is the global Lipschitz constant (GLC).  Combined with \eqref{equ:con2} and \eqref{equ:con1}, it follows that
\begin{align}\label{lyapunov}
|v_\text{mpc}(Ax + Bu_\text{nn}(x)) - v_\text{mpc}(Ax + Bu_\text{mpc}(x))| \le  {L}_{v} \|B\| \delta_1.
\end{align}
Jointly with the decrease condition in \eqref{equ:exp_stability1-3}, it yields
\begin{equation}\label{traditional_result}
v_\text{mpc}(Ax + Bu_\text{nn}(x)) \le (1 - c_3/c_2) v_\text{mpc}(x) + \delta_1{L}_{v}  \|B\|.
\end{equation}

Consequently, the closed-loop system \eqref{equ:closed_loop} converges to an invariant set of size $\mathcal{O}\left(\sqrt{\delta_1  {L}_{v}  \|B\|\cdot c_3 c_1 / c_2} \right)$. However, this conclusion suffers from two limitations. First, the GLC of $v_{\text{mpc}}(\cdot)$ can become prohibitively large \cite{borrelli2017predictive}. Second, the  size of the resulting invariant set scales as $\mathcal{O}(\sqrt{\delta_1})$, which essentially contradicts the scaling invariance property of LTI systems. A more refined analysis should instead yield an invariant set of size $\mathcal{O}(\delta_1)$, underscoring the need for a sharper theoretical characterization.

To this end, we establish a state-dependent Lipschitz continuity property of $v_\text{mpc}(\cdot)$ to capture its behavior more accurately. That is, the GLC of \eqref{glc} can be replaced with a state-dependent Lipschitz bound that scales linearly with the norm of the state vector.

\begin{lemma}[State-dependent Lipschitz continuity of $v_\text{mpc}$]\label{lem:3Lip}
Consider the MPC formulation in \eqref{equ:mpc} satisfying \eqref{equ:exp_stability1-3}. There exists a constant $c_0 > 0$ such that 
\begin{align}\label{newlc}
|v_\text{mpc}(x_1) - v_\text{mpc}(x_2)|& \le c_0 \max\{\|x_1\|, \|x_2\|\} \|x_1 - x_2\|.
\end{align}
\end{lemma}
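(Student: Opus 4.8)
The plan is to exploit the piecewise-quadratic (PWQ) structure of $v_\text{mpc}(\cdot)$ together with a gradient estimate that vanishes at the origin. After eliminating the predicted states, the MPC problem \eqref{equ:mpc} condenses to a multi-parametric QP in the stacked input $(u_{0|t},\dots,u_{n_p-1|t})$ whose Hessian is positive definite because $R\in\mathbb{S}^{n_u}_{++}$. By classical explicit-MPC theory \cite{bemporad2002explicit,borrelli2017predictive}, the optimal cost $v_\text{mpc}(\cdot)$ is therefore convex, continuously differentiable, and piecewise quadratic over a polyhedral partition $\{\mathcal{R}_i\}_{i=1}^m$ of its convex, compact feasible domain; on each region it has the form $v_\text{mpc}(x) = \tfrac12 x^\top H_i x + f_i^\top x + g_i$ with $H_i\in\mathbb{S}^{n_x}_+$, so its gradient equals the continuous piecewise-affine map $x\mapsto H_i x + f_i$ on $\mathcal{R}_i$.

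My first step would be to bound this gradient linearly in the state. Since the stage and terminal costs are nonnegative and the dynamics are homogeneous, we have $v_\text{mpc}(x)\ge 0 = v_\text{mpc}(0)$ (attained by $u\equiv 0$, which is feasible as the C-sets contain the origin), so the origin is a global minimizer of the convex, $C^1$ function $v_\text{mpc}$ and hence $\nabla v_\text{mpc}(0)=0$. Because $\nabla v_\text{mpc}(\cdot)$ is continuous and piecewise affine over a compact domain with finitely many regions, it is globally Lipschitz with constant $L_g := \max_{i} \|H_i\|$. Combining this with $\nabla v_\text{mpc}(0)=0$ gives the key estimate $\|\nabla v_\text{mpc}(x)\| = \|\nabla v_\text{mpc}(x) - \nabla v_\text{mpc}(0)\| \le L_g\|x\|$.

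The second step integrates this estimate along the segment joining the two points. As the feasible domain is convex, $x_t := (1-t)x_2 + t x_1$ stays feasible for $t\in[0,1]$, and the fundamental theorem of calculus yields $v_\text{mpc}(x_1) - v_\text{mpc}(x_2) = \int_0^1 \nabla v_\text{mpc}(x_t)^\top (x_1-x_2)\,dt$. Convexity of the norm gives $\|x_t\| \le (1-t)\|x_2\| + t\|x_1\| \le \max\{\|x_1\|,\|x_2\|\}$, so $\|\nabla v_\text{mpc}(x_t)\| \le L_g\max\{\|x_1\|,\|x_2\|\}$ uniformly in $t$. Applying Cauchy–Schwarz inside the integral then delivers $|v_\text{mpc}(x_1) - v_\text{mpc}(x_2)| \le L_g\max\{\|x_1\|,\|x_2\|\}\,\|x_1-x_2\|$, which is exactly \eqref{newlc} with $c_0 = L_g$.

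The hard part will be the regularity underpinning the first step: the estimate $\|\nabla v_\text{mpc}(x)\|\le L_g\|x\|$ hinges on $\nabla v_\text{mpc}$ being \emph{continuous} across the critical-region boundaries, which is what lets us upgrade the per-region slope bounds $\|H_i\|$ into a single global Lipschitz constant and anchor them at $\nabla v_\text{mpc}(0)=0$. This continuity is precisely the $C^1$ property of the value function, and it holds here only because $R\in\mathbb{S}^{n_u}_{++}$ renders the condensed QP strictly convex; if the gradient were allowed to jump across region boundaries, one would instead have to control each affine offset $f_i$ separately, a considerably more delicate task. By contrast, establishing $\nabla v_\text{mpc}(0)=0$ is routine once $C^1$ regularity is secured, since the nonnegativity $v_\text{mpc}\ge 0 = v_\text{mpc}(0)$ makes the origin a global minimizer.
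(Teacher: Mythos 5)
Your strategy---bound the (sub)gradient of $v_\text{mpc}$ linearly in $\|x\|$ and integrate along the segment---is genuinely different from the paper's, which never touches the value function's derivative: the paper instead invokes Lipschitz continuity of the optimal control sequence $u^*_{i|k}(\cdot)$ and of the induced predicted states, notes that both vanish at the origin, and expands the quadratic cost term by term via $|f(x_1)^\top M f(x_1)-f(x_2)^\top M f(x_2)|\le 2L_f^2\|M\|\max\{\|x_1\|,\|x_2\|\}\|x_1-x_2\|$. The detour matters because your proof rests on a claim that is not implied by $R\in\mathbb{S}^{n_u}_{++}$ alone: the value function of a strictly convex mpQP is convex, continuous, and piecewise quadratic, but it is \emph{not} $C^1$ in general. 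Differentiability at a parameter $x$ holds under LICQ/non-degeneracy of the active set, and can fail otherwise, in which case $\partial v_\text{mpc}(x)$ is a non-singleton polytope and the gradient jumps across a region boundary. A minimal counterexample: $V(x_1,x_2)=\min_z\{z^2/2: z\ge x_1,\ z\ge x_2\}=\tfrac12\max\{0,x_1,x_2\}^2$ has a positive definite Hessian in the decision variable, yet is not differentiable anywhere on the ray $x_1=x_2>0$. Degeneracies of exactly this type can arise in \eqref{equ:mpc} when active constraints at different prediction steps become linearly dependent through the dynamics, and the paper assumes no non-degeneracy.

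Once $C^1$ fails, your argument breaks at a specific point: you can no longer propagate $\nabla v_\text{mpc}(0)=0$ across region boundaries to conclude $\|H_i x+f_i\|\le L_g\|x\|$ on every region $\mathcal{R}_i$. You only learn that the offset vanishes on the region(s) containing the origin, while the offsets $f_i$ of distant regions remain uncontrolled, so the key estimate $\|\nabla v_\text{mpc}(x)\|\le L_g\|x\|$ is unjustified, and with it the integral bound. The lemma itself survives---one could, for instance, bound every subgradient using convexity together with the quadratic upper bound $v_\text{mpc}(x)\le c_2\|x\|^2$ from \eqref{equ:exp_stability2}, or simply switch to the paper's optimizer-based expansion, which needs only continuity and piecewise affinity of $u^*_{i|k}(\cdot)$ and is therefore immune to degeneracy---but as written the step ``$\nabla v_\text{mpc}$ is continuous, hence globally Lipschitz, hence anchored at zero'' is the load-bearing claim and it does not follow from strict convexity of the condensed QP. (A secondary caveat, shared with the paper's own citation of explicit-MPC results: $\mathcal{X},\mathcal{U},\mathcal{X}_f$ are only assumed to be C-sets, so the polyhedral piecewise-quadratic partition you start from is itself an implicit additional assumption.)
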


\begin{proof}
See Appendix \ref{sec:proof_3Lip}.
\end{proof}
In comparison with the GLC in \eqref{glc}, the continuity bound in \eqref{newlc} is sharper over a C-set, enabling improved closed-loop performance. To elaborate, 
we combine Lemma \ref{lem:3Lip}  with  \eqref{equ:con1} to obtain
\begin{equation}\label{equ:lemma2-part}
\begin{aligned}
&v_\text{mpc}(Ax + Bu_\text{nn}(x)) - v_\text{mpc}(Ax + Bu_\text{mpc}(x))\\
&\leq c_0 \max\{\|Ax + Bu_\text{nn}(x)\|, \|Ax + Bu_\text{mpc}(x)\|\} \|B\|\delta_1\\
&\leq c_0 (\|A\| \|x\| + \|B\| (\|u_\text{mpc}(x)\| + \delta_1)) \|B\| \delta_1\\
&\leq c_0 (\|A\| \|x\| + \|B\| (L_u\|x\|+ \delta_1)) \|B\|\delta_1\\
& = c_4 \|x\| \delta_1 + c_5 \delta_1^2,
\end{aligned}
\end{equation}
where $c_4 = c_0 \left(\|A\| + L_u\|B\|\right)\|B\|$, $c_5 = c_0 \|B\|^2$ and $L_u$ is the GLC of $u_\text{mpc}(\cdot)$.  Together with \eqref{equ:exp_stability1-3}, we obtain
\begin{equation} \label{newlyapunova}
\begin{aligned}
v_\text{mpc}(Ax + Bu_\text{nn}(x)) &\le  (1 - c_3/c_2) v_\text{mpc}(x) \\
&~~~+ c_4 \|x\| \delta_1 + c_5 \delta_1^2.
\end{aligned}
\end{equation}

In contrast to \eqref{traditional_result}, the second term on the  of \eqref{newlyapunova} now depends linearly on $\|x\|$, while the third term exhibits a quadratic dependence on $\delta_1$. Then, one can rigorously establish that the closed-loop system converges to a smaller invariant set that reflects the intrinsic scale invariance of LTI systems.

\begin{lemma}\label{lem:convergence}
If the closed-loop system of \eqref{equ:closed_loop} satisfies \eqref{newlyapunova} with $x=x_0$,  it converges to an invariant set $\mathcal{V}(c_6 \delta_1^2)$, where $c_6$ is a positive constant depending on $c_0 \sim c_5$.
\end{lemma}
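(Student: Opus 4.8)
The plan is to collapse the vector dynamics into a scalar recurrence in the Lyapunov value $V_t := v_\text{mpc}(x_t)$ and then analyze its attracting fixed point. First I would observe that, by the forward invariance \eqref{equ:con3}, the closed-loop trajectory remains in $\mathcal{X}_\text{inv}$, so the descent inequality \eqref{newlyapunova} applies at \emph{every} step, not just at $x_0$; this yields $V_{t+1} \le \rho V_t + c_4 \delta_1 \|x_t\| + c_5 \delta_1^2$ with $\rho := 1 - c_3/c_2 \in [0,1)$. The crucial move is to eliminate the state norm using the lower bound $c_1\|x_t\|^2 \le V_t$ from \eqref{equ:exp_stability2}: substituting $\|x_t\| \le \sqrt{V_t/c_1}$ (valid since the coefficient $c_4\delta_1$ is positive) gives the autonomous scalar inequality $V_{t+1} \le g(V_t)$, where $g(V) := \rho V + (c_4/\sqrt{c_1})\,\delta_1 \sqrt{V} + c_5 \delta_1^2$. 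Note that the state-linear perturbation term $c_4\|x\|\delta_1$ becomes a $\sqrt{V}$ term, which is precisely what will force the residual level to scale as $\delta_1^2$ rather than $\delta_1$; had we used the global Lipschitz bound \eqref{traditional_result}, the perturbation would be a constant and the fixed point would be $\mathcal{O}(\delta_1)$.

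Next I would study $g$ on $[0,\infty)$. It is increasing and concave, so $h(V) := g(V) - V$ is concave with $h(0) = c_5\delta_1^2 > 0$ and $h(V) \to -\infty$ as $V \to \infty$; hence $h$ has a unique positive root $V^*$, with $g(V) > V$ for $V < V^*$ and $g(V) < V$ for $V > V^*$. Writing the fixed-point equation as a quadratic in $w := \sqrt{V^*}$, namely $(1-\rho)w^2 - (c_4/\sqrt{c_1})\delta_1 w - c_5\delta_1^2 = 0$, shows that $w$ scales linearly in $\delta_1$, so $V^* = c_6\delta_1^2$ for an explicit constant $c_6$ depending only on $\rho, c_1, c_4, c_5$ (hence on $c_0$–$c_5$). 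This identifies $\mathcal{V}(c_6\delta_1^2)$ as the candidate invariant level set.

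To conclude, I would run a comparison argument against the exact scalar recursion $\tilde V_{t+1} = g(\tilde V_t)$, $\tilde V_0 = V_0$. Monotonicity of $g$ together with $V_{t+1} \le g(V_t)$ gives $V_t \le \tilde V_t$ by induction, and $\tilde V_t$ converges monotonically to $V^*$: if $V_0 > V^*$ it decreases toward $V^*$, and if $V_0 \le V^*$ it increases and stays bounded by $V^*$, since $g$ maps each of $(V^*,\infty)$ and $(0,V^*)$ into itself in the direction of $V^*$. Thus $\limsup_t V_t \le V^* = c_6\delta_1^2$. Invariance of $\mathcal{V}(c_6\delta_1^2)$ then follows directly: if $V_t \le c_6\delta_1^2 = V^*$, monotonicity and $g(V^*)=V^*$ give $V_{t+1} \le g(V^*) = V^*$. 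Combining attraction with invariance establishes convergence to $\mathcal{V}(c_6\delta_1^2)$ (choosing $c_6$ marginally larger than $V^*/\delta_1^2$ upgrades this to finite-time entry).

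The main obstacle is the square-root nonlinearity: because $g$ is not a linear contraction, I cannot simply iterate a geometric bound as in \eqref{traditional_result}. The resolution is the concavity/monotonicity analysis, which pins down a \emph{unique} globally attracting fixed point and legitimizes the comparison step; the delicate points are verifying that this fixed point genuinely scales as $\delta_1^2$ (by solving the quadratic in $\sqrt{V^*}$) and handling the two initialization regimes $V_0 \gtrless V^*$ so that monotone convergence holds in both cases.
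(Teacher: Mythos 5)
Your proof is correct and follows essentially the same route as the paper: both reduce the dynamics to the scalar recursion $V_{t+1}\le (1-c_3/c_2) V_t + (c_4/\sqrt{c_1})\,\delta_1\sqrt{V_t}+c_5\delta_1^2$ via the lower bound $c_1\|x\|^2\le v_\text{mpc}(x)$, and both identify the unique fixed point $V^*=c_6\delta_1^2$ of the resulting increasing concave map as the limiting level set. The only cosmetic difference is in how convergence of the scalar iteration is established --- the paper majorizes the map by its tangent line at the fixed point to obtain a linear contraction, while you use monotonicity plus a comparison with the exact iteration in the two regimes $V_0\gtrless V^*$; both are valid.
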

\begin{proof}
See Appendix \ref{sec:proofL3}.
\end{proof}

Jointly with \eqref{sublevel}, the closed-loop system \eqref{equ:closed_loop} converges to an invariant set of size $\mathcal{O}(\delta_1)$, which reduces the size from $\mathcal{O}(\sqrt\delta_1)$ in \cite{fabiani2022reliably} and our previous work \cite{li2025relu}.

\subsection{Explicit bounds on the required ReLU network complexity}\label{sec:complexity}

Finally, we establish explicit complexity bounds on the width and depth for the ReLU NN-based MPC policy in terms of the error bound $\delta_1$, and the parameters of the MPC formulation in \eqref{equ:mpc}, specifically $n_x$, $n_u$, $D(\mathcal{U})$, and $d(\mathcal{U})$.

\begin{theorem}[Network Complexity for Problem \ref{problem1}]\label{thm:main}
Consider Problem \ref{problem1} with $\bar\delta = \min\{c_7d(\mathcal{U})\sqrt{\gamma}/(2 D(\mathcal{U})), d(\mathcal{U})\}$. If there is a pair of positive integers $(n_w',n_d')$ satisfying
\begin{equation}\label{equ:main}
\begin{aligned}
&n_w'^2n_d'^2\log_3(n_w'+2)\\ 
&\geq~\left(\frac{524\sqrt{n_x n_u}\,D(\mathcal{U})\,D(\mathcal{X}_\text{inv})\,{L}_u}{\delta_1\,d(\mathcal{U})}\right)^{n_x},
\end{aligned}
\end{equation}
there exists a ReLU NN-based MPC policy \eqref{equ:nn} with 
\begin{subequations}
\begin{align}
n_w&=n_u3^{n_x+3}\max\{n_x\lfloor n_w'^{1/n_x}\rfloor,n_w'+2\},\label{equ:width}\\
n_d&=11n_d'+19+2n_x,\label{equ:depth}
\end{align}
\end{subequations}
such that  \eqref{equ:con} holds and  for any $x_0\in\mathcal{X}_\text{inv}$, the closed-loop system \eqref{equ:closed_loop} converges to the invariant set $\mathcal{V}(c_6 \delta_1^2)$. Here, $c_0 \sim c_6$ are given in Lemma \ref{lem:convergence}, and $c_7$, depending on $c_0 \sim c_6$, is given in Appendix \ref{sec:proof_main}.
\end{theorem}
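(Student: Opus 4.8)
The plan is to assemble Theorem~\ref{thm:main} by combining the three ingredients developed in Sections~\ref{sec:projection}--\ref{sec:lip} with the uniform approximation result of Shen et al.\ \cite{shen2022optimal}. First I would fix the error budget: given $\delta_1 \in (0,\bar\delta]$, set the projection margin $\epsilon = \delta_1 d(\mathcal{U})/(2D(\mathcal{U}))$ so that Lemma~\ref{lem:projection} applies, and adopt the shifted target $\Pi_{\mathcal{U}'(\epsilon)}(u_\text{mpc}(\cdot))$ as the function to be approximated. The immediate goal then becomes realizing \eqref{tightedapprox}, namely approximating this target to within $\epsilon$ in the sup-norm over $\mathcal{X}_\text{inv}$. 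I would verify that the shifted target inherits Lipschitz continuity from $u_\text{mpc}(\cdot)$ (the projection onto the C-set $\mathcal{U}'(\epsilon)$ is nonexpansive, so its global Lipschitz constant is at most $L_u$), which is what feeds the continuity modulus $\omega_f$ in the approximation theorem.

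Next I would invoke \cite{shen2022optimal} on a normalized domain. The target is $\mathbb{R}^{n_x}\to\mathbb{R}^{n_u}$, so I would rescale $\mathcal{X}_\text{inv}$ into $[0,1]^{n_x}$ (introducing the factor $D(\mathcal{X}_\text{inv})$ through the input scaling) and apply the scalar construction componentwise across the $n_u$ outputs, which explains the leading $n_u$ and the $\sqrt{n_x n_u}$ factor. For a Lipschitz function the modulus is $\omega_f(t)\le L_u t$, so the uniform error bound of order $\sqrt{n_x}\,\omega_f((N^2L^2\ln N)^{-1/n_x})$ becomes an explicit expression in $n_w', n_d'$. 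Setting this bound $\le \epsilon$ and substituting $\epsilon = \delta_1 d(\mathcal{U})/(2D(\mathcal{U}))$ is precisely the inequality \eqref{equ:main}; the constant $524$ and the base-$3$ logarithm are the concrete big-$\mathcal{O}$ constants from their VC-dimension-tight construction. The width and depth bookkeeping in \eqref{equ:width}--\eqref{equ:depth} then records the cost of the normalization affine layers ($2n_x$ extra depth), the componentwise parallelization ($n_u$ width multiplier and the $3^{n_x+3}$ factor from the partition-of-unity gadget), and the constants $11,19$ from the depth of the base construction.

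Finally I would close the loop. With \eqref{tightedapprox} established, Lemma~\ref{lem:projection} delivers \eqref{equ:con1} and \eqref{equ:con2}. The Lyapunov descent inequality \eqref{newlyapunova}, which is a direct consequence of the state-dependent continuity (Lemma~\ref{lem:3Lip}) combined with \eqref{equ:con1}, holds on $\mathcal{X}_\text{inv}$; Lemma~\ref{lem:convergence} then certifies both forward invariance of $\mathcal{X}_\text{inv}=\mathcal{V}(\gamma)$ (this is where the second branch $\bar\delta \le d(\mathcal{U})$ and the factor $\sqrt\gamma$ in the first branch enter, ensuring the residual terms $c_4\|x\|\delta_1 + c_5\delta_1^2$ never push the state out of the sublevel set) and convergence to $\mathcal{V}(c_6\delta_1^2)$, giving \eqref{equ:con3}. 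The constant $c_7$ is chosen exactly so that $\bar\delta$ guarantees this invariance uniformly on $\mathcal{X}_\text{inv}$.

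I expect the main obstacle to be the careful accounting that makes \eqref{equ:main} an honest \emph{sufficient} condition rather than a merely asymptotic one: translating the big-$\mathcal{O}$ statement of \cite{shen2022optimal} into the explicit constant $524$ requires tracking their construction's dependence on dimension through the normalization, the $\sqrt{n_x}$ factor, and the componentwise extension, and verifying that the $\log_3(n_w'+2)$ term matches their $\ln N$ after the change of base. A secondary subtlety is confirming that forward invariance of $\mathcal{V}(\gamma)$ is not merely asymptotic but holds at every step under \eqref{newlyapunova}, which is what pins down $c_7$ and hence $\bar\delta$; this must be checked so that the state never leaves the region where Lemma~\ref{lem:3Lip} and the stability estimates \eqref{equ:exp_stability1-3} are valid.
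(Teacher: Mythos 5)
Your proposal follows essentially the same route as the paper's proof: choose the tightening margin $\epsilon=\delta_1 d(\mathcal{U})/(2D(\mathcal{U}))$ so Lemma~\ref{lem:projection} applies, approximate the projected target componentwise with the construction of \cite{shen2022optimal} on a rescaled domain (which is exactly where the paper's constant $524 = 2\cdot 131\cdot 2$ arises, from the $2D(\mathcal{X}_\text{inv})$ normalization and the $\sqrt{n_u}$ norm conversion, rather than from Shen et al.'s construction alone), and close the loop via Lemmas~\ref{lem:3Lip} and~\ref{lem:convergence}. The one small misattribution is the role of the branch $\bar\delta\le d(\mathcal{U})$: in the paper it is required so that the tightened set $\mathcal{U}'(\epsilon)$ remains a nonempty C-set (Lemma~\ref{lem:3Lip1}), while forward invariance of $\mathcal{V}(\gamma)$ is secured entirely by the $c_7\sqrt{\gamma}$ branch via the condition $a(\gamma)<\gamma$.
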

\begin{proof}
See Appendix \ref{sec:proof_main}.
\end{proof}

Theorem \ref{thm:main} admits multiple feasible pairs of $(n_w',n_d')$. For instance, setting the right-hand side (RHS) of \eqref{equ:main} to $100$ yields a simplified inequality $n_w'^2n_d'^2\log_3(n_w'+2)\ge 100$. The corresponding feasible region is illustrated by the green-shaded area in Fig. \ref{fig:NL}, where each black point corresponds to a valid pair of $(n_w', n_d')$.

\begin{figure}[tbp]
\centering
\includegraphics[width=1\linewidth]{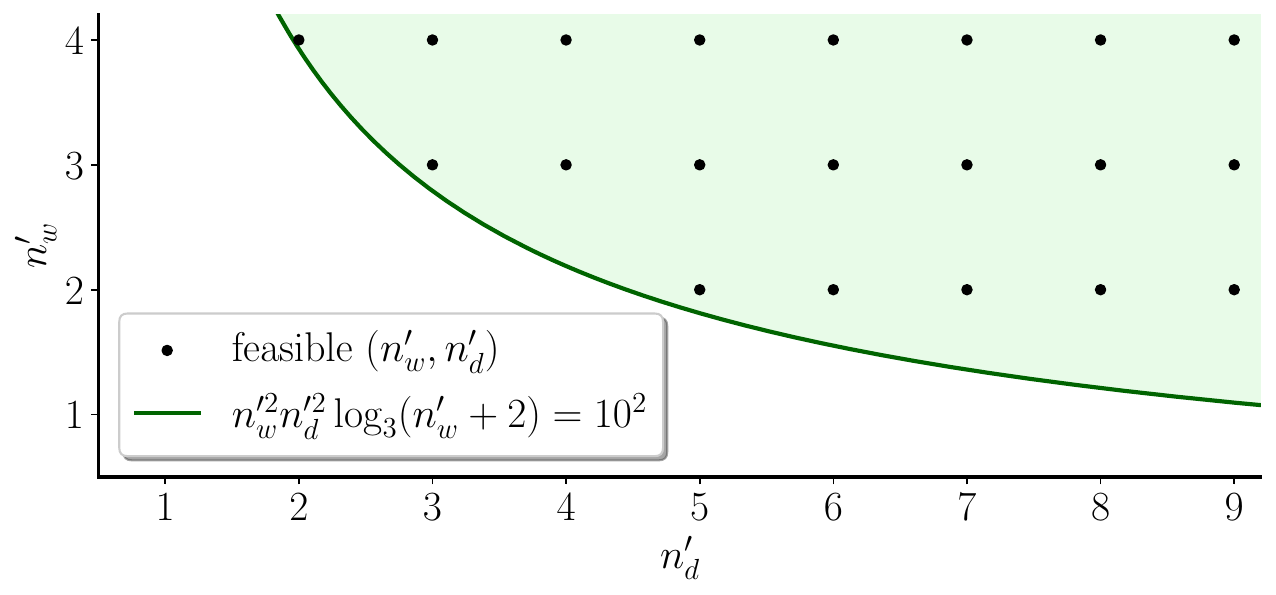}
\vspace{-2em}
\caption{Feasible combinations of $n_w'$ and $n_d'$ satisfy \eqref{equ:main}.}
\label{fig:NL}
\vspace{-1em}
\end{figure}

A comparative analysis between the network complexity $\mathcal{O}(n_w^2n_d^2\log_3 n_w)$ in \eqref{equ:main} and the computational complexity $\mathcal{O}(n_w^2n_d)$ of \eqref{equ:nn} reveals that increasing the depth $n_d$, rather than the width $n_w$, is more effective for implementing the ReLU NN-based MPC policy. This observation will be substantiated by numerical experiments in Section~\ref{sec:simu}.

Theorem \ref{thm:main} characterizes the required complexity of ReLU NN-based MPC policies in terms of the GLC of the MPC policy $u_\text{mpc}(\cdot)$, error bound $\delta_1$, the geometry of the constraint set $\mathcal{U}$, and the $\gamma$-sublevel set $\mathcal{X}_\text{inv}$. Based on this result, we provide the following guidelines for designing ReLU NN-based MPC policies:
\begin{itemize}
    \item \textbf{Depth over width for expressive efficiency:} Deep ReLU networks can outperform shallow architectures, as evidenced by the scaling of the error bound in \eqref{equ:main}: it diminishes as $\mathcal{O}\big((n_d^2 \log n_d)^{-1/n_x}\big)$ with depth $n_d$, but only as $\mathcal{O}(n_w^{-2/n_x})$ with width $n_w$.
    \item \textbf{Tighter policy smoothness enables lower complexity:} The smaller the GLC of the MPC policy $u_\text{mpc}(\cdot)$, the lower the complexity of the ReLU network approximation. It is worth mentioning that finding the minimum GLC is NP-hard and is a subject of our ongoing research. 
\item \textbf{Complexity is geometry-dependent:} The complexity of the ReLU NN-based policy is governed by the geometry of the constraint set. This aligns with the intuition that approximation difficulty stems from shape, not merely from the number of constraints.
\end{itemize}

\section{Network Complexity for the ReLU NN-based MPC Policy in Solving Problem \ref{problem2}}\label{sec:pro2}
In this section, we enhance the closed-loop convergence to a prescribed invariant set through the design of a non-uniform error bound in \eqref{equ:2con1}. To solve Problem \ref{problem2}, we propose a state-aware scaling mechanism for the ReLU network  to achieve this bound and then explicitly quantify the network  complexity. This allows us to improve convergence guarantees while simultaneously enabling a reduced complexity for the ReLU network.

\subsection{The motivation of using non-uniform error bound}\label{sec:motivation}
To ensure convergence of the closed-loop system \eqref{equ:closed_loop} to a prescribed invariant set \(\mathcal{X}_\text{inv}'\), Theorem~\ref{thm:main} requires choosing \(\delta_1\) such that \(\mathcal{V}(c_6 \delta_1^2) \subseteq \mathcal{X}_\text{inv}'\). If \(\mathcal{X}_\text{inv}'\) is small, \(\delta_1\) must be chosen small, which, according to \eqref{equ:main}, may lead to excessively conservative complexity for the ReLU NN-based MPC policy \eqref{equ:nn}. 

Different from the uniform error bound in \eqref{equ:con1}, larger approximation errors can be tolerated when the system state is relatively far from the origin. If we replace the constant \(\delta_1\) on the RHS of \eqref{newlyapunova} with a state-proportional form \(c_8 \|x\|\), then
$$
\begin{aligned}
& v_\text{mpc}(Ax + Bu_\text{nn}(x)) \\
& \leq (1 - c_3/c_2) v_\text{mpc}(x) + (c_4 c_8+ c_5 c_8^2)\|x\|^2\\
&\le \left(1 - \left(\frac{c_3}{c_2}-\frac{c_4c_8+c_5c_8^2}{c_1}\right)\right)v_\text{mpc}(x)
\end{aligned}
$$
where the last inequality uses the lower bound in \eqref{equ:exp_stability2}. If $c_8$ is chosen to satisfy
\begin{equation}\label{equ:stability_condition}
c_4 c_8 + c_5 c_8^2 < {c_1 c_3}/{c_2},
\end{equation}
 the closed-loop system \eqref{equ:closed_loop} exponentially converges to the exact origin, and hence to any C-set. 
 
 This closed-loop convergence is guaranteed provided that the ReLU NN-based MPC policy \eqref{equ:nn} meets the following relative-error bound:
 \begin{equation}\label{equ:relative_error0}
\| u_\text{nn}(x) - u_\text{mpc}(x) \| \le c_8 \|x\|.
\end{equation}

Clearly, the approximation error of the ReLU network in \eqref{equ:relative_error0} must decay to zero as $x$ approaches the origin. This is challenging to achieve without using an explicit form of the piecewise linear function $u_\text{mpc}(x)$. For states far from the origin, the permissible error bound under \eqref{equ:relative_error0} could become large, potentially violating the recursive feasibility condition $u_\text{nn}(x)\in\mathcal{U}$. To resolve these issues and stabilize the constrained LTI system \eqref{equ:system} over the C-set $\mathcal{X}_\text{inv}$, we respectively introduce constant lower and upper bounds to regulate the transition into and out of the error bound in \eqref{equ:relative_error0}. This results in the following non-uniform error bound:
\begin{equation}\label{equ:relative_error}
\delta_2(x)=
\begin{dcases}
\underline\delta, & \text{if } \|x\| \leq \underline\delta/c_8,\\
c_8 \|x\|, & \text{if } \underline\delta/c_8 < \|x\| < \bar\delta/c_8,\\
\bar\delta, & \text{if } \|x\| \geq \bar\delta/c_8~\text{and}~ x\in \mathcal{X}_\text{inv} 
\end{dcases}  
\end{equation}
where $\underline{\delta}$ and $\bar{\delta}$ are to be designed. Interestingly, the above non-uniform error bound resembles the error profile of a truncated logarithmic quantizer \cite{you2011attainability}, but contrasts sharply with the uniform bound in \eqref{equ:con1} and, to our knowledge, introduces a new perspective to NN approximation theory. To realize such an error bound, we scale both the input and output of the ReLU network, which is analogous to the zoom-in/zoom-out mechanism in quantization \cite{you2011attainability}. Given that a ReLU network is typically far more complex than a quantizer, it demands a sophisticated, state-aware scaling mechanism, which is the focus of the next subsection.

\subsection{State-aware scaling for non-uniform error bounds}\label{sec:scaling}

Drawing inspiration from the zoom-in/zoom-out quantization in \cite{you2011attainability}, we propose a pair of state-aware input/output scaling functions to enable a static ReLU network (with scaled inputs and outputs) to achieve non-uniform error bounds. 

Our state-aware scaling for the ReLU network is summarized in Fig.~\ref{fig:scaling}. First, we scale the original MPC policy $u_{\text{mpc}}(\cdot)$, where the scaled version $\widetilde{u}_{\text{mpc}}(\cdot)$ acts as the approximation target for the ReLU network. Then, the output of the ReLU network $\widetilde{u}_\text{nn}(\cdot)$ is scaled back, which forms our ReLU NN-based MPC policy. Through the design of state-aware scaling functions $(T, \beta)$, the approximation error between the two policies can be bounded in the form of \eqref{equ:relative_error}. We provide details below.

\begin{figure}[tbp]
    \centering
    \includegraphics[width=0.8\linewidth]{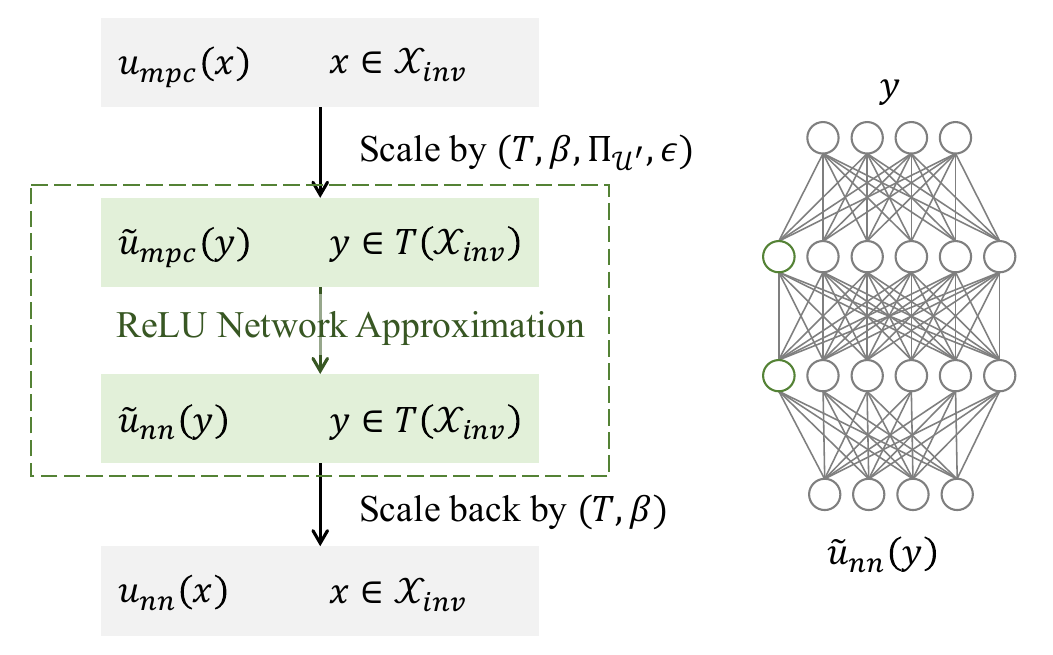}
    \caption{The non-uniform scaling approach (The definitions of $\widetilde{u}_\text{mpc}(\cdot)$ and $\widetilde{u}_\text{nn}(\cdot)$ are given in \eqref{equ:trans_recover_first} and \eqref{equ:transformation}, respectively.)}
    \label{fig:scaling}
\end{figure}

\subsubsection{Scaling back the output of ReLU network}
To achieve non-uniform error bounds in \eqref{equ:relative_error}, we scale back the output of the ReLU network via
\begin{equation*}
\beta(x) = \delta_2(x)/\bar\delta
\end{equation*}
where $\bar\delta$ is the upper bound in Theorem \ref{thm:main}. 
That is, the ReLU NN-based MPC policy is given in the form of 
\begin{equation}\label{equ:trans_recover_first}
u_\text{nn}(x) = \widetilde{u}_\text{nn}(T(x)) \beta(x)
\end{equation}
where $T(\cdot): \mathbb{R}^{n_x}\rightarrow \mathbb{R}^{n_x}$ is an invertible function to scale the input of the ReLU network.   

If the approximation error of the ReLU network satisfies \begin{equation}
\|\widetilde{u}_\text{nn}(y) - \widetilde{u}_\text{mpc}(y)\| \le \bar\delta, ~y=T(x)\label{allowerror}
\end{equation} then the NN-based MPC policy satisfies
\begin{align}
\|u_\text{nn}(x) - {u}_\text{mpc}(x)\|&\le \|  \widetilde{u}_\text{nn}(y) \beta(x) -\widetilde{u}_\text{mpc}(y) \beta(x)  \|\label{scaledinput}
\\
&\le \beta(x) \bar\delta.\notag
\end{align}

Note that $\widetilde{u}_\text{nn}(y)$ is the actual output of the ReLU network and the first inequality can be satisfied via state-aware scaling.   

\subsubsection{Scaling the input of ReLU network} To achieve \eqref{scaledinput}, a simple approach is to set the approximation target of the ReLU network as
\begin{equation}\label{scaling}
\widetilde{u}_\text{mpc}(y)= {u}_\text{mpc}(x)/\beta(x)~\text{and}~ x=T^{-1}(y).
\end{equation} However, since $\beta(x)$ is small near the origin, this will significantly increase the GLC of $\widetilde{u}_\text{mpc}(\cdot)$ over ${u}_\text{mpc}(\cdot)$ (c.f. \eqref{glc}), and thus increase the network complexity required to achieve the same level of approximation error bound \cite{shen2019deep}. To this end, a scaling function $T(x)$ is essential for scaling the input and is designed as follows:
\begin{equation}
T(x)=x\int_0^{1} \dfrac{1 + \mathbbm{1}(\underline\delta/c_8\le s\|x\| \le\bar\delta/c_8)}{\beta(s\|x\|)}\mathrm{d}s.
\end{equation}
In fact, it admits an explicit piecewise smooth expression and is illustrated in Fig. \ref{fig:double_yaxis}:
\begin{equation}
T(x) =
\begin{dcases}
    \frac{\bar\delta}{\underline\delta}x,
    & \text{if } \|x\| \le \frac{\underline\delta}{c_8}, \\
    \dfrac{\bar\delta}{c_8}\left(1+2\ln\left(\dfrac{c_8\|x\|}{\underline\delta}\right)\right) \dfrac{x}{\|x\|}, 
    & \text{if } \frac{\underline\delta}{c_8} < \|x\| < \frac{\bar\delta}{c_8}, \\
    x + \dfrac{2\bar\delta}{c_8} \ln \left(\dfrac{\bar\delta}{\underline\delta}\right) \dfrac{x}{\|x\|}, 
    & \text{if } \|x\| \ge \frac{\bar\delta}{c_8}.
\end{dcases}
\end{equation}

Clearly, we do not consider the hard constraint satisfaction problem in \eqref{scaling}. Similar to Section \ref{sec:projection}, we incorporate the projection idea onto a tightened set $\mathcal{U}'(\epsilon)$ in \eqref{tightset}  with a state-aware tightening parameter $\epsilon\cdot \beta(x)$. 
 
Overall, the following scaled and projected value of the MPC policy acts as the actual approximation target and input of the ReLU network:
 \begin{align}\label{equ:transformation}
 \widetilde{u}_\text{mpc}(y) &= \frac{1}{\beta(x)}\Pi_{\mathcal{U}'(\epsilon\beta(x))}(u_\text{mpc}(x)),~\text{and}~ y=T(x).
\end{align}

Under the above design, we have the following results.
\begin{lemma}\label{lem:3Lip6} 
The input scaling function $T(\cdot)$ is invertible and the GLC of 
$\widetilde{u}_\text{mpc}(\cdot)$ over $T(\mathcal{X}_\text{inv})$ satisfies that
\begin{equation}\label{equ:3Lip8}
L_{\widetilde{u}} \le L_u.
\end{equation}
\end{lemma}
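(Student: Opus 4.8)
The plan is to treat $T$ as a purely \emph{radial} map and reduce the Lipschitz claim to a pointwise inequality relating the Jacobians of the scaled/projected target $\widetilde{u}_\text{mpc}$ and of $T$. For invertibility, I would first observe that $T(x)=g(\|x\|)\,x/\|x\|$, where the scalar radial profile $g(r)=\int_0^r \big(1+\mathbbm{1}(\underline\delta/c_8\le\tau\le\bar\delta/c_8)\big)/\beta(\tau)\,\mathrm{d}\tau$ is obtained from the integral definition by the substitution $\tau=s\|x\|$. Since the integrand is strictly positive and locally bounded, $g$ is continuous and strictly increasing with $g(0)=0$ and $g(r)\to\infty$, hence a bijection of $[0,\infty)$ onto itself; because $T$ preserves the direction $x/\|x\|$, this makes $T$ a homeomorphism of $\mathbb{R}^{n_x}$. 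I would also verify continuity of $g$ and $g'$ across the breakpoints $r=\underline\delta/c_8$ and $r=\bar\delta/c_8$ to confirm the piecewise closed form stated in the text.

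Writing $\Phi(x)=\Pi_{\mathcal{U}'(\epsilon\beta(x))}(u_\text{mpc}(x))/\beta(x)$ so that $\widetilde{u}_\text{mpc}=\Phi\circ T^{-1}$, I would bound $L_{\widetilde{u}}$ by $\sup_{x}\|D\Phi(x)\,(DT(x))^{-1}\|$ and integrate this pointwise estimate along paths in the star-shaped image $T(\mathcal{X}_\text{inv})$ (both $u_\text{mpc}$ and the projection are piecewise affine, so the composite is locally Lipschitz and differentiable a.e.). The structural fact driving everything is the radial/tangential eigendecomposition $DT=g'(r)P+(g(r)/r)P_\perp$ with $P=xx^\top/\|x\|^2$ and $P_\perp=I-P$; consequently the pointwise claim $\|D\widetilde{u}_\text{mpc}\|\le L_u$ is equivalent to $\|D\Phi(x)\,v\|\le L_u\|DT(x)\,v\|$ for every $v$. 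Decomposing $v=v_\parallel+v_\perp$ with $v_\parallel=Pv$, the radial response reduces to the one-dimensional profile $\phi(t)=u_\text{mpc}(t\hat x)/\beta(t)$ (with $\hat x=x/\|x\|$), for which a direct computation gives $\|\phi'(r)\|\le L_u\,g'(r)$ \emph{precisely because} of the factor $1+\mathbbm{1}$: in region $2$ the extra factor of $2$ exactly absorbs both the $1/\beta$ amplification of $Du_\text{mpc}$ and the $\beta'/\beta^2$ term from differentiating $1/\beta$. The tangential response obeys $\|D\Phi\,v_\perp\|\le(L_u/\beta)\|v_\perp\|$, which is dominated by $L_u(g/r)\|v_\perp\|$ exactly when $g(r)\ge r/\beta(r)$, i.e. when $\ell:=1+2\ln(c_8\|x\|/\underline\delta)\ge1$; this holds throughout region $2$. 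In regions $1$ and $3$ the cancellation is cleaner (exact in region $1$, and $\|(DT)^{-1}\|\le1$ in region $3$).

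The main obstacle is the coupling between the radial and tangential components for a \emph{general} direction $v$. Using the local affine form $u_\text{mpc}(x)=Kx+b$ with $\|K\|\le L_u$, the region-$2$ inequality collapses, after multiplying through by $\beta^2$ and writing $\hat b=b/\|x\|$, $w=Kv_\perp$, and $\alpha=\hat x^\top v$, to the clean requirement
\[
\|\alpha\hat b-w\|^2\le L_u^2\big(4\alpha^2+\ell^2\|v_\perp\|^2\big),
\]
so the cross term $-2\alpha\,\hat b^\top w$ must be \emph{controlled} rather than discarded. I would bound it using $\|\hat b\|\le 2L_u$ (which follows from $\|u_\text{mpc}(x)\|\le L_u\|x\|$, $u_\text{mpc}(0)=0$, and $\|Kx\|\le L_u\|x\|$), together with $\|w\|\le L_u\|v_\perp\|$ and the slack created by $g'=2/\beta$ and by $\ell\ge1$. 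The delicate regime is near the inner breakpoint $r=\underline\delta/c_8$, where $\ell\to1$ makes the tangential budget tightest; I would handle it by exploiting that the radial budget $4\alpha^2$ retains slack there, and that $\underline\delta$ is a free design parameter that can be chosen small enough to keep this regime inside the unconstrained LQR region, where $b=0$ and the inequality holds trivially.

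Finally, the state-dependent projection onto the tightened set $\mathcal{U}'(\epsilon\beta(x))$ is incorporated through Lemma~\ref{lem:3Lip1}: since $\mathcal{U}'(\cdot)$ is a C-set and the projection is $1$-Lipschitz in its argument with controlled dependence on the tightening radius, composing it with $u_\text{mpc}$ and dividing by $\beta$ does not increase the effective Lipschitz constant beyond that of $u_\text{mpc}/\beta$, so the bound $L_{\widetilde{u}}\le L_u$ is preserved. Combining the pointwise estimate with integration along paths in $T(\mathcal{X}_\text{inv})$ then yields the claim.
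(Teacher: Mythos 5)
Your reduction to the pointwise inequality $\|\alpha\hat b-w\|^2\le L_u^2\bigl(4\alpha^2+\ell^2\|v_\perp\|^2\bigr)$ is a correct formalization of what a Jacobian-based proof must establish in region $2$, but the argument you offer for it does not close, and the gap is real rather than cosmetic. With only $\|w\|\le L_u\|v_\perp\|$ and $\|\hat b\|\le 2L_u$ available, the left side can reach $L_u^2\bigl(2|\alpha|+\|v_\perp\|\bigr)^2=L_u^2\bigl(4\alpha^2+4|\alpha|\,\|v_\perp\|+\|v_\perp\|^2\bigr)$, so you need $4|\alpha|\,\|v_\perp\|\le(\ell^2-1)\|v_\perp\|^2$; as $\ell\downarrow 1$ near $r=\underline\delta/c_8$ this fails for every direction with $\alpha\neq 0$ and $v_\perp\neq 0$. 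Neither escape you propose repairs this: the radial budget has \emph{no} slack in the worst case (when $\hat b$ is anti-aligned with $w$ and $\|\hat b\|=2L_u$, the purely radial bound is already an equality), and shrinking $\underline\delta$ so that the inner breakpoint falls in the unconstrained region is not available --- $\underline\delta$ is dictated by the target set $\mathcal{X}_\text{inv}'$ in Theorem~\ref{thm:main2}, the lemma is claimed for the given $T$, and the problematic set is the whole annulus where $\ell^2-1$ is small, not just the breakpoint sphere. You would have to either show the cross term $-2\alpha\,\hat b^\top w$ is favorable using structure of $u_\text{mpc}$ you have not established, or abandon the direction-by-direction accounting.

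The paper takes a genuinely different route that is worth contrasting. It never splits into radial and tangential components: it bounds the finite difference of $P(x)/\beta(\|x\|)$ by the direction-\emph{independent} quantity $(2L_u+\mathcal{O}(\epsilon))\|x_1-x_2\|/\beta$ via the triangle inequality, invokes Lemma~\ref{lem:3Lip4} to lower-bound $\|T(x_1)-T(x_2)\|$ by $2\|x_1-x_2\|/\beta-\mathcal{O}(\epsilon^2)$ in region $2$ (this is exactly where the factor $1+\mathbbm{1}$ is spent), and then chains intermediate points and sends $\epsilon\to 0$; note also that the state-dependent tightening radius $\epsilon\beta(x)$ contributes an additional $\mathcal{O}(\epsilon)$ term there rather than being free, contrary to the claim in your last paragraph. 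Your eigendecomposition $DT=(2/\beta)P+(\ell/\beta)P_\perp$ in fact shows that the tangential expansion of $T$ is only $\ell/\beta$ with $\ell\downarrow 1$ at the inner breakpoint, so the difficulty you ran into is intrinsic to the geometry there and is precisely the point that any complete proof --- including a direction-uniform lower bound on $\|T(x_1)-T(x_2)\|$ of the kind used in the paper --- must treat with care. As it stands, your proposal identifies the right quantities but does not prove the lemma.
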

\begin{proof}
See Appendix \ref{sec:proof_3Lip6}.
\end{proof}
\begin{lemma}\label{lem:transformation}
Let $\epsilon \le \bar\delta d(\mathcal{U})/(2D(\mathcal{U}))$. If $$\|\widetilde{u}_\text{nn}(y)-\widetilde{u}_\text{mpc}(y)\|\leq \epsilon, ~\forall y \in T(\mathcal{X}_\text{inv}),$$ then the ReLU NN-based MPC policy in \eqref{equ:trans_recover_first} satisfies 
\begin{equation}
\|u_\text{nn}(x)-u_\text{mpc}(x)\|\leq \delta_2(x) \quad \text{and} \quad u_\text{nn}(x)\in \mathcal{U}.
\end{equation}
\end{lemma}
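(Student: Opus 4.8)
The plan is to mirror the argument of Lemma \ref{lem:projection}, carrying the state-dependent output-scaling factor $\beta(x)$ through each step so that it cancels cleanly at the end. First I would record the two structural identities implied by \eqref{equ:trans_recover_first} and \eqref{equ:transformation}: writing $y=T(x)$, we have $u_\text{nn}(x)=\widetilde{u}_\text{nn}(y)\beta(x)$ and $\widetilde{u}_\text{mpc}(y)\beta(x)=\Pi_{\mathcal{U}'(\epsilon\beta(x))}(u_\text{mpc}(x))$. Inserting $\widetilde{u}_\text{mpc}(y)\beta(x)$ as an intermediate point and applying the triangle inequality gives
\[\|u_\text{nn}(x)-u_\text{mpc}(x)\|\le \beta(x)\|\widetilde{u}_\text{nn}(y)-\widetilde{u}_\text{mpc}(y)\|+\|\Pi_{\mathcal{U}'(\epsilon\beta(x))}(u_\text{mpc}(x))-u_\text{mpc}(x)\|,\]
which splits the total error into a network-approximation term and a projection term, exactly as in the uniform case but now weighted by $\beta(x)$.

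Next I would bound each term. The first term is at most $\beta(x)\epsilon$ directly from the hypothesis $\|\widetilde{u}_\text{nn}(y)-\widetilde{u}_\text{mpc}(y)\|\le\epsilon$. For the second term I would invoke the projection-distance bound of Lemma \ref{lem:3Lip1}, now applied at the reduced tightening level $\epsilon\beta(x)$ in place of $\epsilon$, to obtain $\|\Pi_{\mathcal{U}'(\epsilon\beta(x))}(u_\text{mpc}(x))-u_\text{mpc}(x)\|\le r(\mathcal{U},\epsilon\beta(x))\le \epsilon\beta(x)\,D(\mathcal{U})/d(\mathcal{U})$. Summing and factoring out $\beta(x)$ yields $\|u_\text{nn}(x)-u_\text{mpc}(x)\|\le \epsilon\beta(x)\bigl(1+D(\mathcal{U})/d(\mathcal{U})\bigr)$. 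The hypothesis $\epsilon\le\bar\delta\,d(\mathcal{U})/(2D(\mathcal{U}))$ together with $d(\mathcal{U})\le D(\mathcal{U})$ gives $\epsilon\bigl(1+D(\mathcal{U})/d(\mathcal{U})\bigr)\le\bar\delta/2+\bar\delta/2=\bar\delta$, so the right-hand side is at most $\beta(x)\bar\delta=\delta_2(x)$, which is the first claim.

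For the constraint $u_\text{nn}(x)\in\mathcal{U}$, I would argue as in the closing step of Lemma \ref{lem:projection}: the projected point $\Pi_{\mathcal{U}'(\epsilon\beta(x))}(u_\text{mpc}(x))$ lies in $\mathcal{U}'(\epsilon\beta(x))$, while the displacement $u_\text{nn}(x)-\Pi_{\mathcal{U}'(\epsilon\beta(x))}(u_\text{mpc}(x))=\beta(x)\bigl(\widetilde{u}_\text{nn}(y)-\widetilde{u}_\text{mpc}(y)\bigr)$ has norm at most $\epsilon\beta(x)$, hence lies in $\mathcal{B}(\epsilon\beta(x))$. The defining property of the tightened set in \eqref{tightset}, with tightening radius $\epsilon\beta(x)$, then immediately yields $u_\text{nn}(x)\in\mathcal{U}$.

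The step I expect to require the most care is the bookkeeping around the state-dependent tightening level $\epsilon\beta(x)$. I must verify that $\beta(x)\in(0,1]$ for every $x\in\mathcal{X}_\text{inv}$, which follows from $\underline\delta\le\delta_2(x)\le\bar\delta$ in \eqref{equ:relative_error}; this guarantees both that the division by $\beta(x)$ in \eqref{equ:transformation} is legitimate and that $\mathcal{U}'(\epsilon\beta(x))$ is a nonempty C-set, since $\epsilon\beta(x)\le\epsilon$ keeps it within the regime where Lemma \ref{lem:3Lip1} applies and makes the projection well-defined. Once $\beta(x)$ is confirmed to lie in $(0,1]$, every inequality used for the uniform bound in Lemma \ref{lem:projection} transfers verbatim to the $\beta(x)$-weighted setting, and the scaling factor cancels to deliver both conclusions.
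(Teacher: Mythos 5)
Your proof is correct and follows essentially the same route as the paper's: decompose via the scaled-and-projected target $\Pi_{\mathcal{U}'(\epsilon\beta(x))}(u_\text{mpc}(x))$, bound the network term by $\beta(x)\epsilon$ and the projection term by $r(\mathcal{U},\epsilon\beta(x))\le\epsilon\beta(x)D(\mathcal{U})/d(\mathcal{U})$, then conclude constraint satisfaction from the defining property of the tightened set. The only quibble is that the projection-distance bound you use is the content of Lemma~\ref{lem:3Lip2} rather than Lemma~\ref{lem:3Lip1}; otherwise your write-up is, if anything, slightly more explicit than the paper's (e.g., in verifying $\epsilon\bigl(1+D(\mathcal{U})/d(\mathcal{U})\bigr)\le\bar\delta$ and that $\beta(x)\in(0,1]$).
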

\begin{proof}
See Appendix \ref{sec:proof_transformation}.
\end{proof}

As $\delta_1\leq \bar\delta$, the allowable error bound of the ReLU network in Lemma \ref{lem:transformation} is typically larger than that of \eqref{tightedapprox}, and independent of the final invariant set \(\mathcal{X}_\text{inv}'\). 

\subsubsection{An illustrative example}
To illustrate the proposed scaling approach, consider a one-dimensional example where $u_\text{mpc}(x)=2\arcsin(\sin(10x))$, which is a triangle wave function with period $\pi/5$ and amplitude $2$ and clearly piecewise linear. We set the parameters to $\underline\delta=0.3$, $\bar\delta=1$, $\epsilon=0$, and $c_8=0.5$. Figure \ref{fig:double_yaxis} visualizes the scaling function $\beta(s)$ and the coordinate transformation $T(x)$. As designed, $T(x)$ stretches the state space near the origin to compensate for the small $\beta$. The effect is shown in Fig. \ref{fig:combined_plot}: near the origin, $u_\text{nn}(x)$ tracks $u_\text{mpc}(x)$ with high precision, whereas for states far from the origin, a larger deviation is permitted. This demonstrates our core principle of concentrating approximation resources where they are most critical for stability.

 \begin{figure}[tbp]
     \centering
     \includegraphics[width=1\linewidth]{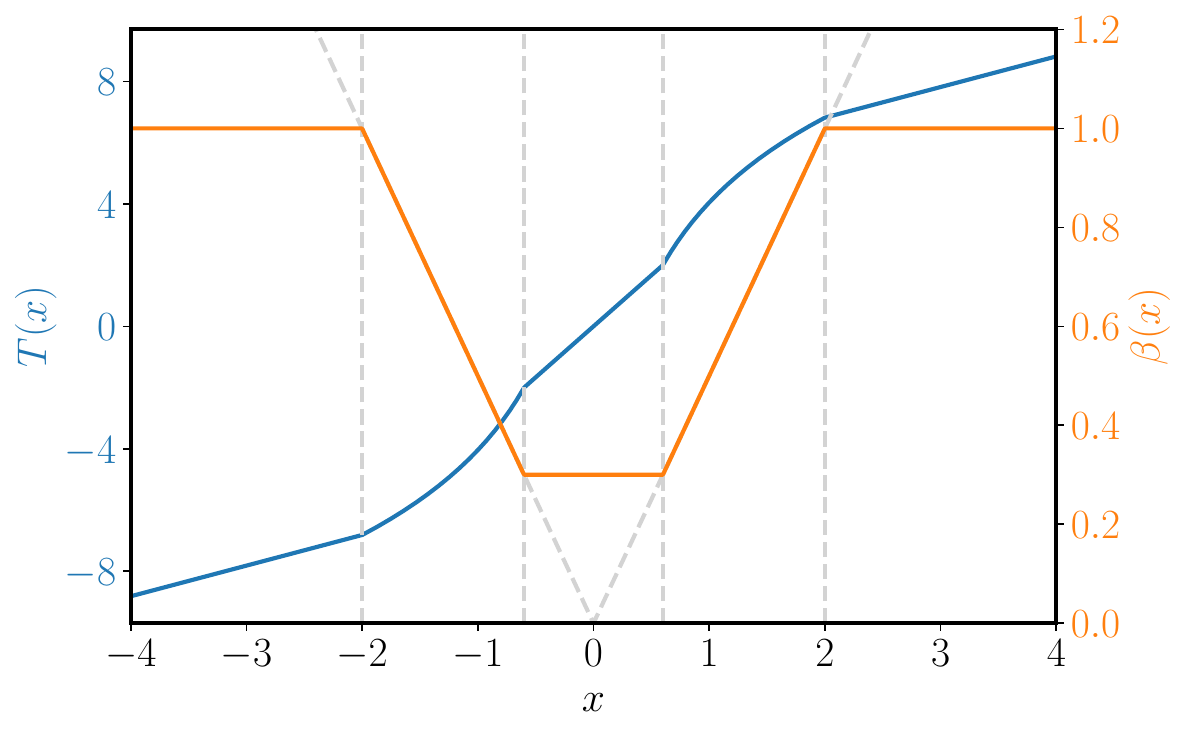}
     \caption{An example of $\beta(x)=\delta_2(x)/\bar\delta$ and the input scaling function $T(x)$.}
     \label{fig:double_yaxis}
 \end{figure}

 \begin{figure*}[t]
     \centering
     \includegraphics[width=\textwidth]{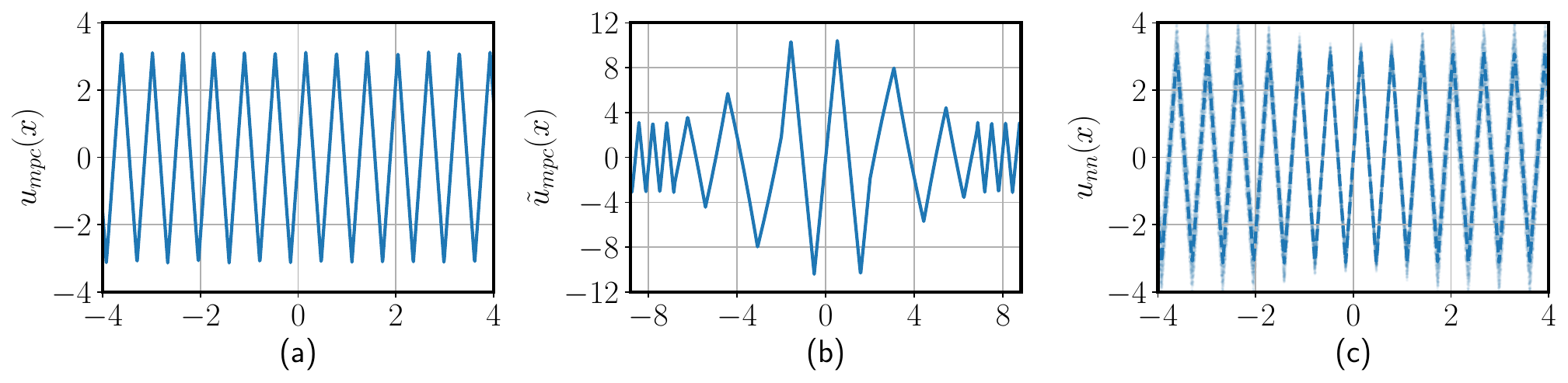}
     \caption{An example of $u_\text{mpc}(\cdot)$, $\widetilde{u}_\text{mpc}(\cdot)$ and $u_\text{nn}(\cdot)$. The shaded area in subfigure (c)  represents the permissible error bound $\delta_2(x)$.}
     \label{fig:combined_plot}
 \end{figure*}

\subsection{Network complexity for solving Problem \ref{problem2}}

Finally, we can establish explicit complexity bounds for the NN-based MPC policy presented in Fig. \ref{fig:scaling} in solving Problem \ref{problem2}.

\begin{theorem}[Network Complexity for Problem \ref{problem2}]\label{thm:scaling}\label{thm:stability_relative_error}\label{thm:main2}
Consider Problem \ref{problem2} with any specified invariant set $\mathcal{X}_\text{inv}'\subseteq\mathcal{X}_\text{inv}$. If there is a pair of positive integers $(n_w',n_d')$ satisfying
\begin{equation}\label{equ:2main}
\begin{aligned}
&n_w'^2n_d'^2\log_3(n_w'+2)\\[1mm]
&\geq~\left(\frac{524\sqrt{n_x n_u}\,D(\mathcal{U})\,D_2\,{L}_u}{\bar{\delta}~d(\mathcal{U})}\right)^{n_x},
\end{aligned}
\end{equation}
there exists a ReLU NN-based MPC policy of Fig. \ref{fig:scaling} with 
\begin{subequations}
\begin{align}
n_w&=n_u3^{n_x+3}\max\{n_x\lfloor n_w'^{1/n_x}\rfloor,n_w'+2\},\label{equ:2width}\\
n_d&=11n_d'+19+2n_x,\label{equ:2depth}
\end{align}
\end{subequations}
such that the non-uniform error bounds in \eqref{equ:relative_error} hold for \eqref{equ:2con}, and  for any $x_0\in\mathcal{X}_\text{inv}$,  the closed-loop system \eqref{equ:closed_loop}   converges to $\mathcal{X}_\text{inv}'$. Here, $c_0,c_1,\ldots,c_6$ are given in Theorem \ref{thm:main}, $c_8$ satisfies \eqref{equ:stability_condition}, $\underline\delta = \min\{\bar\delta, \sqrt{c_1/c_6}\cdot d(\mathcal{X}_\text{inv}')\}$, and $D_2= D(\mathcal{X}_\text{inv}) + {2\bar\delta}/{c_8} \ln\left({\bar\delta}/{\underline\delta}\right)$.
\end{theorem}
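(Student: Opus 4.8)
The plan is to transplant the construction of Theorem~\ref{thm:main} into the scaled coordinates induced by the pair $(T,\beta)$ and then to upgrade the resulting descent estimate into convergence to the prescribed set $\mathcal{X}_\text{inv}'$. The argument factors into three pieces: first, reduce the non-uniform bound \eqref{equ:relative_error} to a single \emph{uniform} approximation problem for the scaled target $\widetilde{u}_\text{mpc}(\cdot)$ of \eqref{equ:transformation} on the scaled domain $T(\mathcal{X}_\text{inv})$; second, invoke the ReLU construction of \cite{shen2022optimal}, exactly as in Theorem~\ref{thm:main}, to read off the complexity condition \eqref{equ:2main} and the width/depth formulas \eqref{equ:2width}--\eqref{equ:2depth}; and third, combine the relative-error descent of Section~\ref{sec:motivation} with the terminal behaviour of Lemma~\ref{lem:convergence} to conclude convergence to $\mathcal{X}_\text{inv}'$.

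For the second piece, the key observation is that the scaled problem has exactly the same shape as the one solved in Theorem~\ref{thm:main}, only with the substitutions $\mathcal{X}_\text{inv}\mapsto T(\mathcal{X}_\text{inv})$, $\delta_1\mapsto\bar\delta$, and $L_u\mapsto L_{\widetilde{u}}$. Lemma~\ref{lem:3Lip6} supplies $L_{\widetilde{u}}\le L_u$ together with the invertibility of $T$, while the outermost branch of $T(\cdot)$ gives $\|T(x)\|=\|x\|+(2\bar\delta/c_8)\ln(\bar\delta/\underline\delta)$, so the domain radius is $D(T(\mathcal{X}_\text{inv}))=D_2$. I would then take the tightening parameter at the maximal admissible value $\epsilon=\bar\delta\,d(\mathcal{U})/(2D(\mathcal{U}))$ permitted by Lemma~\ref{lem:transformation} and feed $(L_{\widetilde{u}},D_2,\epsilon,n_x,n_u)$ into the same approximation estimate used for \eqref{equ:main}; this reproduces \eqref{equ:2main} with the identical universal constant and leaves the maps \eqref{equ:2width}--\eqref{equ:2depth} unchanged, since those depend only on $n_x$ and $n_u$. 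Lemma~\ref{lem:transformation} then converts the scaled guarantee $\|\widetilde{u}_\text{nn}(y)-\widetilde{u}_\text{mpc}(y)\|\le\epsilon$ into the non-uniform bound \eqref{equ:2con1} and the hard constraint \eqref{equ:2con2} for $u_\text{nn}(\cdot)=\widetilde{u}_\text{nn}(T(\cdot))\beta(\cdot)$.

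For the third piece, writing $x^+:=Ax+Bu_\text{nn}(x)$, I would first fix $\underline\delta=\min\{\bar\delta,\sqrt{c_1/c_6}\,d(\mathcal{X}_\text{inv}')\}$ so that, via the lower bound in \eqref{equ:exp_stability2}, the candidate terminal set obeys $\mathcal{V}(c_6\underline\delta^2)\subseteq\mathcal{B}(\sqrt{c_6/c_1}\,\underline\delta)\subseteq\mathcal{B}(d(\mathcal{X}_\text{inv}'))\subseteq\mathcal{X}_\text{inv}'$. The crucial fact that lets a single Lyapunov argument cover the whole state space is that $\delta_2(x)\le c_8\|x\|$ holds throughout $\|x\|\ge\underline\delta/c_8$ --- including the far branch where $\delta_2(x)=\bar\delta\le c_8\|x\|$ --- so the relative-error computation of Section~\ref{sec:motivation}, under \eqref{equ:stability_condition}, yields a strict contraction $v_\text{mpc}(x^+)\le(1-\kappa)v_\text{mpc}(x)$ there for some $\kappa\in(0,1)$ determined by \eqref{equ:stability_condition}. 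Inside the residual ball $\|x\|\le\underline\delta/c_8$ the bound collapses to the uniform value $\underline\delta$, so applying Lemma~\ref{lem:convergence} with $\delta_1=\underline\delta$ drives the state into $\mathcal{V}(c_6\underline\delta^2)$. Stitching the two regimes gives convergence to $\mathcal{V}(c_6\underline\delta^2)\subseteq\mathcal{X}_\text{inv}'$, and since $v_\text{mpc}$ never exceeds $\gamma$ along the trajectory, the invariance \eqref{equ:2con3} follows.

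I expect the main obstacle to be the convergence analysis of the third piece rather than the complexity count: one must glue the contraction regime and the terminal regime into one trajectory-level statement while simultaneously certifying invariance of $\mathcal{X}_\text{inv}$ across the region boundaries $\|x\|=\underline\delta/c_8$ and $\|x\|=\bar\delta/c_8$, where $\delta_2(\cdot)$ switches form. By contrast, the bound \eqref{equ:2main} and the width/depth formulas follow almost verbatim from Theorem~\ref{thm:main} once the substitutions $D(\mathcal{X}_\text{inv})\mapsto D_2$, $\delta_1\mapsto\bar\delta$, and $L_u\mapsto L_{\widetilde{u}}\le L_u$ are in place, so the genuinely new work lies in verifying that the piecewise-defined $\delta_2(\cdot)$ is compatible with a contraction/terminal decomposition and that the choice of $\underline\delta$ tightly controls the resulting invariant set.
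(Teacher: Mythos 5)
Your proposal matches the paper's proof essentially step for step: the complexity bound is obtained by re-running the Theorem~\ref{thm:main} construction on the scaled target over $T(\mathcal{X}_\text{inv})$ with the substitutions $D(\mathcal{X}_\text{inv})\mapsto D_2$ (Lemma~\ref{lem:3Lip5}), $L_u\mapsto L_{\widetilde{u}}\le L_u$ (Lemma~\ref{lem:3Lip6}), and $\delta_1\mapsto\bar\delta$, and the convergence claim is established by the same two-stage argument (relative-error contraction for $\|x\|\ge\underline\delta/c_8$ via \eqref{equ:stability_condition}, then Lemma~\ref{lem:convergence} with $\delta_1=\underline\delta$ inside the ball, with the choice of $\underline\delta$ placing $\mathcal{V}(c_6\underline\delta^2)$ inside $\mathcal{X}_\text{inv}'$). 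The proposal is correct and takes the same route as the paper.
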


\begin{proof}
See Appendix \ref{sec:proof_thm2}.
\end{proof}

\begin{remark} The RHS of \eqref{equ:2main} is significantly smaller than that of \eqref{equ:main}, especially  when the desired invariant set $\mathcal{X}_\text{inv}'$ is very small. This confirms that the non-uniform approach reduces the required network complexity. In particular, the complexity bound in \eqref{equ:2main} scales logarithmically with the inverse of the target set radius, i.e., $D_2 \propto -\ln d(\mathcal{X}_\text{inv}')$, since $\underline\delta \propto d(\mathcal{X}_\text{inv}')$. In contrast, under the uniform error bound framework, the RHS of \eqref{equ:main} grows polynomially in the inverse radius, i.e., $\propto 1/d(\mathcal{X}_\text{inv}')^{n_x}$, which results in a stricter requirement on network complexity. 
\end{remark}

\section{Numerical Example}\label{sec:simu}
In this section, we provide a numerical example to illustrate the impact of the depth and width of the ReLU NN-based MPC policies. 

We adopt the oscillating masses model \cite{wang2009fast} with $n_x=12$ and $n_u=3$ as shown in Fig. \ref{fig:sys}, and the MPC \eqref{equ:mpc} with $Q=I_{12}$, $R=I_3$, $n_p=20$, and $P$ solved by the discrete algebraic Riccati equation \cite{kuvcera1972discrete}. The constraint sets are $\mathcal{X}=\{x\in\mathbb{R}^{12}:|x_i|\le 4, i=1,\ldots,12\}$ and $\mathcal{U}=\{u\in\mathbb{R}^3:|u_i|\le 0.5, i=1,2,3\}$.

\begin{figure}[tbp]
    \centering
    \includegraphics[width=1\linewidth]{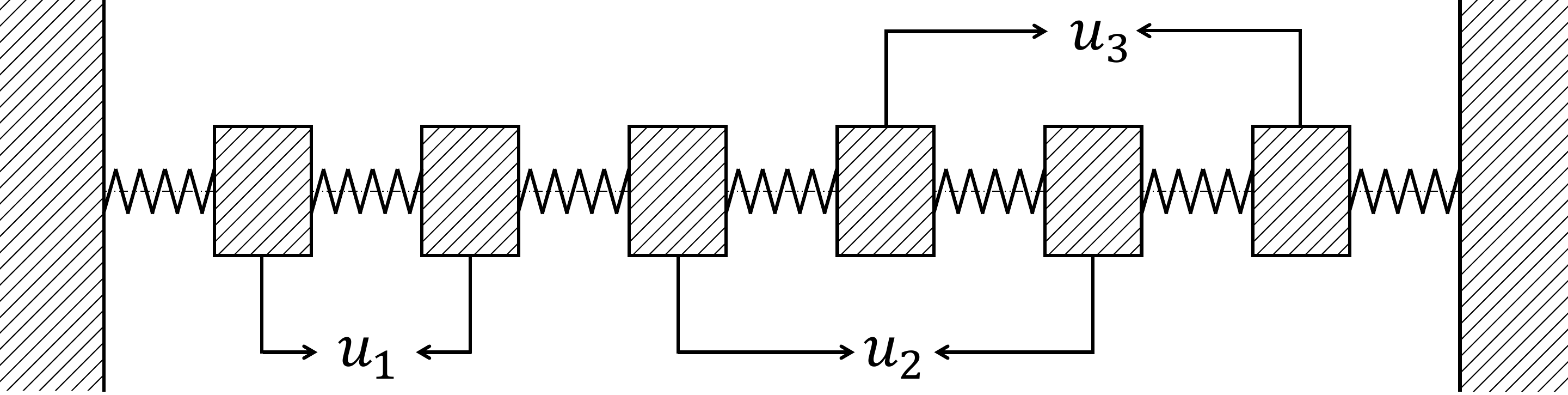}
    \vspace{-2em}
    \caption{Oscillating Masses System.}
    \label{fig:sys}
\end{figure}

We generate $5\times 10^5$ samples by randomly selecting the initial state vectors $x_0\in\mathcal{X}$ for training, and $5000$ samples for testing. We train the ReLU NNs with different $n_w$ and $n_d$, and a grid search is performed to find the best learning rates, batch sizes, dataset sizes, and learning rate decay rates. We repeat the training process 20 times to reduce the randomness. 

The program is executed using multiprocessing on a Linux workstation equipped with a 128-thread AMD EPYC 7742 CPU, 256 GB RAM, and an NVIDIA RTX 4090 GPU. Our code, which utilizes PyTorch \cite{paszke2019pytorch} for ReLU NN training and Gurobi \cite{gurobi} for MPC computation, is available at \url{https://github.com/lixc21/MPC-NN-Complexity}. The entire training process takes approximately 13 hours.

By comparing the mean squared error (MSE) in Fig. \ref{fig:mse_depth_width}, we find that the MSE decreases as the depth and width of the NN increase. For a fixed depth, increasing the width (from $n_w=8$ to $n_w=32$) consistently reduces the MSE. This indicates that wider networks can better approximate the MPC policy. When comparing different depths at the same width, a shallow network (e.g., $n_d=1$) has a considerably higher MSE than deeper ones. In particular, increasing $n_d$ from $1$ to $2$ leads to a significant drop in error. However, after a certain point (e.g., beyond $n_d=3$ or $n_d=4$), the improvement in MSE becomes less pronounced. This may be due to the training difficulty of deep networks.

The parameter count in Table \ref{tab:params} shows that the number of parameters increases quadratically with the width and linearly with the depth. The best choice in practice is not to arbitrarily increase the depth, but to consider the total computational cost and choose an appropriate depth and width.

\begin{figure}[tbp]
    \centering
    \includegraphics[width=1\linewidth]{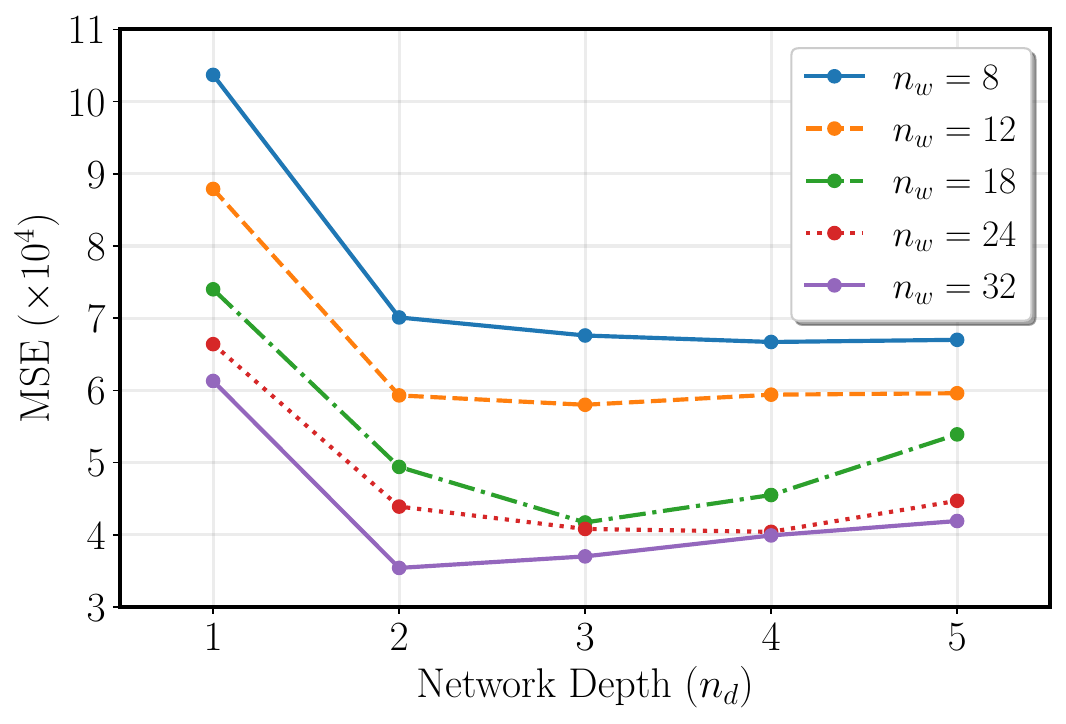}
    \vspace{-2em}
    \caption{MSE (scaled by $10^{4}$) of the NNs}
    \label{fig:mse_depth_width}
\end{figure}

\begin{table}[htbp]
\centering
\caption{Parameter Count of the NNs}
\label{tab:params}
\vspace{-0.5em}
\setlength{\tabcolsep}{5pt}
\renewcommand{\arraystretch}{1.1}
\begin{tabular}{@{}rccccc@{\hspace{3mm}}}
\toprule
 & $n_w=8$ & $n_w=12$ & $n_w=18$ & $n_w=24$ & $n_w=32$ \\ \midrule
\multicolumn{1}{l|}{$n_d=1$} & 33  & 49  & 73  & 97  & 129  \\
\multicolumn{1}{l|}{$n_d=2$} & 105 & 205 & 415 & 697 & 1185 \\
\multicolumn{1}{l|}{$n_d=3$} & 177 & 361 & 757 & 1297 & 2241 \\
\multicolumn{1}{l|}{$n_d=4$} & 249 & 517 & 1099& 1897 & 3297 \\
\multicolumn{1}{l|}{$n_d=5$} & 321 & 673 & 1441& 2497 & 4353 \\
\multicolumn{1}{l|}{$n_d=6$} & 393 & 829 & 1783& 3097 & 5409 \\ \bottomrule
\end{tabular}
\end{table}

We also conducted experiments on the state-aware scaling method, using the same model (hidden width 16 and hidden depth 2) and parameter settings as the absolute error method, with $\bar\delta = 1$, $c_8 = 0.5$, and varying $\underline\delta$. We randomly selected 1000 points and simulated for 60 steps to obtain the closed-loop cost. As shown in Fig. \ref{fig:delta_convergence_plot}, as $\underline\delta$ decreases, both the test loss and the average cost increase, but the average $\|x\|$ over the last 12 steps decreases. This suggests that the non-uniform error method may be more challenging to approximate but can converge to a smaller region. By adjusting $\underline\delta$, a trade-off can be achieved between convergence speed and the size of the convergence region.

\begin{figure}[tbp]
    \centering
    \includegraphics[width=1\linewidth]{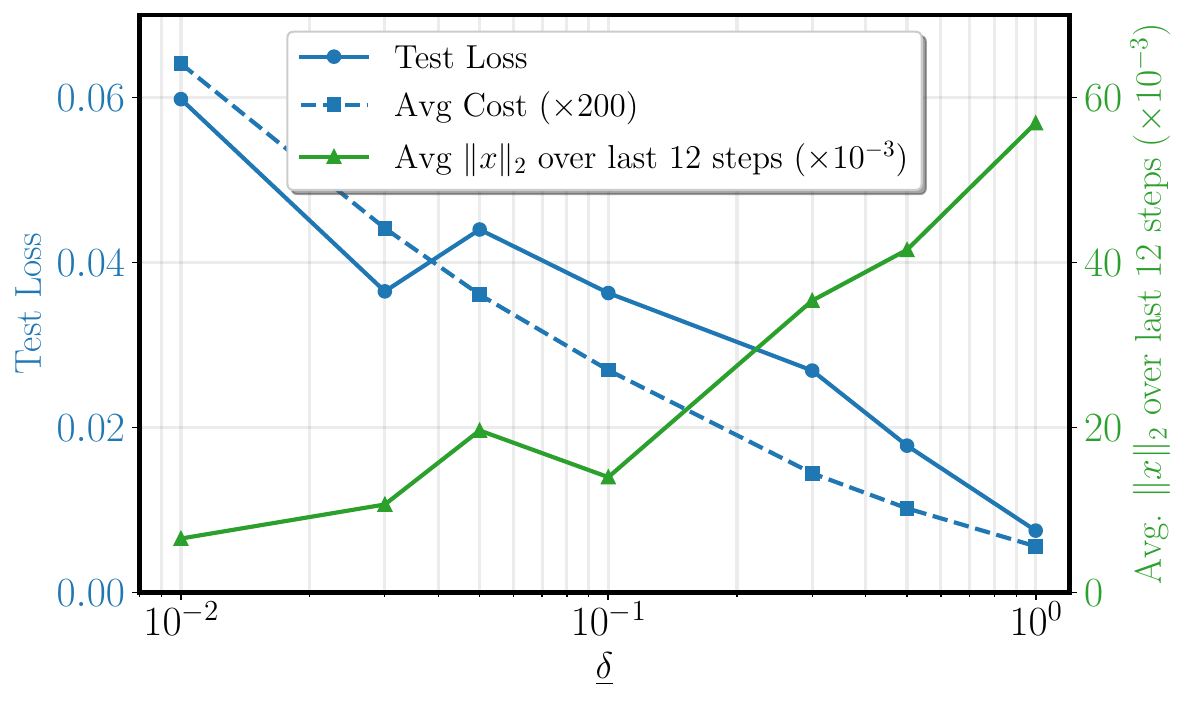}
    \vspace{-2em}
    \caption{Closed-loop Performance under Different $\underline\delta$}
    \label{fig:delta_convergence_plot}
\end{figure}

\section{Conclusion and Future Work}\label{sec:conclusion}
This paper focus on the open problem of relating network complexity to closed-loop performance under the ReLU NN-based MPC policies. We establish explicit upper bounds on the width and depth of ReLU NNs to guarantee constraint satisfaction and stability. These bounds are expressed as functions of the desired approximation error, MPC problem parameters, the Lipschitz constant of the control law, and the constraint set. Furthermore, we have proposed a state-aware scaling approximation method that reduces NN complexity and establishes a trade-off between closed-loop performance and the size of the terminal invariant set. To achieve this, we have establised that: (a) state-dependent continuity of the cost function, and (b) bounds for projection onto tightened sets, may be of independent interest. Future work includes extensions to nonlinear, tracking, and time-varying MPC with state/output constraints.

\appendix
\subsection{Proof of Results in Section \ref{sec:main_results}}
\subsubsection{Proof of Lemma \ref{lem:3Lip}} \label{sec:proof_3Lip}
\begin{proof}
Let $x$ denote the initial state vector. Let $u_{i|k}^*(x)$ denote the control input at prediction step $i$ of the optimal solution of \eqref{equ:mpc}. According to \cite[Corollary 5.2]{borrelli2017predictive}, $u_{i|k}^*(x)$ is Lipschitz continuous on $\mathcal{X}_\text{inv}$ with Lipschitz constant $L_{u,i}$. Define the optimal predicted state trajectory as
\begin{equation*}
x_{i|k}^*(x) = x + \sum_{j=0}^{i-1} A^{i-j-1} B u_{j|k}^*(x).
\end{equation*}
It can be shown that $x_{i|k}^*(\cdot)$ is Lipschitz continuous on $\mathcal{X}_\text{inv}$; let $L_{x,i}$ denote its Lipschitz constant.

Note that $v_\text{mpc}(x)$ is a quadratic form of $x_{i|k}^*(x)$ and $u_{i|k}^*(x)$, which are Lipschitz continuous over $\mathcal{X}_\text{inv}$. Without loss of generality, we consider one term $f(x)^\top M f(x)$, where $M \in \mathbb{S}^{n_m}_+$ and $f:\mathcal{X}_\text{inv}\to\mathbb{R}^{n_m}$ is Lipschitz continuous with constant $L_f$ and $f(0) = 0$. We have
\begin{equation*}
\begin{aligned}
&|f(x_1)^\top M f(x_1) - f(x_2)^\top M f(x_2)| \\
 & \le|(f(x_1) - f(x_2))^\top M f(x_1)|\\
 & ~~ + |f(x_2)^\top M (f(x_1) - f(x_2))| \\
 & \le L_f \|M\| (\|x_1 - x_2\| \|f(x_1)\| + \|x_1 - x_2\| \|f(x_2)\|) \\
 & \le 2 L_f \|M\| \|x_1 - x_2\| \max\{\|f(x_1)\|, \|f(x_2)\|\} \\
 & \le 2 L_f^2 \|M\| \|x_1 - x_2\| \max\{\|x_1\|, \|x_2\|\}.
\end{aligned}
\end{equation*}

Thus, for $v_\text{mpc}(x)$, we have
\begin{equation*}
\begin{aligned}
&|v_\text{mpc}(x_1) - v_\text{mpc}(x_2)| \\
&\le 2\left(L_{x,n_p}^2 \|P\| + \sum_{i=0}^{n_p-1} L_{x,i} ^2\|Q\| + L_{u,i}^2 \|R\| \right) \\
&~~~\times\|x_1 - x_2\| \max\{\|x_1\|, \|x_2\|\} \\
&:= c_0 \max\{\|x_1\|, \|x_2\|\} \|x_1 - x_2\|,
\end{aligned}
\end{equation*}
where $c_0$ is a positive constant. 
\end{proof}

\subsubsection{Proof of Lemma \ref{lem:convergence}} \label{sec:proofL3}
\begin{proof}
By \eqref{equ:exp_stability1-3}, we have that 
$$v_\text{mpc}(Ax + Bu_\text{nn}(x)) \leq (1 - c_3/c_2)v_\text{mpc}(x) + c_4 \|x\| \delta_1 + c_5 \delta_1^2.$$

We define a recursive sequence $\{a_k\}$ to represent the iterative process of the upper bound of the Lyapunov function: 
$$
\begin{dcases}
a_0 = \gamma, \\
a_{k+1} = (1 - c_3/c_2)a_k + c_4 \sqrt{a_k/c_1} \delta_1 + c_5 \delta_1^2, \quad k\in\mathbb{N}.
\end{dcases}
$$

Now, we seek an upper bound $\bar\delta$ ensuring that, for every $0<\delta_1<\bar\delta$, then $a_k\le\gamma$ for all $k$. We further study the convergence of $\{a_k\}$ and compute its limit $a_\infty$ in terms of $\delta_1$.

\begin{figure}
    \centering
    \includegraphics[width=1\linewidth]{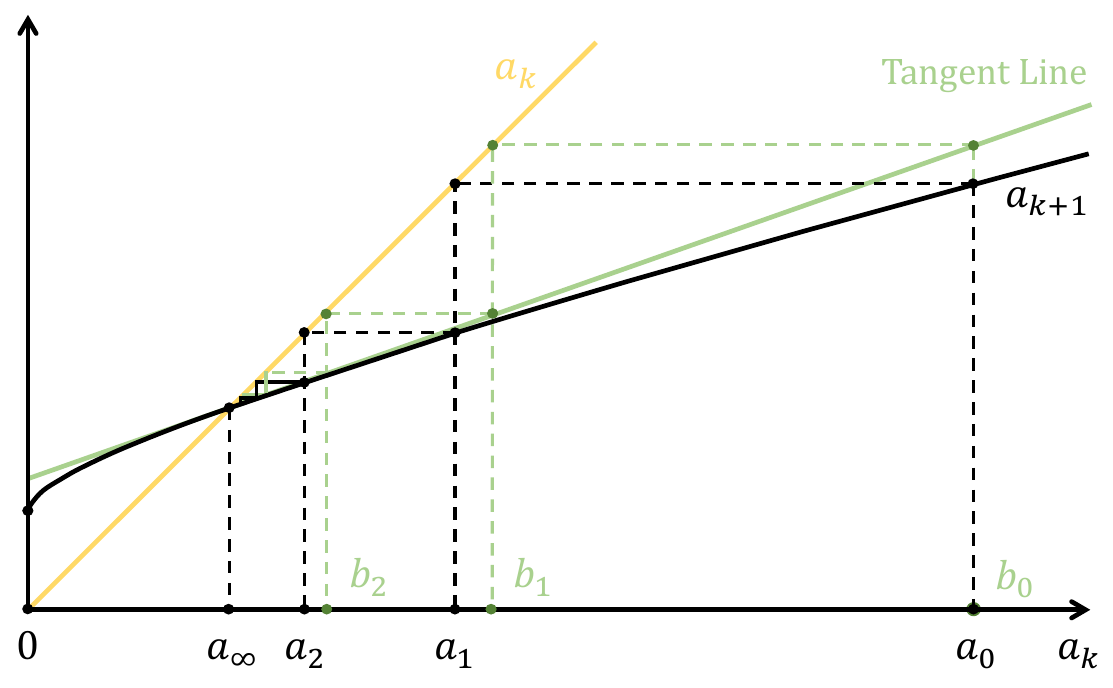}
    \caption{The recursive sequence $a_k$ and its convergence behavior.}
    \label{fig:ak}
\end{figure}

We define $a(x):=(1 - c_3/c_2)x + c_4 \sqrt{x/c_1} \delta_1 + c_5 \delta_1^2$. $a(x)$ can be decomposed into two terms: a square root term $c_4\sqrt{x/c_1} \delta_1$ and an affine term $(1 - c_3/c_2)x + c_5 \delta_1^2$, where the affine term (shown as the blue line in the figure) has a slope $1-{c_3}/{c_2}$ strictly between 0 and 1. The derivative of $a(x)$ is strictly decreasing and approaches a value less than 1. Therefore, as shown in the figure, $a(x)$ clearly has a unique fixed point, at which the black curve $y=a(x)$ intersects with the yellow line $y=x$. This implies that
$$ a_\infty=\underbrace{\frac{c_2 \left( 2 c_1 c_2 c_3 c_5 + c_2^2 c_4^2 + c_2 c_4 \sqrt{ c_2 (4 c_1 c_3 c_5 + c_2 c_4^2) } \right)}{2 c_1 c_3^2}}_{c_6}\delta_1^2.$$

The convergence of the sequence ${a_k}$ is illustrated by the black dashed line. We construct the tangent line of $a(x)$ at the fixed point and use this tangent line instead of the original iteration to obtain another sequence ${b_k}$, where $b_0 = 0$ and $b_{k+1} = a'(a_\infty)(b_k-a_\infty)+a(a_\infty)$. It is easy to show that $a_k \leq b_k$ for all $k$. Furthermore, $b_{k+1}$ clearly converges to $a_\infty$, therefore the upper bound of the Lyapunov function converges to $\gamma':=a_\infty$, meaning that the system converges to $\mathcal{V}(\gamma')$.
\end{proof}

\subsubsection{Lemma \ref{lem:3Lip2} and Its Proof} \label{sec:proof_3Lip2}

We now present Lemma \ref{lem:3Lip2} regarding the properties of $r(\mathcal{U},\epsilon)$. In the following proofs, we denote the tightened set $\mathcal{U}'(\epsilon)$ defined in \eqref{tightset} as $\mathcal{T}(\mathcal{U}, \epsilon)$.

\begin{lemma}\label{lem:3Lip2} $r(\mathcal{U},\epsilon)$ is a strictly increasing Lipschitz continuous function with respect to $\epsilon$, and for any $\epsilon \in [0, d(\mathcal{U}))$, it holds that $\epsilon \leq r(\mathcal{U}, \epsilon) \leq \epsilon D(\mathcal{U}) / d(\mathcal{U})$.
\end{lemma}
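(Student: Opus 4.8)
The plan is to reframe the tightened set in \eqref{tightset} as the morphological erosion (Minkowski difference) $\mathcal{U}'(\epsilon)=\mathcal{U}\ominus\mathcal{B}(\epsilon)=\{u:u+\mathcal{B}(\epsilon)\subseteq\mathcal{U}\}$, and to identify the maximum projection distance with a Hausdorff distance. As a preliminary I would record that, for $\epsilon\in[0,d(\mathcal{U}))$, the origin lies in $\mathcal{U}'(\epsilon)$ (since $\mathcal{B}(\epsilon)\subseteq\mathcal{U}$ iff $\epsilon\le d(\mathcal{U})$), so $\mathcal{U}'(\epsilon)$ is nonempty, closed, and convex; hence the projection is well-defined and the maximum defining $r(\mathcal{U},\epsilon)$ is attained over the compact set $\mathcal{U}$. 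Because $\mathcal{U}'(\epsilon)\subseteq\mathcal{U}$, one has $r(\mathcal{U},\epsilon)=\max_{u\in\mathcal{U}}\operatorname{dist}(u,\mathcal{U}'(\epsilon))=d_H(\mathcal{U},\mathcal{U}'(\epsilon))=\max_{\|\eta\|=1}g_\epsilon(\eta)$, where $g_\epsilon(\eta):=h_{\mathcal{U}}(\eta)-h_{\mathcal{U}'(\epsilon)}(\eta)\ge 0$ and $h$ denotes the support function; this support-function form of the Hausdorff distance between nested convex bodies will drive the monotonicity and continuity claims.

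For the two-sided bound, the upper estimate follows from a homothety argument. Since $\mathcal{B}(d(\mathcal{U}))\subseteq\mathcal{U}$, for any $u\in\mathcal{U}$ and $\|w\|\le\epsilon$ I can write $(1-\epsilon/d(\mathcal{U}))\,u+w$ as the convex combination of $u\in\mathcal{U}$ and $(d(\mathcal{U})/\epsilon)\,w\in\mathcal{B}(d(\mathcal{U}))\subseteq\mathcal{U}$ with weights $1-\epsilon/d(\mathcal{U})$ and $\epsilon/d(\mathcal{U})$, which shows $(1-\epsilon/d(\mathcal{U}))\mathcal{U}\subseteq\mathcal{U}'(\epsilon)$. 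Projecting $u$ onto this contracted copy gives $\operatorname{dist}(u,\mathcal{U}'(\epsilon))\le(\epsilon/d(\mathcal{U}))\|u\|\le\epsilon\,D(\mathcal{U})/d(\mathcal{U})$, and the maximum over $u$ yields the upper bound. The lower estimate uses a supporting hyperplane: for any boundary point $u^{*}\in\partial\mathcal{U}$ with unit outer normal $\eta$, every $v\in\mathcal{U}'(\epsilon)$ satisfies $v+\epsilon\eta\in\mathcal{U}$, hence $\eta^\top v\le h_{\mathcal{U}}(\eta)-\epsilon=\eta^\top u^{*}-\epsilon$, so $\|u^{*}-v\|\ge\eta^\top(u^{*}-v)\ge\epsilon$, giving $r(\mathcal{U},\epsilon)\ge\epsilon$.

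For strict monotonicity and continuity I would exploit the semigroup property of erosion by balls, $\mathcal{U}'(\epsilon_2)=\mathcal{U}'(\epsilon_1)\ominus\mathcal{B}(\epsilon_2-\epsilon_1)$ for $\epsilon_1<\epsilon_2$. Applying the supporting-hyperplane inequality above to the body $\mathcal{U}'(\epsilon_1)$ yields $h_{\mathcal{U}'(\epsilon_2)}(\eta)\le h_{\mathcal{U}'(\epsilon_1)}(\eta)-(\epsilon_2-\epsilon_1)$ for every unit $\eta$; combined with $r(\mathcal{U},\epsilon)=\max_{\|\eta\|=1}g_\epsilon(\eta)$ this gives $r(\mathcal{U},\epsilon_2)\ge r(\mathcal{U},\epsilon_1)+(\epsilon_2-\epsilon_1)$, establishing the strict increase. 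For the upper Lipschitz increment I would use $|\max_\eta g_{\epsilon_2}-\max_\eta g_{\epsilon_1}|\le\max_\eta|g_{\epsilon_2}(\eta)-g_{\epsilon_1}(\eta)|=d_H(\mathcal{U}'(\epsilon_1),\mathcal{U}'(\epsilon_2))=r(\mathcal{U}'(\epsilon_1),\epsilon_2-\epsilon_1)$ and then invoke the already-proved upper bound of this very lemma, now for the body $\mathcal{U}'(\epsilon_1)$, together with $D(\mathcal{U}'(\epsilon_1))\le D(\mathcal{U})$ and $d(\mathcal{U}'(\epsilon_1))\ge d(\mathcal{U})-\epsilon_1$, to obtain $|r(\mathcal{U},\epsilon_2)-r(\mathcal{U},\epsilon_1)|\le \tfrac{D(\mathcal{U})}{d(\mathcal{U})-\epsilon_1}(\epsilon_2-\epsilon_1)$.

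I expect the Lipschitz claim to be the main obstacle, specifically its uniformity: the self-referential estimate above degrades as $\epsilon_1\to d(\mathcal{U})^-$, where $\mathcal{U}'(\epsilon)$ collapses toward the set of deepest points and the crude bound $d(\mathcal{U}'(\epsilon_1))\ge d(\mathcal{U})-\epsilon_1$ becomes vacuous. On any compact subinterval $[0,d(\mathcal{U})-\theta]$ the constant $D(\mathcal{U})/\theta$ is finite, so $r(\mathcal{U},\cdot)$ is Lipschitz there, which is exactly the regime needed downstream: in Lemma \ref{lem:projection} the relevant $\epsilon$ stays below $d(\mathcal{U})/2$, where the Lipschitz constant is at most $2D(\mathcal{U})/d(\mathcal{U})$. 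I would therefore prove Lipschitz continuity on compact subintervals of $[0,d(\mathcal{U}))$ and flag that a single global constant would require sharper control of the erosion rate near the inradius than the elementary estimates used here; the remaining technical points — the support-function form of the Hausdorff distance and the erosion semigroup — are standard facts about convex bodies that I would isolate as preliminary observations.
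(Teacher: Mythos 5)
Your proposal is correct and, in substance, runs parallel to the paper's own argument: both rest on the semigroup identity $\mathcal{T}(\mathcal{T}(\mathcal{U},\epsilon_1),\epsilon_2)=\mathcal{T}(\mathcal{U},\epsilon_1+\epsilon_2)$ of Lemma \ref{lem:3Lip3}, both get the upper bound $r(\mathcal{U},\epsilon)\le\epsilon D(\mathcal{U})/d(\mathcal{U})$ from the homothety $(1-\epsilon/d(\mathcal{U}))\,\mathcal{U}\subseteq\mathcal{U}'(\epsilon)$, and both get the increment $r(\mathcal{U},\epsilon_2)\ge r(\mathcal{U},\epsilon_1)+(\epsilon_2-\epsilon_1)$ by displacing a point of the eroded set inward by $\epsilon_2-\epsilon_1$ (the paper via an explicit vector $v$ along the projection direction, you via a supporting hyperplane; these are the same estimate). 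Recasting $r(\mathcal{U},\epsilon)$ as the Hausdorff distance $\sup_{\|\eta\|=1}\bigl(h_{\mathcal{U}}(\eta)-h_{\mathcal{U}'(\epsilon)}(\eta)\bigr)$ is a clean packaging that turns the monotonicity and the increment bound into one-line computations, and it buys you for free the observation that $r(\mathcal{U},\cdot)$ is convex in $\epsilon$.

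Where you genuinely depart from the paper is the Lipschitz claim, and your caution there is vindicated. The paper's chain \eqref{equ:w} hinges on the unproved assertion that $w+\mathcal{B}(\epsilon_2-\epsilon_1)\subseteq\mathcal{T}(\mathcal{U},\epsilon_1)$ for $w=(1-(\epsilon_2-\epsilon_1)/d(\mathcal{U}))\,\Pi_{\mathcal{T}(\mathcal{U},\epsilon_1)}(u_2)$; in support-function terms this requires $h_{\mathcal{T}(\mathcal{U},\epsilon_1)}(\eta)\ge d(\mathcal{U})$ for every unit $\eta$, i.e.\ $\mathcal{B}(d(\mathcal{U}))\subseteq\mathcal{T}(\mathcal{U},\epsilon_1)$, which holds only at $\epsilon_1=0$. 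Your constant $D(\mathcal{U})/(d(\mathcal{U})-\epsilon_1)$ is what the argument actually delivers, and its degeneration as $\epsilon_1\to d(\mathcal{U})^-$ is not an artifact of crude estimates: for the lens $\mathcal{U}=\mathcal{B}((a,0),R)\cap\mathcal{B}((-a,0),R)$ in $\mathbb{R}^2$ (with $0<a<R$, so $d(\mathcal{U})=R-a$ and $D(\mathcal{U})=\sqrt{R^2-a^2}$) one computes $r(\mathcal{U},\epsilon)=\sqrt{R^2-a^2}-\sqrt{(R-\epsilon)^2-a^2}$, whose slope $(R-\epsilon)/\sqrt{(R-\epsilon)^2-a^2}$ diverges as $\epsilon\to d(\mathcal{U})^-$; hence no single Lipschitz constant works on all of $[0,d(\mathcal{U}))$, and in particular the constant $D(\mathcal{U})/d(\mathcal{U})$ asserted in the paper's proof is exceeded. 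This does not damage anything downstream, since Lemma \ref{lem:projection} and Lemma \ref{lem:transformation} invoke only the two-sided bound $\epsilon\le r(\mathcal{U},\epsilon)\le\epsilon D(\mathcal{U})/d(\mathcal{U})$, and in the regime $\epsilon\le d(\mathcal{U})/2$ used there your local constant is $2D(\mathcal{U})/d(\mathcal{U})$; but the Lipschitz assertion in the lemma should be weakened to Lipschitz continuity on compact subintervals of $[0,d(\mathcal{U}))$, exactly as you propose.
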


To prove Lemma \ref{lem:3Lip2}, we first introduce two fundamental properties of the projection operator.

\begin{lemma}\label{lem:3Lip1}
For any $\epsilon \in [0, d(\mathcal{U}))$, $\mathcal{U}'(\epsilon) $ is a C-set. 
\end{lemma}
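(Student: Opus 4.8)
The plan is to recognize $\mathcal{U}'(\epsilon)$ as the Minkowski erosion of $\mathcal{U}$ by the closed ball $\mathcal{B}(\epsilon)$, which admits the intersection representation
\begin{equation*}
\mathcal{U}'(\epsilon) = \bigcap_{\Delta \in \mathcal{B}(\epsilon)} (\mathcal{U} - \Delta),
\end{equation*}
since the condition $u + \Delta \in \mathcal{U}$ for all $\Delta \in \mathcal{B}(\epsilon)$ is exactly $u \in \mathcal{U} - \Delta$ for all such $\Delta$. From this form I would then verify the three defining properties of a C-set (convexity, compactness, and the origin lying in the interior) one at a time.

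For convexity and compactness I would argue directly from the intersection. Each translate $\mathcal{U} - \Delta$ is convex and closed because $\mathcal{U}$ is a C-set, so the intersection $\mathcal{U}'(\epsilon)$ inherits both properties. Boundedness is immediate, since taking $\Delta = 0 \in \mathcal{B}(\epsilon)$ in the defining condition shows $\mathcal{U}'(\epsilon) \subseteq \mathcal{U}$, and a closed, bounded subset of $\mathbb{R}^{n_u}$ is compact. These steps are routine and do not require the hypothesis $\epsilon < d(\mathcal{U})$.

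The one place where $\epsilon < d(\mathcal{U})$ is essential, and the main (mild) obstacle, is showing that the origin lies in the \emph{interior} of $\mathcal{U}'(\epsilon)$ rather than merely belonging to it. Here I would exploit the inradius $d(\mathcal{U})$: since $\epsilon < d(\mathcal{U})$, there is a margin $\rho > 0$ with $\rho + \epsilon < d(\mathcal{U})$, whence $\mathcal{B}(\rho + \epsilon) \subseteq \mathcal{U}$ by definition of $d(\mathcal{U})$. For any $u \in \mathcal{B}(\rho)$ and any $\Delta \in \mathcal{B}(\epsilon)$, the triangle inequality gives $\|u + \Delta\| \le \rho + \epsilon < d(\mathcal{U})$, so $u + \Delta \in \mathcal{U}$; thus $\mathcal{B}(\rho) \subseteq \mathcal{U}'(\epsilon)$ and therefore $0 \in \operatorname{int} \mathcal{U}'(\epsilon)$. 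Combining convexity, compactness, and this interior property establishes that $\mathcal{U}'(\epsilon)$ is a C-set.
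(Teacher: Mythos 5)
Your proof is correct, and it takes a somewhat different and arguably cleaner route than the paper's. The paper verifies each property element-wise from the definition: convexity by writing $u_3+\Delta=\alpha(u_1+\Delta)+(1-\alpha)(u_2+\Delta)$, closedness by showing the complement is open via an openness argument in $\mathbb{R}^{n_u}\setminus\mathcal{U}$, and boundedness from $\mathcal{U}'(\epsilon)\subseteq\mathcal{U}$. Your intersection representation $\mathcal{U}'(\epsilon)=\bigcap_{\Delta\in\mathcal{B}(\epsilon)}(\mathcal{U}-\Delta)$ delivers convexity and closedness in one stroke, since both properties are preserved under arbitrary intersections of translates of the convex compact set $\mathcal{U}$; this is more economical, at the cost of first justifying the representation (which you do correctly). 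The more substantive difference is in the last property: the paper's definition of a C-set requires the origin to lie in the \emph{interior}, but its proof only argues that $\mathcal{B}(\epsilon)\subset\mathcal{U}$ implies $0\in\mathcal{U}'(\epsilon)$, i.e., membership rather than interiority. Your argument --- choosing $\rho>0$ with $\rho+\epsilon<d(\mathcal{U})$ and showing $\mathcal{B}(\rho)\subseteq\mathcal{U}'(\epsilon)$ --- actually establishes the interior condition and makes explicit where the strict inequality $\epsilon<d(\mathcal{U})$ is used, so on this point your proof is more complete than the one in the paper.
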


\begin{proof}
It is evident that $\mathcal{T}(\mathcal{U},\epsilon) \subset \mathcal{U}$, thus $\mathcal{T}(\mathcal{U},\epsilon)$ is bounded.

Let $u_1, u_2 \in \mathcal{T}(\mathcal{U},\epsilon)$. For any $\alpha \in [0,1]$ and any $\Delta \in \mathcal{B}(\epsilon)$, define $u_3 = \alpha u_1 + (1-\alpha) u_2$. Then, $u_3 + \Delta = \alpha u_1 + (1-\alpha) u_2 + \Delta = \alpha (u_1 + \Delta) + (1-\alpha) (u_2 + \Delta)$. Since $u_1 + \Delta, u_2 + \Delta \in \mathcal{U}$ and $\mathcal{U}$ is a convex set, it follows that $u_3 + \Delta \in \mathcal{U}$, implying that $u_3 \in \mathcal{T}(\mathcal{U},\epsilon)$. Therefore, $\mathcal{T}(\mathcal{U},\epsilon)$ is convex.

For any $u \notin \mathcal{T}(\mathcal{U},\epsilon)$, there exists $\Delta \in \mathcal{B}(\epsilon)$ such that $u + \Delta \notin \mathcal{U}$. Since $\mathcal{U}$ is a closed set, $\mathbb{R}^{n_u} \setminus \mathcal{U}$ is open. Hence, there exists $\mathcal{B}(\epsilon')$ such that for all $\Delta' \in \mathcal{B}(\epsilon')$, $u + \Delta + \Delta' \notin \mathcal{U}$. Therefore, for all $u' \in u + \mathcal{B}(\epsilon')$, $u' \notin \mathcal{T}(\mathcal{U},\epsilon)$, indicating that $\mathbb{R}^{n_u} \setminus \mathcal{T}(\mathcal{U},\epsilon)$ is open, and thus $\mathcal{T}(\mathcal{U},\epsilon)$ is closed.

For all $\epsilon \in [0, d(\mathcal{U})]$, $\mathcal{B}(\epsilon) \subset \mathcal{U}$, hence $0 \in \mathcal{T}(\mathcal{U},\epsilon)$ and $\mathcal{T}(\mathcal{U},\epsilon) \neq \emptyset$.
\end{proof}

\begin{lemma}\label{lem:3Lip3}
For any $\epsilon_1, \epsilon_2 \in [0, d(\mathcal{U}))$ satisfying $\epsilon_1+\epsilon_2 < d(\mathcal{U})$, it holds that $\mathcal{T}(\mathcal{T}(\mathcal{U},\epsilon_1),\epsilon_2) = \mathcal{T}(\mathcal{U},\epsilon_1+\epsilon_2)$.
\end{lemma}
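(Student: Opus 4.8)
The plan is to recognize $\mathcal{T}(\mathcal{U},\epsilon)$ as the erosion of $\mathcal{U}$ by the ball $\mathcal{B}(\epsilon)$, i.e., $\mathcal{T}(\mathcal{U},\epsilon) = \{u : u + \mathcal{B}(\epsilon) \subseteq \mathcal{U}\}$, and to prove the claimed semigroup identity by double inclusion. The engine behind both inclusions is the additivity of Euclidean balls under Minkowski addition: a vector $\Delta$ satisfies $\|\Delta\| \le \epsilon_1 + \epsilon_2$ if and only if it decomposes as $\Delta = \Delta_1 + \Delta_2$ with $\|\Delta_i\| \le \epsilon_i$.

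First I would establish $\mathcal{T}(\mathcal{T}(\mathcal{U},\epsilon_1),\epsilon_2) \subseteq \mathcal{T}(\mathcal{U},\epsilon_1+\epsilon_2)$. Unfolding the definition \eqref{tightset} twice, membership of $u$ in the left-hand set means $u + \Delta_2 + \Delta_1 \in \mathcal{U}$ for every $\Delta_1 \in \mathcal{B}(\epsilon_1)$ and $\Delta_2 \in \mathcal{B}(\epsilon_2)$. Given an arbitrary $\Delta \in \mathcal{B}(\epsilon_1 + \epsilon_2)$, I would split it as $\Delta_1 = \tfrac{\epsilon_1}{\epsilon_1+\epsilon_2}\Delta$ and $\Delta_2 = \tfrac{\epsilon_2}{\epsilon_1+\epsilon_2}\Delta$ (the degenerate case $\epsilon_1+\epsilon_2=0$ being trivial, since $\mathcal{T}(\mathcal{U},0)=\mathcal{U}$), so that $\|\Delta_i\| \le \epsilon_i$ and $\Delta_1 + \Delta_2 = \Delta$; applying the unfolded condition then yields $u + \Delta \in \mathcal{U}$, which is precisely $u \in \mathcal{T}(\mathcal{U},\epsilon_1+\epsilon_2)$.

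Next I would prove the reverse inclusion. Take $u \in \mathcal{T}(\mathcal{U},\epsilon_1+\epsilon_2)$, so $u + \Delta \in \mathcal{U}$ for all $\Delta \in \mathcal{B}(\epsilon_1+\epsilon_2)$. Since $\mathcal{B}(\epsilon_1) \subseteq \mathcal{B}(\epsilon_1+\epsilon_2)$, we immediately get $u \in \mathcal{T}(\mathcal{U},\epsilon_1)$. It then remains to check $u + \Delta_2 \in \mathcal{T}(\mathcal{U},\epsilon_1)$ for every $\Delta_2 \in \mathcal{B}(\epsilon_2)$, i.e., $(u + \Delta_2) + \Delta_1 \in \mathcal{U}$ for all $\Delta_1 \in \mathcal{B}(\epsilon_1)$. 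This follows from the triangle inequality $\|\Delta_1 + \Delta_2\| \le \epsilon_1 + \epsilon_2$, which places $\Delta_1 + \Delta_2 \in \mathcal{B}(\epsilon_1+\epsilon_2)$ and hence $u + \Delta_1 + \Delta_2 \in \mathcal{U}$. Combining the two inclusions gives the identity.

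The two directions are asymmetric in difficulty, and this is where I would focus attention: the reverse inclusion needs only the triangle inequality (the easy half of ball additivity), whereas the forward inclusion requires the explicit decomposition of an arbitrary admissible perturbation into two pieces with prescribed norms, which is exactly where the special geometry of Euclidean balls—as opposed to a general convex structuring set—is used. That decomposition is the one step to get right. The hypothesis $\epsilon_1 + \epsilon_2 < d(\mathcal{U})$ plays no role in the equality itself; it only guarantees, via Lemma \ref{lem:3Lip1}, that all three tightened sets are nonempty C-sets, so the statement is non-vacuous.
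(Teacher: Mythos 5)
Your proof is correct and follows essentially the same route as the paper's: both directions are handled by double inclusion, with the forward inclusion using the same radial split $\Delta_i = \tfrac{\epsilon_i}{\epsilon_1+\epsilon_2}\Delta$ and the reverse inclusion resting on the triangle inequality. Your added remarks (the degenerate case $\epsilon_1+\epsilon_2=0$ and the observation that $\epsilon_1+\epsilon_2<d(\mathcal{U})$ is only needed for non-vacuousness) are minor refinements the paper omits but do not change the argument.
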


\begin{proof}
We first prove $\mathcal{T}(\mathcal{T}(\mathcal{U},\epsilon_1),\epsilon_2) \subseteq \mathcal{T}(\mathcal{U},\epsilon_1 + \epsilon_2)$. For any $u \in \mathcal{T}(\mathcal{T}(\mathcal{U},\epsilon_1),\epsilon_2)$, we know that for all $\Delta_1 \in \mathcal{B}(\epsilon_1)$, it holds that $u + \Delta_1 \in \mathcal{T}(\mathcal{U},\epsilon_2)$. This means that for all $\Delta_2 \in \mathcal{B}(\epsilon_2)$, $u + \Delta_1 + \Delta_2 \in \mathcal{U}$. Thus, for all $\Delta \in \mathcal{B}(\epsilon_1 + \epsilon_2)$, let $\Delta_1 = \epsilon_1 \Delta / (\epsilon_1 + \epsilon_2)$ and $\Delta_2 = \epsilon_2 \Delta / (\epsilon_1 + \epsilon_2)$, then $u + \Delta_1 + \Delta_2 \in \mathcal{U}$, which implies $u \in \mathcal{T}(\mathcal{U},\epsilon_1 + \epsilon_2)$. Thus, $\mathcal{T}(\mathcal{T}(\mathcal{U},\epsilon_1),\epsilon_2) \subseteq \mathcal{T}(\mathcal{U},\epsilon_1 + \epsilon_2)$.

Conversely, we prove $\mathcal{T}(\mathcal{U},\epsilon_1 + \epsilon_2) \subseteq \mathcal{T}(\mathcal{T}(\mathcal{U},\epsilon_1),\epsilon_2)$. For any $u \in \mathcal{T}(\mathcal{U},\epsilon_1 + \epsilon_2)$, we know that for all $\Delta \in \mathcal{B}(\epsilon_1 + \epsilon_2)$, $u + \Delta \in \mathcal{U}$. For any $\Delta_1 \in \mathcal{B}(\epsilon_1)$ and $\Delta_2 \in \mathcal{B}(\epsilon_2)$, since $\Delta_1 + \Delta_2 \in \mathcal{B}(\epsilon_1 + \epsilon_2)$, it follows that $u + \Delta_1 + \Delta_2 \in \mathcal{U}$. Thus, $u + \Delta_1 \in \mathcal{T}(\mathcal{U},\epsilon_2)$, and hence $u \in \mathcal{T}(\mathcal{T}(\mathcal{U},\epsilon_1),\epsilon_2)$. Therefore, $\mathcal{T}(\mathcal{T}(\mathcal{U},\epsilon_1),\epsilon_2) = \mathcal{T}(\mathcal{U},\epsilon_1 + \epsilon_2)$.
\end{proof}

\begin{proof}[Proof of Lemma \ref{lem:3Lip2}]
For any $ \epsilon_1, \epsilon_2$ satisfying $0 \le \epsilon_1 < \epsilon_2 \le d(\mathcal{U})$, denote
\begin{equation*}
u_1 \in \arg\max_{u \in \mathcal{U}} \|u - \Pi_{\mathcal{T}(\mathcal{U}, \epsilon_1)}(u)\|.
\end{equation*}
By Lemma \ref{lem:3Lip3}, $\mathcal{T}(\mathcal{U}, \epsilon_2) = \mathcal{T}(\mathcal{T}(\mathcal{U}, \epsilon_1), \epsilon_2 - \epsilon_1)$, then we have
\begin{equation}\label{equ:u1}
\forall u \in \Pi_{\mathcal{T}(\mathcal{U}, \epsilon_2)}(u_1) + \mathcal{B}(\epsilon_2 - \epsilon_1), u \in \mathcal{T}(\mathcal{U}, \epsilon_1).
\end{equation}
We define a vector $v\in\mathbb{R}^{n_u}$ as follows:
\begin{equation*}
v := (\epsilon_2 - \epsilon_1)\dfrac{u_1 - \Pi_{\mathcal{T}(\mathcal{U}, \epsilon_2)}(u_1)}{\|u_1 - \Pi_{\mathcal{T}(\mathcal{U}, \epsilon_2)}(u_1)\|}.
\end{equation*}
Obviously, $\|v\| \le \epsilon_2 - \epsilon_1$. By \eqref{equ:u1}, we have
\begin{equation}\label{equ:u1v}
\Pi_{\mathcal{T}(\mathcal{U}, \epsilon_2)}(u_1) + v \in \mathcal{T}(\mathcal{U}, \epsilon_1).
\end{equation}
Jointly with \eqref{equ:u1v} and the definition of $u_1,v$, we have
\begin{equation}\label{equ:r}
\begin{aligned}
r(\mathcal{U}, \epsilon_1) &= \|u_1 - \Pi_{\mathcal{T}(\mathcal{U}, \epsilon_1)}(u_1)\|\\
&\le \|u_1 - \Pi_{\mathcal{T}(\mathcal{U}, \epsilon_2)}(u_1) + v\| \\
&= -\epsilon_2 + \epsilon_1 + \|u_1 - \Pi_{\mathcal{T}(\mathcal{U}, \epsilon_2)}(u_1)\| \\
&\le -\epsilon_2 + \epsilon_1 + r(\mathcal{U}, \epsilon_2).
\end{aligned}
\end{equation}
This implies that $r(\mathcal{U}, \epsilon)$ is strictly increasing in $\epsilon$.

Similarly, denote
\begin{equation*}
u_2 \in \arg\max_{u \in \mathcal{U}} \|u - \Pi_{\mathcal{T}(\mathcal{U}, \epsilon_2)}(u)\|.
\end{equation*}
and $w\in\mathbb{R}^{n_u}$ as
$
w := \Pi_{\mathcal{T}(\mathcal{U}, \epsilon_1)}(u_2)(1 - (\epsilon_2 - \epsilon_1)/d(\mathcal{U})).
$

It can be shown that $w + \mathcal{B}(\epsilon_2 - \epsilon_1) \subseteq \mathcal{T}(\mathcal{U}, \epsilon_1)$, thus $w \in \mathcal{T}(\mathcal{U}, \epsilon_2)$. 
Therefore, by the definition of $u_2,w$, we have that
\begin{equation}\label{equ:w}
\begin{aligned}
r(\mathcal{U}, \epsilon_2) 
&= \|u_2 - \Pi_{\mathcal{T}(\mathcal{U}, \epsilon_2)}(u_2)\|\\
&\le \|u_2 - w\| \\
&\le \|u_2 - \Pi_{\mathcal{T}(\mathcal{U}, \epsilon_1)}(u_2)\| + \|w - \Pi_{\mathcal{T}(\mathcal{U}, \epsilon_1)}(u_2)\| \\
&= r(\mathcal{U}, \epsilon_1) + \|\Pi_{\mathcal{T}(\mathcal{U}, \epsilon_1)}(u_2)\| (\epsilon_2 - \epsilon_1)/d(\mathcal{U}) \\
&\le r(\mathcal{U}, \epsilon_1) + D(\mathcal{U})/d(\mathcal{U}) (\epsilon_2 - \epsilon_1).
\end{aligned}
\end{equation}

Finally, combining \eqref{equ:r} and \eqref{equ:w}, we obtain
\begin{equation*}
\begin{aligned}
&r(\mathcal{U}, \epsilon_1) + \epsilon_2 - \epsilon_1 \le r(\mathcal{U}, \epsilon_2) \\
&~~\le r(\mathcal{U}, \epsilon_1) + D(\mathcal{U})/d(\mathcal{U}) (\epsilon_2 - \epsilon_1).
\end{aligned}
\end{equation*}
As a result, $r(\mathcal{U}, \epsilon)$ is Lipschitz continuous with respect to $\epsilon$, and the Lipschitz constant is $D(\mathcal{U})/d(\mathcal{U})$. Since $\mathcal{T}(\mathcal{U}, 0) = \mathcal{U}$ and $r(\mathcal{U}, 0) = 0$, $\forall\epsilon \in [0, d(\mathcal{U}))$, then
\begin{equation*}
\epsilon \le r(\mathcal{U}, d(\mathcal{U})) \le \epsilon D(\mathcal{U})/d(\mathcal{U}).
\end{equation*}
\end{proof}

\subsubsection{Proof of Theorem \ref{thm:main}} \label{sec:proof_main}

We require $a(\gamma)<\gamma$ in the proof of Lemma \ref{lem:convergence}, which means
$$ (1 - c_3/c_2)\gamma + c_4 \sqrt{\gamma/c_1} \delta_1 + c_5 \delta_1^2 < \gamma. $$

This yields
$$ \delta_1 < \frac{ \left( \sqrt{c_4^2 + 4 c_1 c_3 c_5/c_2} - c_4 \right) \sqrt{\gamma/c_1} }{ 2 c_5 }.
$$
For simplicity, let $\bar\delta' := c_7\sqrt{\gamma}$ denote the RHS. Thus, for any $0 < \delta_1 < \bar\delta'$, the closed-loop system converges to the invariant set $\mathcal{V}(\gamma')$.

To ensure the feasibility and convergence of the closed-loop system, Lemma \ref{lem:projection} requires
$\delta_1 \le c_7d(\mathcal{U})\sqrt{\gamma}/(2 D(\mathcal{U}))$. Additionally, according to Lemma \ref{lem:3Lip1}, we must have
 $\delta_1 < d(\mathcal{U})$ in order for $\mathcal{T}(\mathcal{U},\delta_1)$ to be a C-set. Therefore, we select
\begin{equation}\label{equ:delta1}
\bar\delta = \min\left\{c_7d(\mathcal{U})\sqrt{\gamma}/(2 D(\mathcal{U})), d(\mathcal{U})\right\}.
\end{equation}

Since the domain of the ReLU NN in Theorem \ref{lem:3nn} is $[0,1]^{n_x}$, we define the transformation: 
\begin{equation}\label{equ:R}
R(x) = 2D(\mathcal{X}_\text{inv}) (x - \bm{1}_{n_x}/2),
\end{equation}
which maps $[0,1]^{n_x}$ to a superset of $\mathcal{X}_\text{inv}$.

Since the projection operator is contractive, we know that it does not increase the Lipschitz constant, i.e.,
\begin{equation}\label{equ:proj_Lip}
\mathcal{L}(\Pi_{\mathcal{T}(\mathcal{U},\delta_1')}\circ u_\text{mpc},\mathcal{X}_\text{inv}) \le \mathcal{L}(u_\text{mpc},\mathcal{X}_\text{inv}).
\end{equation}
Define $f:=\Pi_{\mathcal{T}(\mathcal{U},\delta_1')}\circ u_\text{mpc} \circ R$, then its Lipschitz constant is 
\begin{equation}\label{equ:Lip_f}
\mathcal{L}(f,\mathcal{X}_\text{inv}) = 2D(\mathcal{X}_\text{inv})\mathcal{L}(u_\text{mpc},\mathcal{X}_\text{inv}).
\end{equation}

By Lemma \ref{lem:3nn}, we construct $n_u$ ReLU NNs, each approximating one dimension of $u_\text{mpc}(\cdot)$. By the relation between the Euclidean norm and the infinity norm, each NN has an error bound of $\delta_1 d(\mathcal{U})/(2D(\mathcal{U})\sqrt{n_u})$. Limiting the RHS of \eqref{equ:3nn} by this error bound, we can derive 
\begin{equation*}
\begin{aligned}
&131\sqrt{n_x}~\omega_f((n_w'^2n_d'^2\log_3(n_w'+2))^{-1/{n_x}})\\
&=262\sqrt{n_x}D(\mathcal{X}_\text{inv})\mathcal{L}(u_\text{mpc},\mathcal{X}_\text{inv})(n_w'^2n_d'^2\log_3(n_w'+2))^{-n_x^{-1}}\\
&\le\delta_1 d(\mathcal{U})/(2D(\mathcal{U})\sqrt{n_u}),
\end{aligned}
\end{equation*}
which is equivalent to \eqref{equ:main}.

Finally, we determine the width and depth of the ReLU NN MPC policy. $T(\cdot)$ requires $1$ layers \cite{hanin2019universal}, then the depth is $11n_d'+19+2n_x$. The width is $n_u3^{n_x+3}\max\{n_x\lfloor n_w'^{1/n_x}\rfloor, n_w'+2\}$, which is multiplied by $n_u$ to construct the $n_u$ NNs. 
\qed

\subsection{Proof of Results in Section \ref{sec:pro2}}
\subsubsection{Proof of Lemma \ref{lem:T_bijective}}\label{sec:proof_T_bijective}
\begin{lemma}\label{lem:T_bijective}
$T(\cdot)$ is continuous and bijective, thus $T^{-1}(\cdot)$ exists.
\end{lemma}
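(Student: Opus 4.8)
The plan is to exploit the radial structure of $T$. First I would observe that in each of the three branches the value $T(x)$ is a nonnegative scalar multiple of $x$, so that $T$ preserves directions: for every $x\neq 0$ we may write $T(x)=g(\|x\|)\,x/\|x\|$, where the scalar \emph{radial profile} $g:[0,\infty)\to[0,\infty)$ is
\begin{equation*}
g(r)=
\begin{dcases}
\frac{\bar\delta}{\underline\delta}\,r, & 0\le r\le \underline\delta/c_8,\\
\frac{\bar\delta}{c_8}\Big(1+2\ln\frac{c_8 r}{\underline\delta}\Big), & \underline\delta/c_8< r<\bar\delta/c_8,\\
r+\frac{2\bar\delta}{c_8}\ln\frac{\bar\delta}{\underline\delta}, & r\ge \bar\delta/c_8,
\end{dcases}
\end{equation*}
together with $T(0)=0$ and $g(0)=0$. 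This reduces the entire statement to analysing the one-dimensional profile $g$, which is the conceptual heart of the argument.

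Next I would verify that $g$ is continuous and strictly increasing on $[0,\infty)$. Continuity away from the two breakpoints is immediate; at $r=\underline\delta/c_8$ the first and second branches both give $g=\bar\delta/c_8$, and at $r=\bar\delta/c_8$ the second and third branches both give $g=\frac{\bar\delta}{c_8}\big(1+2\ln(\bar\delta/\underline\delta)\big)$, so $g$ glues continuously across the breakpoints. Strict monotonicity I would check branch-by-branch, since $g'(r)=\bar\delta/\underline\delta>0$ on the first branch, $g'(r)=2\bar\delta/(c_8 r)>0$ on the second, and $g'(r)=1>0$ on the third; combined with continuity this makes $g$ strictly increasing on the whole half-line. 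Because $g(0)=0$ and $g(r)\to\infty$ as $r\to\infty$ (from the third branch), $g$ is a continuous strictly increasing bijection of $[0,\infty)$ onto itself, so $g^{-1}$ exists and is continuous.

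The continuity and bijectivity of $T$ then follow from those of $g$. For continuity, $T(x)=g(\|x\|)\,x/\|x\|$ is a composition of continuous maps on $\mathbb{R}^{n_x}\setminus\{0\}$; at the origin $\|T(x)\|=g(\|x\|)\to g(0)=0$ as $x\to 0$, giving continuity there as well. For injectivity, if $T(x_1)=T(x_2)$ with both arguments nonzero, taking norms yields $g(\|x_1\|)=g(\|x_2\|)$, hence $\|x_1\|=\|x_2\|$ by injectivity of $g$, after which equality of the common direction forces $x_1=x_2$; the case of a zero argument is covered by the equivalence $g(r)=0\iff r=0$. For surjectivity, given $y\neq 0$ I would set $x=g^{-1}(\|y\|)\,y/\|y\|$ and verify $T(x)=y$, while $T(0)=0$ handles $y=0$.

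All computations here are elementary, so the only delicate point is the behaviour at the origin, where the direction $x/\|x\|$ is undefined. I rely on $g(0)=0$ to glue the radial formula to the value $T(0)=0$ continuously, and on the equivalence $g(r)=0\iff r=0$ to prevent the origin from sharing a preimage with any other point. This gluing at the origin is the step I would treat most carefully.
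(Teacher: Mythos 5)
Your proof is correct and follows essentially the same route as the paper's: both exploit the purely radial structure of $T$ and the strict monotonicity of its radial profile across the three regions, checking agreement at the two breakpoints. You package this slightly more cleanly by reducing everything to the scalar function $g$ and invoking $g^{-1}$, whereas the paper writes out explicit inverse formulas $T_i^{-1}$ on each piece, but the underlying argument is the same.
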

\begin{proof}
We first give an explicit expression of $T(\cdot)$. Note that if $x\neq 0$, we have
$$
T(x)=\dfrac{x}{\|x\|}\int_0^{\|x\|} \dfrac{1 + \mathbbm{1}(\underline\delta\le c_8 s\le\bar\delta)}{\beta(s)}\mathrm{d}s,
$$
where we define $\beta(s) = \beta(x)$ with $\|x\|=s$ for simplicity.

We consider three cases. For any $x\in\Omega_1:=\mathcal{B}(\underline\delta/c_8)$, it is easy to see that $T_1(x)=\bar\delta x/\underline\delta$, where $T_1(x)$ is the restriction of $T(x)$ on $\Omega_1$. Also, $T_1^{-1}(x)=\underline\delta x/\bar\delta$, for any $x\in\Omega_1':=\mathcal{B}(\bar\delta/c_8)$.

For any $x\in\Omega_2:=\mathrm{\mathbf{cl}}(\mathcal{B}(\bar\delta/c_8)\setminus\mathcal{B}(\underline\delta/c_8))$, we have
$$
\begin{aligned}
T_2(x)&=\dfrac{x}{\|x\|}\int_0^{\underline\delta/c_8} \dfrac{1}{\beta(s)}\mathrm{d}s+\dfrac{x}{\|x\|}\int_{\underline\delta/c_8}^{\|x\|} \dfrac{2\bar\delta}{\beta(s)}\mathrm{d}s\\
&=\dfrac{x}{\|x\|}\int_0^{\underline\delta/c_8}\dfrac{\bar\delta}{\underline\delta}\mathrm{d}s
+\dfrac{x}{\|x\|}\int_{\underline\delta/c_8}^{\|x\|} \dfrac{2\bar\delta}{c_8 s}\mathrm{d}s\\
&=\underbrace{\dfrac{\bar\delta}{c_8}\left(1+2\ln\left(\dfrac{\|x\|}{\underline\delta/c_8}\right)\right)}_{\rho_2(\|x\|)} \dfrac{x}{\|x\|}\\
\end{aligned}
$$

Then $T_2^{-1}(x)$ is given by
$$
T_2^{-1}(x)=\rho^{-1}_2(\|x\|)\dfrac{x}{\|x\|}=\dfrac{\underline\delta}{c_8}e^{(c_8\|x\|/\bar\delta-1)/2} \dfrac{x}{\|x\|}
$$
where $x\in\Omega_2':=\mathrm{\mathbf{cl}}\left(T(\mathcal{B}(\bar\delta/c_8))\setminus\Omega_1'\right).$ Observe that $T_1(x)=T_2(x)$ for any $x\in\Omega_1\cap\Omega_2$. 

For any $x\in\Omega_3:=\mathrm{\mathbf{cl}}(\mathcal{X}_\text{inv}\setminus\mathcal{B}(\bar\delta/c_8))$, we have
$$
\begin{aligned}
T_3(x)&=T\left(\dfrac{\bar\delta}{c_8}\dfrac{x}{\|x\|}\right)+\dfrac{x}{\|x\|}\int_{\bar\delta/c_8}^{\|x\|} \dfrac{1}{\beta(s)}\mathrm{d}s\\
&=\dfrac{\bar\delta}{c_8}\left(1+2\ln\left(\dfrac{\bar\delta}{\underline\delta}\right)\right)\dfrac{x}{\|x\|}+\dfrac{x}{\|x\|}\int_{\bar\delta/c_8}^{\|x\|}\mathrm{d}s\\
&=x + \underbrace{\dfrac{2\bar\delta}{c_8} \ln \left(\dfrac{\bar\delta}{\underline\delta}\right)}_{\bar\rho_3} \dfrac{x}{\|x\|}.
\end{aligned}
$$

The inverse function $T_3^{-1}(x)$ is given by
$$
T_3^{-1}(x)=x-\dfrac{2\bar\delta}{c_8} \ln \left(\dfrac{\bar\delta}{\underline\delta}\right) \dfrac{x}{\|x\|},
$$
where $x\in\Omega_3':=\mathrm{\mathbf{cl}}\left(T(\mathcal{X}_\text{inv})\setminus\Omega_2'\right).$

Observe that $T_2(x)=T_3(x)$ for any $x\in\Omega_2\cap\Omega_3$. Since $T_1(\cdot),T_2(\cdot),T_3(\cdot)$ are continuous and bijective and agree on the boundaries of their respective domains, and are strictly increasing in any radial direction, $T(\cdot)$ is continuous and bijective. Thus, $T^{-1}(\cdot)$ exists.
\end{proof}

\subsubsection{Proof of Lemma \ref{lem:3Lip5}}\label{sec:proof_3Lip5}
\begin{lemma}\label{lem:3Lip5}
$T(\mathcal{X}_\text{inv})\subseteq\mathcal{B}(D(\mathcal{X}_\text{inv})+2\bar\delta/c_8 \ln(\bar\delta/\underline\delta))$.
\end{lemma}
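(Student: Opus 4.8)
The plan is to exploit the purely radial structure of $T(\cdot)$. The explicit formula derived in the proof of Lemma~\ref{lem:T_bijective} shows that for every $x \neq 0$, the image $T(x)$ is a positive scalar multiple of $x/\|x\|$; consequently $\|T(x)\|$ depends only on $r := \|x\|$, and I would write $\|T(x)\| = \rho(r)$ for the associated radial profile. Reading $\rho$ off the three branches gives $\rho(r) = (\bar\delta/\underline\delta)\,r$ on $[0,\underline\delta/c_8]$, $\rho(r) = (\bar\delta/c_8)\bigl(1 + 2\ln(c_8 r/\underline\delta)\bigr)$ on $[\underline\delta/c_8,\bar\delta/c_8]$, and $\rho(r) = r + (2\bar\delta/c_8)\ln(\bar\delta/\underline\delta)$ on $[\bar\delta/c_8,\infty)$. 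Since Lemma~\ref{lem:T_bijective} already establishes that these pieces agree at the breakpoints and that $T$ is strictly increasing in each radial direction, $\rho$ is continuous and strictly increasing, whence $\sup_{x\in\mathcal{X}_\text{inv}}\|T(x)\| = \rho(D(\mathcal{X}_\text{inv}))$, the supremum being attained at a state of maximal norm $\|x\| = D(\mathcal{X}_\text{inv})$.

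First I would localize the maximizing state: because $\mathcal{X}_\text{inv}$ reaches out beyond radius $\bar\delta/c_8$ (consistently with the third branch of $\delta_2$ in \eqref{equ:relative_error} being active on $\mathcal{X}_\text{inv}$), we have $D(\mathcal{X}_\text{inv}) \ge \bar\delta/c_8$, so the state of largest norm lies in the outermost branch where $T(x) = x + (2\bar\delta/c_8)\ln(\bar\delta/\underline\delta)\,x/\|x\|$. A single application of the triangle inequality then yields $\|T(x)\| \le \|x\| + (2\bar\delta/c_8)\ln(\bar\delta/\underline\delta) \le D(\mathcal{X}_\text{inv}) + (2\bar\delta/c_8)\ln(\bar\delta/\underline\delta)$, which is exactly the claimed radius. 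Combining with the radial reduction above gives $\|T(x)\| \le D(\mathcal{X}_\text{inv}) + (2\bar\delta/c_8)\ln(\bar\delta/\underline\delta)$ for \emph{every} $x \in \mathcal{X}_\text{inv}$, i.e.\ the desired inclusion $T(\mathcal{X}_\text{inv}) \subseteq \mathcal{B}\bigl(D(\mathcal{X}_\text{inv}) + (2\bar\delta/c_8)\ln(\bar\delta/\underline\delta)\bigr)$.

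The only real obstacle is the bookkeeping that reduces the supremum over $\mathcal{X}_\text{inv}$ to a single evaluation in the outer branch — that is, certifying that the maximum of $\|T(x)\|$ cannot occur in an interior branch. I would discharge this entirely through the monotonicity of $\rho$, which I take directly from the strictly-increasing radial behavior already proved in Lemma~\ref{lem:T_bijective}, rather than comparing the three closed-form expressions piece by piece. Once monotonicity is granted, evaluating $\rho$ at $r = D(\mathcal{X}_\text{inv})$ suffices and the inclusion is immediate; the whole argument thus rests on the radial/monotone structure plus one triangle inequality, with no nontrivial computation required.
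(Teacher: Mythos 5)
Your reduction to the radial profile $\rho(r) := \|T(x)\|$ for $\|x\|=r$ is sound, and the monotonicity of $\rho$ does let you evaluate the supremum at $r = D(\mathcal{X}_\text{inv})$. The gap is the localization step: you assert $D(\mathcal{X}_\text{inv}) \ge \bar\delta/c_8$ so that the maximal-norm state falls in the outer branch, justifying this only by the third branch of $\delta_2$ in \eqref{equ:relative_error} "being active." Nothing in the paper guarantees this. The constant $c_8$ is only constrained from above by \eqref{equ:stability_condition}, so it may be chosen arbitrarily small, making $\bar\delta/c_8$ arbitrarily large relative to $D(\mathcal{X}_\text{inv})$; the third branch of $\delta_2$ is then vacuous and the maximizing radius lies in $\Omega_1$ or $\Omega_2$, where your argument provides no bound at all. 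The paper's proof instead applies the inequality $\|T(x)\| \le \|x\| + \bar\rho_3$ with $\bar\rho_3 = (2\bar\delta/c_8)\ln(\bar\delta/\underline\delta)$ to \emph{every} $x \in \mathcal{X}_\text{inv}$, irrespective of which branch it lies in, and this is the claim you actually need to establish on all three pieces, not just the outermost one.

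The patch is short. From the integral form of $T$, the radial derivative satisfies $\rho'(r) = (1+\mathbbm{1}(\underline\delta/c_8 \le r \le \bar\delta/c_8))/\beta(r) \ge 1$ since $\beta \le 1$, so $r \mapsto \rho(r) - r$ is nondecreasing; it equals $\bar\rho_3$ identically on the outer branch, hence $\rho(r) - r \le \bar\rho_3$ for all $r$. (Equivalently, one can verify the bound branchwise on $\Omega_1$ and $\Omega_2$ using $1 - t \le 2\ln(1/t)$ for $t \in (0,1]$.) With that uniform bound in hand, $\|T(x)\| \le \|x\| + \bar\rho_3 \le D(\mathcal{X}_\text{inv}) + \bar\rho_3$ for every $x \in \mathcal{X}_\text{inv}$ and the inclusion follows without any assumption on where $D(\mathcal{X}_\text{inv})$ sits relative to $\bar\delta/c_8$.
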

\begin{proof}
For any $x\in\mathcal{X}_\text{inv}$, $\|x\|\le D(\mathcal{X}_\text{inv})$, then $\|T(x)\|\le\|x\|+\bar\rho_3\le D(\mathcal{X}_\text{inv})+2\bar\delta/c_8 \ln(\bar\delta/\underline\delta)$, which is equivalent to $T(\mathcal{X}_\text{inv})\subseteq\mathcal{B}(D(\mathcal{X}_\text{inv})+2\bar\delta/c_8 \ln(\bar\delta/\underline\delta))$.
\end{proof}

\subsubsection{Proof of Lemma \ref{lem:3Lip4}}\label{sec:proof_3Lip4}
\begin{lemma}[State-dependent Lipschitz Continuity of $T^{-1}(\cdot)$]\label{lem:3Lip4}
There exists $\epsilon_\text{max}>0$ such that for any $\epsilon\in(0,\epsilon_\text{max})$, $i=1,2,3$, $x_0\in\mathcal{X}_\text{inv}$, $x_1,x_2\in \mathcal{B}(x_0,\epsilon)\cap\mathcal{X}_\text{inv}\cap\Omega_i$, we have
\begin{equation}\label{equ:3Lip4}
\|x_1-x_2\|/\beta(\|x_0\|-\epsilon)\le b\|T(x_1)-T(x_2)\|+c_{16}\epsilon^2,
\end{equation}
where $\Omega_1=\mathcal{B}(\underline\delta/c_8)$, $\Omega_2=\mathrm{\mathbf{cl}}\left(\mathcal{B}(\bar\delta/c_8)\setminus\Omega_1\right)$, $\Omega_3=\mathrm{\mathbf{cl}}(\mathcal{X}_\text{inv}\setminus\mathcal{B}(\bar\delta/c_8))$, $b = 1/(1 + \mathbb{I}(x_0\in\Omega_2))$, and $c_{16}$ is a positive constant. 
\end{lemma}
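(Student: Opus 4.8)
The plan is to reduce the vector inequality to a one-dimensional analysis of the radial profile of $T(\cdot)$ and then control the cross terms. By Lemma~\ref{lem:T_bijective}, on each $\Omega_i$ the map is purely radial, $T(x)=\rho_i(\|x\|)\,x/\|x\|$, with $\rho_1(r)=(\bar\delta/\underline\delta)r$, $\rho_2(r)=\tfrac{\bar\delta}{c_8}(1+2\ln(c_8 r/\underline\delta))$, and $\rho_3(r)=r+\tfrac{2\bar\delta}{c_8}\ln(\bar\delta/\underline\delta)$. The single identity driving the whole argument is that, by construction of $T$, the radial derivative obeys $\rho_i'(r)\,\beta(r)=1+\mathbb{I}(x\in\Omega_2)=1/b$; that is, the ``$1+\mathbbm{1}$'' factor in the integrand defining $T$ is precisely what forces $b=1/2$ to be the correct radial constant in $\Omega_2$ and $b=1$ elsewhere. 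First I would fix $x_0$ in the (relative) interior of $\Omega_i$ and choose $\epsilon_{\max}$ small enough that $\mathcal{B}(x_0,\epsilon)$ stays inside $\Omega_i$ and, for $i\in\{2,3\}$, bounded away from the origin, so that $\rho_i$ is $C^2$ there with $|\rho_i''|$ uniformly bounded.

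Next I would split the displacement into a radial and a tangential part through the exact identities
\begin{equation*}
\|x_1-x_2\|^2=(r_1-r_2)^2+2r_1r_2(1-\cos\theta),
\end{equation*}
\begin{equation*}
\|T(x_1)-T(x_2)\|^2=(\rho_i(r_1)-\rho_i(r_2))^2+2\rho_i(r_1)\rho_i(r_2)(1-\cos\theta),
\end{equation*}
where $r_j=\|x_j\|$ and $\theta$ is the angle between $x_1$ and $x_2$. For the radial part, a mean value expansion gives $\rho_i(r_1)-\rho_i(r_2)=\rho_i'(\xi)(r_1-r_2)$ with $\xi$ between $r_1$ and $r_2$; since $|r_j-\|x_0\||\le\epsilon$ and $|\rho_i''|$ is bounded, $\rho_i'(\xi)=\rho_i'(\|x_0\|)+O(\epsilon)$, so together with the identity $\rho_i'\beta=1/b$ and the monotonicity of $\beta$ the radial contribution to $\|x_1-x_2\|/\beta(\|x_0\|-\epsilon)$ is matched by $b\,\|T(x_1)-T(x_2)\|$ up to a term of order $\epsilon\cdot|r_1-r_2|=O(\epsilon^2)$, which I would absorb into $c_{16}\epsilon^2$.

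For the tangential part I would compare the stretch $\rho_i(r)/r$ against $\rho_i'(r)$: in $\Omega_1$ the map is a dilation so the two coincide, and in $\Omega_3$ one has $\rho_3(r)/r=1+\tfrac{2\bar\delta}{c_8 r}\ln(\bar\delta/\underline\delta)\ge 1=\rho_3'(r)$, so $T$ is tangentially expansive and the radial constant $b=1$ stays valid. I would then assemble the two contributions by a quadratic ($\sqrt{a+b}\le\sqrt a+\sqrt b$) comparison of the two displayed norms, replace every $\beta$ evaluated at a true radius by the uniform lower bound $\beta(\|x_0\|-\epsilon)$ (legitimate since $\beta$ is nondecreasing and $r_j\ge\|x_0\|-\epsilon$), and collect all second-order remainders---those from the $C^2$ expansion of $\rho_i$ and from replacing $\rho_i'(\xi)$, $\rho_i(r_j)$ by their values at $\|x_0\|$---into a single constant $c_{16}$ depending only on the uniform bounds of $\rho_i,\rho_i',\rho_i''$ over the compact region $\Omega_i$ (away from the origin for $i\in\{2,3\}$).

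The hard part is region $\Omega_2$. There the radial stretch is $\rho_2'(r)=2/\beta(r)$, which reproduces $b=1/2$ exactly, but the tangential stretch is $\rho_2(r)/r=\tfrac{1}{\beta(r)}(1+2\ln(c_8 r/\underline\delta))$, so the single factor $b=1/2$ is tight only once $1+2\ln(c_8 r/\underline\delta)\ge 2$. Reconciling the radial and tangential stretches into one factor $b$ while keeping the leftover strictly second order is the delicate step, and it is where the concavity and monotonicity of $\rho_2$, combined with a sufficiently small $\epsilon_{\max}$, must be exploited. A further subtlety, again concentrated in $\Omega_2$, is that the prefactor $1/\beta(\|x_0\|-\epsilon)\sim 1/\|x_0\|$ grows near the inner boundary, so one must verify that the remainder terms decay fast enough (remaining $O(\epsilon^2)$ even after multiplication by this prefactor) for the claimed uniform bound $c_{16}\epsilon^2$ to hold across all of $\Omega_2$.
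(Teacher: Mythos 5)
You have the right decomposition and you have correctly located the crux, but the proposal does not actually prove the $\Omega_2$ case, and the way you propose to finish it cannot work. Your own computation shows the tangential stretch in $\Omega_2$ is $\rho_2(r)/r=\beta(r)^{-1}\left(1+2\ln(c_8r/\underline\delta)\right)$, which is strictly smaller than the $2/\beta(r)$ demanded by $b=1/2$ whenever $c_8r/\underline\delta<e^{1/2}$. This deficit is not a second-order artifact that concavity or a smaller $\epsilon_{\max}$ can absorb: for $n_x\ge 2$, fix $r\in(\underline\delta/c_8,\,\underline\delta e^{1/2}/c_8)$ and take $x_0,x_1,x_2$ all on the sphere of radius $r$, with $x_1,x_2$ obtained from $x_0$ by rotations of angle $\pm\theta/2$ and $r\theta=\Theta(\epsilon)$. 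All hypotheses of the lemma hold with $i=2$ and $b=1/2$, the displacement is purely tangential so there is no radial contribution to trade against, and since $\beta$ is nondecreasing,
\begin{equation*}
\frac{\|x_1-x_2\|}{\beta(\|x_0\|-\epsilon)}-\frac{1}{2}\|T(x_1)-T(x_2)\|\;\ge\;\sin\!\left(\frac{\theta}{2}\right)\left(\frac{2r}{\beta(r)}-\rho_2(r)\right)=\sin\!\left(\frac{\theta}{2}\right)\frac{\bar\delta}{c_8}\left(1-2\ln\frac{c_8 r}{\underline\delta}\right),
\end{equation*}
which is $\Theta(\epsilon)$ with a positive constant, not $\mathcal{O}(\epsilon^2)$. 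So no argument along the lines you sketch can establish \eqref{equ:3Lip4} with $b=1/2$ uniformly over $\Omega_2$; the factor $1/2$ only becomes available once $1+2\ln(c_8r/\underline\delta)\ge 2$, exactly as you suspected.

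For comparison, the paper handles $\Omega_2$ by a different route: it writes $T(x)/2$ as a dilation $x/\beta(\|x_0\|-\underline\epsilon)$ plus a nonnegative radial offset $c_{12}(\|x_0\|)\,x/\|x\|$ plus an $\mathcal{O}(\epsilon^2)$ remainder, and then reuses the cross-term positivity argument from $\Omega_3$. That route sidesteps your radial-versus-tangential reconciliation, but it starts from the identity $T(x)/2=\tfrac{x}{\|x\|}\int_0^{\|x\|}\beta(s)^{-1}\,\mathrm{d}s$, which over-counts the segment $[0,\underline\delta/c_8]$ where the integrand defining $T$ is $1/\beta$ rather than $2/\beta$; restoring the missing term shifts $c_{12}$ by $-\bar\delta/(2c_8)$ and makes it negative precisely on $r<\underline\delta e^{1/2}/c_8$, the same region as your obstruction. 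The upshot is that your hesitation is not a failure to find the intended trick: the constant $b$ on $\Omega_2$ (equivalently, the statement near the inner boundary of $\Omega_2$, or the definition of $T$ there) needs to be revised before either your argument or the paper's can be completed.
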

\begin{proof}
We first divide the proof into three cases by the regions $\Omega_1,\Omega_2,\Omega_3$ defined in the proof of Lemma \ref{lem:T_bijective}.

For any $x_0,x_1,x_2\in \Omega_1$, one can verify that $\|T_1(x_1)-T_1(x_2)\|=\bar\delta\|x_1-x_2\|/\underline\delta=\|x_1-x_2\|/\beta(x_0)$.
Thus Eq. \eqref{equ:3Lip4} holds for any $x_0,x_1,x_2\in \Omega_1$.

For any $x_0,x_1,x_2\in \Omega_3$, we have
\begin{equation}\label{equ:3Lip5}
\begin{aligned}
&\|T(x_2)-T(x_1)\|^2\\
&=\left\|x_2+\bar\rho_3\dfrac{x_2}{\|x_2\|}-x_1-\bar\rho_3\dfrac{x_1}{\|x_1\|}\right\|^2\\
&=\,\|x_2-x_1\|^2 + 2\bar\rho_3\left\langle x_2-x_1,\dfrac{x_2}{\|x_2\|}-\dfrac{x_1}{\|x_1\|}\right\rangle \\
&~~~~~~ + \bar\rho_3^2\left\|\dfrac{x_2}{\|x_2\|}-\dfrac{x_1}{\|x_1\|}\right\|^2\\
&\ge \|x_2-x_1\|^2 + 2\bar\rho_3\left\langle x_2-x_1,\dfrac{x_2}{\|x_2\|}-\dfrac{x_1}{\|x_1\|}\right\rangle\\
&=\|x_2-x_1\|^2\\
&~~~~~~ + 2\bar\rho_3\left(\|x_2\|+\|x_1\|\right)\left(1-\left\langle \dfrac{x_1}{\|x_1\|},\dfrac{x_2}{\|x_2\|}\right\rangle\right)\\
&\ge \, \|x_2-x_1\|^2=\|x_2-x_1\|^2/\beta^2(x_0).
\end{aligned}
\end{equation}
Thus Eq. \eqref{equ:3Lip4} holds for any $x_0,x_1,x_2\in \Omega_3$.

For any $x\in\mathcal{\bar B}(x_0,\epsilon):=\mathcal{B}(x_0,\epsilon)\cap\Omega_2$, 
let $\underline{\epsilon}=\min\{\|x\|:\,x\in\mathcal{\bar B}(x_0,\epsilon)\}$, 
$\bar\epsilon=D(\mathcal{\bar B}(x_0,\epsilon))$, we have

$$
\begin{aligned}
\dfrac{T(x)}{2}
=&\,\dfrac{x}{\|x\|}\int_0^{\|x\|} \dfrac{1}{\beta(s)}-\dfrac{1}{\beta(\|x_0\|-\underline{\epsilon})}+\dfrac{1}{\beta(\|x_0\|-\underline{\epsilon})}\mathrm{d}s\\
=&\,\dfrac{x}{\|x\|}\int_0^{\|x\|} \dfrac{1}{\beta(s)}-\dfrac{1}{\beta(\|x_0\|-\underline{\epsilon})}\mathrm{d}s+\dfrac{x}{\beta(\|x_0\|-\underline{\epsilon})}\\
=&\,\dfrac{x}{\beta(\|x_0\|-\underline{\epsilon})}\\
&\,+\dfrac{x}{\|x\|}\underbrace{
\int_0^{\|x_0\|-\underline{\epsilon}} \dfrac{1}{\beta(s)}-\dfrac{1}{\beta(\|x_0\|-\underline{\epsilon})}\mathrm{d}s}_{c_{12}(\|x_0\|)}\\
&\,+\underbrace{\dfrac{x}{\|x\|}
\int_{\|x_0\|-\underline{\epsilon}}^{\|x\|} \dfrac{1}{\beta(s)}-\dfrac{1}{\beta(\|x_0\|-\underline{\epsilon})}\mathrm{d}s}_{c_{13}(x,x_0)}
\end{aligned}
$$
where $c_{12}(\|x_0\|)>0$ since $\beta$ is increasing, and $\|c_{13}(x,x_0)\|$ is quadratic in $\epsilon$ as follows:
$$
\begin{aligned}
\|c_{13}(x,x_0)\|= &\left|\int_{\|x_0\|-\epsilon}^{\|x\|} \dfrac{1}{\beta(s)}-\dfrac{1}{\beta(\|x_0\|-\underline{\epsilon})}\mathrm{d}s\right|\\
\le&\, 2\epsilon \int_{\|x_0\|-\underline{\epsilon}}^{\|x_0\|+\bar\epsilon} \max_{s\in[\|x_0\|-\underline{\epsilon},\|x_0\|+\bar\epsilon]}\left|\dfrac{\mathrm{d}}{\mathrm{d}s}\dfrac{1}{\beta(s)}\right|\mathrm{d}s\\
\le&\, \dfrac{4\epsilon^2}{(\|x_0\|-\epsilon)\beta(\|x_0\|-\epsilon)}\\
=&\, \dfrac{4\beta(\|x_0\|)}{(\|x_0\|-\epsilon)\beta(\|x_0\|-\epsilon)}
\dfrac{\epsilon^2}{\beta(\|x_0\|)}\\
\le&\, \dfrac{4\bar\delta}{(\underline\delta/c_8-\epsilon)\underline\delta}
\dfrac{\epsilon^2\bar\delta}{\underline\delta} \le\, c_{14}\epsilon^2,
\end{aligned}
$$
where $c_{14}$ is a constant, provided that $\epsilon_\text{max}\le\underline\delta/(2c_8)$.

Then we have
\begin{equation}\label{equ:3Lip6}
\|T(x)/2-\underbrace{\left(\dfrac{x}{\beta(\|x_0\|)}+c_{12}(\|x_0\|)\dfrac{x}{\|x\|}\right)}_{\hat{T}(x)}\|\le c_{14}\epsilon^2.
\end{equation}

Following a similar argument to \eqref{equ:3Lip5}, we have
\begin{equation}\label{equ:3Lip7}
\|\hat T(x_2)-\hat T(x_1)\|\ge\|x_2-x_1\|/\beta(\|x_0\|-\epsilon).
\end{equation}

Combining \eqref{equ:3Lip6} and \eqref{equ:3Lip7} yields that \eqref{equ:3Lip4} holds for any $x_0,x_1,x_2\in\mathcal{\bar B}(x_0,\epsilon)$.
\end{proof}

\subsubsection{Proof of Lemma \ref{lem:3Lip6}}\label{sec:proof_3Lip6}
\begin{proof}
We can divide $\mathcal{X}_\text{inv}$ into three regions $\Omega_1,\Omega_2,\Omega_3$  which are defined in the proof of Lemma \ref{lem:T_bijective}. 

\paragraph{Case 1} For any $y_1,y_2\in T(\Omega_1)$ or $y_1,y_2\in T(\Omega_3)$, we have 
$$
\begin{aligned}
&\|\widetilde{u}_\text{mpc}(y_1) - \widetilde{u}_\text{mpc}(y_2)\| \\
& \le \mathcal{L}(u_\text{mpc},\mathcal{X}_\text{inv})\|T^{-1}(y_1) - T^{-1}(y_2)\|/\beta(\|T^{-1}(y_1)\|) \\
& \le \mathcal{L}(u_\text{mpc},\mathcal{X}_\text{inv})\|y_1 - y_2\|
\end{aligned}
$$
where the last inequality holds since $\beta(x) = \underline\delta/\bar\delta, T(x) = \underline\delta x/\bar\delta$ for $x\in\Omega_1$ and $T(x) = x, \beta(x) = 1$ for $x\in\Omega_3$. Note that the projection parameter is constant in these regions, so the Lipschitz constant is preserved (projection is non-expansive).
Thus, $\mathcal{L}(\widetilde{u}_\text{mpc},T(\Omega_i)) \leq \mathcal{L}(u_\text{mpc},\mathcal{X}_\text{inv})$, $i=1,3$.

\paragraph{Case 2} In this case, we analyze the Lipschitz continuity of $\widetilde{u}_\text{mpc}$ in $T(\Omega_2)$ by considering the pre-image states in $\Omega_2$. Let $P(x) = \Pi_{\mathcal{U}'(\epsilon\beta(x))}(u_\text{mpc}(x))$. We first derive the Lipschitz constant of $P(x)/\beta(x)$. For any $x_1,x_2\in\Omega_2$ and $\|x_1\|\leq\|x_2\|$, we have
$$
\begin{aligned}
&\|P(x_1)/\beta(\|x_1\|)-P(x_2)/\beta(\|x_2\|)\|\\
&\le  \left\|\dfrac{P(x_1)}{\beta(\|x_1\|)}-\dfrac{P(x_1)}{\beta(\|x_2\|)}\right\|+\left\|\dfrac{P(x_1)}{\beta(\|x_2\|)}-\dfrac{P(x_2)}{\beta(\|x_2\|)}\right\|\\
&\le\|P(x_1)\|\left|\dfrac{1}{\beta(\|x_1\|)}-\dfrac{1}{\beta(\|x_2\|)}\right|\\
&~~\;+\|P(x_1)-P(x_2)\|/\beta(\|x_2\|).
\end{aligned}
$$
Since $0 \in \mathcal{U}'(\epsilon\beta(x))$, the projection is norm-contractive, i.e., $\|P(x_1)\| \le \|u_\text{mpc}(x_1)\| \le \mathcal{L}(u_\text{mpc},\mathcal{X}_\text{inv})\|x_1\|$. Also, the projection is Lipschitz in parameters: $\|P(x_1)-P(x_2)\| \le \|u_\text{mpc}(x_1)-u_\text{mpc}(x_2)\| + |\epsilon\beta(\|x_1\|)-\epsilon\beta(\|x_2\|)| \le (\mathcal{L}(u_\text{mpc},\mathcal{X}_\text{inv}) + \epsilon c_8/\bar\delta)\|x_1-x_2\|$.
Substituting these bounds, and noting that $\beta$ is an increasing function and $1/\beta$ is differentiable on $(\underline\delta/c_8,\bar\delta/c_8)$, we have
$$
\begin{aligned}
&\|x_1\|\left|\dfrac{1}{\beta(\|x_1\|)}-\dfrac{1}{\beta(\|x_2\|)}\right|\\
&\le \|x_1\|\|x_1-x_2\| \max_{s\in[\|x_1\|,\|x_2\|]}\left|\dfrac{\mathrm{d}}{\mathrm{d}s}\dfrac{\bar\delta}{c_8 s}\right|\\
&=  \|x_1\|\|x_1-x_2\| \max_{s\in[\|x_1\|,\|x_2\|]}\dfrac{\bar\delta}{ c_8 s^2}\\
&=  \dfrac{\bar\delta}{c_8\|x_1\|}\|x_1-x_2\| = \dfrac{1}{\beta(\|x_1\|)}\|x_1-x_2\|.
\end{aligned}
$$
Thus, it follows that
$$
\begin{aligned}
&\|P(x_1)/\beta(\|x_1\|)-P(x_2)/\beta(\|x_2\|)\|\\
&\le \mathcal{L}(u_\text{mpc},\mathcal{X}_\text{inv})\|x_1-x_2\|/\beta(\|x_1\|)\\
&~~+(\mathcal{L}(u_\text{mpc},\mathcal{X}_\text{inv}) + \epsilon c_8/\bar\delta)\|x_1-x_2\|/\beta(\|x_2\|)\\
&\le (2\mathcal{L}(u_\text{mpc},\mathcal{X}_\text{inv}) + \epsilon c_8/\bar\delta)\|x_1-x_2\|/\beta(\|x_0\|-\epsilon).
\end{aligned}
$$
By Lemma \ref{lem:3Lip4}, we have
$$
\begin{aligned}
&\|P(x_1)/\beta(\|x_1\|)-P(x_2)/\beta(\|x_2\|)\|\\
 &\le (2\mathcal{L}(u_\text{mpc},\mathcal{X}_\text{inv}) + \epsilon c_8/\bar\delta)\left(\|T(x_1)/2-T(x_2)/2\|+c_{16}\epsilon^2\right)\\
 &= (\mathcal{L}(u_\text{mpc},\mathcal{X}_\text{inv}) + \epsilon c_8/(2\bar\delta))\|T(x_1)-T(x_2)\|+\mathcal{O}(\epsilon^2).
\end{aligned}
$$
Then let $x_1=T^{-1}(y_1),x_2=T^{-1}(y_2)$, we have
\begin{equation}\label{equ:3Lip9}
\begin{aligned}
&\|\widetilde{u}_\text{mpc}(y_1)-\widetilde{u}_\text{mpc}(y_2)\|\\
& \le (\mathcal{L}(u_\text{mpc},\mathcal{X}_\text{inv}) + \epsilon c_8/(2\bar\delta))\|y_1-y_2\|+\mathcal{O}(\epsilon^2)\\
\end{aligned}
\end{equation}

Finally, we need to ensure that $T^{-1}(y_1),T^{-1}(y_2)\in\mathcal{B}(x_0,\epsilon)$, which is not necessarily true for general $y_1,y_2\in T(\Omega_2)$. To ensure this, we select a series of intermediate points $y_1=\bar y_0,\bar y_1,\cdots,\bar y_{n_y}=y_2$ such that the distance between any two points is no greater than $\epsilon$. Since Eq. \eqref{equ:3Lip4} holds, when $\epsilon$ is small enough, we have $\|T^{-1}(y_1)-T^{-1}(y_2)\|\le \|y_1-y_2\|$. Thus, we select the number of points $n_y=\lceil\|y_1-y_2\|/\epsilon\rceil$, and we have
$$
\begin{aligned}
&\|\widetilde{u}_\text{mpc}(y_1)-\widetilde{u}_\text{mpc}(y_2)\|\\
&\le \sum_{i=0}^{n_y-1}\|\widetilde{u}_\text{mpc}(\bar y_{i+1})-\widetilde{u}_\text{mpc}(\bar y_i)\|\\
&\le \sum_{i=0}^{n_y-1}(\mathcal{L}(u_\text{mpc},\mathcal{X}_\text{inv}) + \epsilon c_8/(2\bar\delta))\|\bar y_{i+1}-\bar y_i\|+\mathcal{O}(\epsilon^2)\\
&\le (\mathcal{L}(u_\text{mpc},\mathcal{X}_\text{inv}) + \epsilon c_8/(2\bar\delta))\|y_1-y_2\|+\mathcal{O}(\epsilon^2) \lceil\|y_1-y_2\|/\epsilon\rceil\\
&\le \mathcal{L}(u_\text{mpc},\mathcal{X}_\text{inv})\|y_1-y_2\|+c_{17}\epsilon
\end{aligned}
$$
for any $\epsilon$ sufficiently small, where $c_{17}$ is a constant. Thus, Eq. \eqref{equ:3Lip8} holds.
\end{proof}

\subsubsection{Proof of Lemma \ref{lem:transformation}}\label{sec:proof_transformation}

Let $y=T(x)$ and define the tightened target $u^\epsilon_\text{mpc}(x) := \Pi_{\mathcal{U}'(\epsilon\beta(x))}(u_\text{mpc}(x))$. Based on \eqref{equ:transformation}, we derive:
\begin{align*}
    \|u_\text{nn}(x)-u^\epsilon_\text{mpc}(x)\| 
    &=\|\widetilde{u}_\text{nn}(y)\beta(x) - u^\epsilon_\text{mpc}(x)\| \\
    &=\beta(x) \left\| \widetilde{u}_\text{nn}(y) - \frac{u^\epsilon_\text{mpc}(T^{-1}(y))}{\beta(\|T^{-1}(y)\|)} \right\| \\
    &=\beta(x) \|\widetilde{u}_\text{nn}(y) - \widetilde{u}_\text{mpc}(y)\| \\
    &\le \beta(x) \epsilon.
\end{align*}
Furthermore, leveraging Lemma \ref{lem:3Lip2}, the error is bounded by:
\begin{align*}
&\|u_\text{nn}(x)-u_\text{mpc}(x)\| \\
&\le \|u_\text{nn}(x)-u^\epsilon_\text{mpc}(x)\| + \|u^\epsilon_\text{mpc}(x)-u_\text{mpc}(x)\| \\
&\le \beta(x)\epsilon + r(\mathcal{U},\epsilon\beta(x)) \\
&\le \beta(x)\epsilon + \epsilon\beta(x) {D(\mathcal{U})}/{d(\mathcal{U})} \\
&= \beta(x)\epsilon \left(1 + {D(\mathcal{U})}/{d(\mathcal{U})}\right) \le \beta(x)\bar\delta = \delta_2(x).
\end{align*}
Since $u^\epsilon_\text{mpc}(x) \in \mathcal{U}'(\epsilon\beta(x))$ and the deviation is bounded by the tightening margin $\epsilon\beta(x)$, it follows that $u_\text{nn}(x) \in \mathcal{U}$. 

\subsubsection{Proof of Theorem \ref{thm:main2}}\label{sec:proof_thm2}
\begin{proof}
The convergence analysis hinges on a two-stage argument based on the state magnitude. 

\textit{Stage 1 (Non-uniform Error Regime):} Consider any state outside the ball $\mathcal{B}(\underline\delta/c_8)$. In this region, Lemma \ref{lem:transformation} guarantees a non-uniform error bound $\|u_\text{nn}(x) - u_\text{mpc}(x)\| \le c_8\|x\|$. As established in Section \ref{sec:motivation}, satisfying condition \eqref{equ:stability_condition} ensures that the Lyapunov function decreases, driving the trajectory towards the origin until it enters $\mathcal{B}(\underline\delta/c_8)$.

\textit{Stage 2 (Uniform Error Regime):} Once the state enters $\mathcal{B}(\underline\delta/c_8)$, the approximation error is governed by the absolute bound $\|u_\text{nn}(x) - u_\text{mpc}(x)\| \le \underline\delta$. Applying Lemma \ref{lem:convergence} with an equivalent uniform error $\delta_1 = \underline\delta$, the system is guaranteed to converge to the invariant set $\mathcal{V}(c_6 \underline\delta^2)$. From the properties of the MPC value function, any state within this set satisfies $\|x\| \le \sqrt{c_6/c_1} \cdot \underline\delta$. The condition $\underline\delta \le \sqrt{c_1/c_6} d(\mathcal{X}_\text{inv}')$ ensures that this terminal set lies strictly within $\mathcal{X}_\text{inv}'$. Note that if the state temporarily exits $\mathcal{B}(\underline\delta/c_8)$, the non-uniform error bound in Stage 1 forces its return.

The network complexity bound is derived by adapting the proof of Theorem \ref{thm:main}. The key modification is that we approximate $\widetilde{u}_\text{mpc}$ over $T(\mathcal{X}_\text{inv})$. Consequently, the radius of $D(\mathcal{X}_\text{inv})$ in \eqref{equ:main} is replaced by the radius of the set $D(T(\mathcal{X}_\text{inv}))$. Lemma \ref{lem:3Lip6} ensures that the Lipschitz constant remains bounded by $\mathcal{L}(u_\text{mpc})$. Finally, Lemma \ref{lem:3Lip5} bounds the domain radius $D(T(\mathcal{X}_\text{inv}))/\bar\delta$ by the term $D_2$, yielding the final expression in \eqref{equ:2main}.
\end{proof}

\subsection{ReLU NN Approximation}\label{sec:nn}
To facilitate reading, we include the following lemma on ReLU NN approximation from \cite{shen2022optimal}.
\begin{lemma}[{\cite[Theorem 1.1]{shen2022optimal}}]\label{lem:3nn}
Given $f \in C([0,1]^{n_x})$, for any $L',N' \in \mathbb{N}_+$, and for $p \in [1, \infty]$, there exists a ReLU NN $\phi$ with width $C_1\max\{{n_x}\lfloor N'^{1/{n_x}}\rfloor, N'+2\}$ and depth $11L' + C_2$, such that
\begin{equation}\label{equ:3nn}
\|f - \phi\|_{L^p([0,1]^{n_x})} \leq 131\sqrt{{n_x}}~\omega_f((N'^2L'^2\log_3(N'+2))^{-1/{n_x}}),
\end{equation}
if $p \in [1, \infty)$, then $C_1 = 16$ and $C_2 = 18$; if $p = \infty$, then $C_1 = 3^{{n_x}+3}$ and $C_2 = 18 + 2{n_x}$.
\end{lemma}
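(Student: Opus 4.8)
The plan is to follow the constructive grid-quantization strategy underlying \cite{shen2022optimal}: partition the cube, approximate $f$ by a piecewise-constant function, and realize that piecewise-constant function by a ReLU network whose complexity is dominated by a bit-extraction (point-fitting) subnetwork. Concretely, I would first fix an integer resolution $K$ with $K^{n_x}\approx N'^2L'^2\log_3(N'+2)$ and partition $[0,1]^{n_x}$ into $K^{n_x}$ axis-aligned cubes $\{Q_\beta\}_{\beta\in\{0,\dots,K-1\}^{n_x}}$ of side $1/K$, choosing a representative point $x_\beta\in Q_\beta$. The target is the piecewise-constant map $g(x):=f(x_\beta)$ for $x\in Q_\beta$. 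Since each cube has diameter $\sqrt{n_x}/K$, the definition of the continuity modulus gives $|f(x)-g(x)|\le\omega_f(\sqrt{n_x}/K)$ on the cube interiors; using subadditivity $\omega_f(\lambda t)\le(\lambda+1)\omega_f(t)$ to pull out the $\sqrt{n_x}$ factor, together with $K^{-1}\le(N'^2L'^2\log_3(N'+2))^{-1/n_x}$, accounts for the $131\sqrt{n_x}$ prefactor and the stated argument up to constant bookkeeping.

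The network is assembled as a composition of two blocks. The first block $\Phi_{\mathrm{idx}}$ computes the cube index $\beta(x)=(\lfloor Kx_1\rfloor,\dots,\lfloor Kx_{n_x}\rfloor)$ and flattens it to a scalar $m(x)=\sum_i\beta_i K^{i-1}\in\{0,\dots,K^{n_x}-1\}$; a continuous ReLU map cannot realize a floor exactly, so I would realize it up to a thin "trifling" slab around the cube boundaries using the standard sawtooth/step gadget, computing all $n_x$ coordinates in parallel (this is the source of the width term $\sim n_x\lfloor N'^{1/n_x}\rfloor$). The second block $\Phi_{\mathrm{val}}$ is a lookup table mapping each integer $m$ to the stored value $f(x_\beta)$; the whole point is to realize a table of length $K^{n_x}\approx N'^2L'^2\log_3(N'+2)$ using only width $O(N')$ and depth $O(L')$. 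Composing $\Phi_{\mathrm{val}}\circ\Phi_{\mathrm{idx}}$ and appending an affine layer to restore the range of $f$ yields $\phi$, which coincides with $g$ off the trifling region, hence the modulus-of-continuity error bound.

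The main obstacle — and the source of the exponent $2$ on both $N'$ and $L'$ together with the $\log_3(N'+2)$ gain — is the point-fitting block $\Phi_{\mathrm{val}}$. The key sub-lemma to establish is that a ReLU network of width $O(N')$ and depth $O(L')$ can interpolate $N'^2L'^2\lfloor\log_3(N'+2)\rfloor$ arbitrary prescribed values at consecutive integers. I would prove this by a two-level bit-extraction argument: encode the list of targets as the digits of a single real number in the base matched to the ReLU multiply-and-shift primitive (the base-$3$ expansion is what produces the $\log_3$ factor), then use width to extract $O(N')$ digits in parallel and depth to compose $O(L')$ extractions, so the realizable table length multiplies as $(N'^2)\cdot(L'^2)\cdot\log_3(N'+2)$. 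Verifying that each extraction/shift step is an exact fixed-depth ReLU map, and that the full composition stays within width $\sim N'+2$ and depth $11L'$, is the delicate bookkeeping that underlies the constants $C_2=18$ and the $11L'$ depth term.

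A secondary difficulty specific to the uniform norm ($p=\infty$) is removing the trifling region, since the naive construction only controls $\|f-\phi\|_{L^p}$ for $p<\infty$, where the slab has small measure. For $p=\infty$ I would superimpose shifted copies of the grid, one ternary shift per coordinate direction, and select via a small $\max$/selection ReLU gadget a copy whose boundary slab avoids the query point; this multiplies the width by the factor $3^{n_x+3}$ and adds $2n_x$ to the depth, matching $C_1=3^{n_x+3}$, $C_2=18+2n_x$ in the $p=\infty$ case versus $C_1=16$, $C_2=18$ for $p\in[1,\infty)$. Collecting the depth contributions of $\Phi_{\mathrm{idx}}$, the $11L'$ from $\Phi_{\mathrm{val}}$, and these additive corrections, and taking the width as the maximum of the index-block and value-block widths, yields the final complexity expressions; I note that only the constructive upper bound is needed here, so the VC-dimension tightness argument may be omitted.
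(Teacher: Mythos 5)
You should first note that the paper itself does not prove Lemma \ref{lem:3nn} at all: it is quoted verbatim from \cite[Theorem 1.1]{shen2022optimal} and used as a black box in the proofs of Theorems \ref{thm:main} and \ref{thm:main2}, so the only meaningful comparison is with the proof in that cited reference. Your sketch does reproduce the architecture of that proof: a uniform grid of $\approx N'^2L'^2\log_3(N'+2)$ cells, a parallel step-function (index) block of width $\sim n_x\lfloor N'^{1/n_x}\rfloor$, a bit-extraction ``point-fitting'' block whose capacity is the source of the $N'^2L'^2\log_3(N'+2)$ term, a trifling region of small measure handling $p\in[1,\infty)$, and a shifted-grid construction explaining the constants $C_1=3^{n_x+3}$, $C_2=18+2n_x$ for $p=\infty$.

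Two steps, however, would fail as you stated them. First, your ``key sub-lemma'' --- that a ReLU network of width $O(N')$ and depth $O(L')$ can \emph{exactly} interpolate $N'^2L'^2\lfloor\log_3(N'+2)\rfloor$ \emph{arbitrary} prescribed values --- is false: such a network has only $O(N'^2L')$ parameters, and the evaluation map at $k$ fixed integer inputs is a piecewise-polynomial (hence semialgebraic) map of the parameters, so its image has dimension at most $O(N'^2L')$ and cannot cover all target vectors once $k\approx N'^2L'^2\log_3(N'+2)\gg N'^2L'$. The correct statement, and the one proved in \cite{shen2022optimal}, fits values \emph{quantized} to a prescribed precision (digit strings recovered by bit extraction); in the application one quantizes $f(x_\beta)$ to accuracy comparable with $\omega_f(K^{-1})$ and absorbs the quantization error into the final bound. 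Your encoding of the targets ``as the digits of a single real number'' implicitly does this, but the sub-lemma must be stated for quantized targets, otherwise it is unprovable. Second, for $p=\infty$ the construction cannot ``select a copy whose boundary slab avoids the query point'': the map from $x$ to the index of a valid copy is discontinuous, so no ReLU gadget realizes it exactly. The actual device in \cite{shen2022optimal} is the mid/median function applied coordinate-by-coordinate to three shifted copies: since at most one of the three shifts per coordinate places $x$ in its trifling slab, at least two copies return the correct value, and the median of three numbers is an exact ReLU-implementable function via $\max$ and $\min$. With these two repairs (and the considerable bookkeeping behind the constants $131$ and $11L'+18$), your outline matches the cited proof.
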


\bibliographystyle{IEEEtran}
\bibliography{mybib2}

\begin{thebibliography}{10}
\providecommand{\url}[1]{#1}
\csname url@samestyle\endcsname
\providecommand{\newblock}{\relax}
\providecommand{\bibinfo}[2]{#2}
\providecommand{\BIBentrySTDinterwordspacing}{\spaceskip=0pt\relax}
\providecommand{\BIBentryALTinterwordstretchfactor}{4}
\providecommand{\BIBentryALTinterwordspacing}{\spaceskip=\fontdimen2\font plus
\BIBentryALTinterwordstretchfactor\fontdimen3\font minus \fontdimen4\font\relax}
\providecommand{\BIBforeignlanguage}[2]{{%
\expandafter\ifx\csname l@#1\endcsname\relax
\typeout{** WARNING: IEEEtran.bst: No hyphenation pattern has been}%
\typeout{** loaded for the language `#1'. Using the pattern for}%
\typeout{** the default language instead.}%
\else
\language=\csname l@#1\endcsname
\fi
#2}}
\providecommand{\BIBdecl}{\relax}
\BIBdecl

\bibitem{chen2012distributed}
X.~Chen, M.~Heidarinejad, J.~Liu, and P.~D. Christofides, ``Distributed economic {MPC}: Application to a nonlinear chemical process network,'' \emph{Journal of Process Control}, vol.~22, no.~4, pp. 689--699, 2012.

\bibitem{venkat2008distributed}
A.~N. Venkat, I.~A. Hiskens, J.~B. Rawlings, and S.~J. Wright, ``Distributed {MPC} strategies with application to power system automatic generation control,'' \emph{IEEE Transactions on Control Systems Technology}, vol.~16, no.~6, pp. 1192--1206, 2008.

\bibitem{nubert2020safe}
J.~Nubert, J.~K{\"o}hler, V.~Berenz, F.~Allg{\"o}wer, and S.~Trimpe, ``Safe and fast tracking on a robot manipulator: Robust {MPC} and neural network control,'' \emph{IEEE Robotics and Automation Letters}, vol.~5, no.~2, pp. 3050--3057, 2020.

\bibitem{cheng2019longitudinal}
S.~Cheng, L.~Li, H.-Q. Guo, Z.-G. Chen, and P.~Song, ``Longitudinal collision avoidance and lateral stability adaptive control system based on {MPC} of autonomous vehicles,'' \emph{IEEE Transactions on Intelligent Transportation Systems}, vol.~21, no.~6, pp. 2376--2385, 2019.

\bibitem{nguyen2024tinympc}
K.~Nguyen, S.~Schoedel, A.~Alavilli, B.~Plancher, and Z.~Manchester, ``{TinyMPC}: Model-predictive control on resource-constrained microcontrollers,'' in \emph{International Conference on Robotics and Automation (ICRA)}.\hskip 1em plus 0.5em minus 0.4em\relax IEEE, 2024, pp. 1--7.

\bibitem{bemporad2002explicit}
A.~Bemporad, M.~Morari, V.~Dua, and E.~N. Pistikopoulos, ``The explicit linear quadratic regulator for constrained systems,'' \emph{Automatica}, vol.~38, no.~1, pp. 3--20, 2002.

\bibitem{geyer2008optimal}
T.~Geyer, F.~D. Torrisi, and M.~Morari, ``Optimal complexity reduction of polyhedral piecewise affine systems,'' \emph{Automatica}, vol.~44, no.~7, pp. 1728--1740, 2008.

\bibitem{tondel2003evaluation}
P.~T{\o}ndel, T.~A. Johansen, and A.~Bemporad, ``Evaluation of piecewise affine control via binary search tree,'' \emph{Automatica}, vol.~39, no.~5, pp. 945--950, 2003.

\bibitem{bayat2011using}
F.~Bayat, T.~A. Johansen, and A.~A. Jalali, ``Using hash tables to manage the time-storage complexity in a point location problem: Application to explicit model predictive control,'' \emph{Automatica}, vol.~47, no.~3, pp. 571--577, 2011.

\bibitem{alessio2009survey}
A.~Alessio and A.~Bemporad, ``A survey on explicit model predictive control,'' \emph{Nonlinear Model Predictive Control: Towards New Challenging Applications}, pp. 345--369, 2009.

\bibitem{bemporad2021explicit}
A.~Bemporad, ``Explicit model predictive control,'' in \emph{Encyclopedia of Systems and Control}.\hskip 1em plus 0.5em minus 0.4em\relax Springer, 2021, pp. 744--751.

\bibitem{tokmak2023automatic}
A.~Tokmak, C.~Fiedler, M.~N. Zeilinger, S.~Trimpe, and J.~K{\"o}hler, ``Automatic nonlinear mpc approximation with closed-loop guarantees,'' \emph{IEEE Transactions on Automatic Control}, 2025.

\bibitem{jones2010polytopic}
C.~N. Jones and M.~Morari, ``Polytopic approximation of explicit model predictive controllers,'' \emph{IEEE Transactions on Automatic Control}, vol.~55, no.~11, pp. 2542--2553, 2010.

\bibitem{johansen2003approximate}
T.~A. Johansen and A.~Grancharova, ``Approximate explicit constrained linear model predictive control via orthogonal search tree,'' \emph{IEEE Transactions on Automatic Control}, vol.~48, no.~5, pp. 810--815, 2003.

\bibitem{quan2019approximate}
Y.~S. Quan and C.~C. Chung, ``Approximate model predictive control with recurrent neural network for autonomous driving vehicles,'' in \emph{58th Annual Conference of the Society of Instrument and Control Engineers of Japan (SICE)}.\hskip 1em plus 0.5em minus 0.4em\relax IEEE, 2019, pp. 1076--1081.

\bibitem{kumar2018deep}
S.~S.~P. Kumar, A.~Tulsyan, B.~Gopaluni, and P.~Loewen, ``A deep learning architecture for predictive control,'' \emph{IFAC-PapersOnLine}, vol.~51, no.~18, pp. 512--517, 2018.

\bibitem{lu2014convergence}
Y.~Lu, D.~Li, Z.~Xu, and Y.~Xi, ``Convergence analysis and digital implementation of a discrete-time neural network for model predictive control,'' \emph{IEEE Transactions on Industrial Electronics}, vol.~61, no.~12, pp. 7035--7045, 2014.

\bibitem{wang2021model}
D.~Wang, Z.~J. Shen, X.~Yin, S.~Tang, X.~Liu, C.~Zhang, J.~Wang, J.~Rodriguez, and M.~Norambuena, ``Model predictive control using artificial neural network for power converters,'' \emph{IEEE Transactions on Industrial Electronics}, vol.~69, no.~4, pp. 3689--3699, 2021.

\bibitem{dong2023standoff}
F.~Dong, X.~Li, K.~You, and S.~Song, ``Standoff tracking using dnn-based mpc with implementation on fpga,'' \emph{IEEE Transactions on Control Systems Technology}, vol.~31, no.~5, pp. 1998--2010, 2023.

\bibitem{lucia2020deep}
S.~Lucia, D.~Navarro, B.~Karg, H.~Sarnago, and O.~Lucia, ``Deep learning-based model predictive control for resonant power converters,'' \emph{IEEE Transactions on Industrial Informatics}, vol.~17, no.~1, pp. 409--420, 2020.

\bibitem{krishnamoorthy2021sensitivity}
D.~Krishnamoorthy, ``A sensitivity-based data augmentation framework for model predictive control policy approximation,'' \emph{IEEE Transactions on Automatic Control}, vol.~67, no.~11, pp. 6090--6097, 2021.

\bibitem{chen2022large}
S.~W. Chen, T.~Wang, N.~Atanasov, V.~Kumar, and M.~Morari, ``Large scale model predictive control with neural networks and primal active sets,'' \emph{Automatica}, vol. 135, p. 109947, 2022.

\bibitem{chen2018approximating}
S.~Chen, K.~Saulnier, N.~Atanasov, D.~D. Lee, V.~Kumar, G.~J. Pappas, and M.~Morari, ``Approximating explicit model predictive control using constrained neural networks,'' in \emph{Annual American control conference (ACC)}.\hskip 1em plus 0.5em minus 0.4em\relax IEEE, 2018, pp. 1520--1527.

\bibitem{hertneck2018learning}
M.~Hertneck, J.~K{\"o}hler, S.~Trimpe, and F.~Allg{\"o}wer, ``Learning an approximate model predictive controller with guarantees,'' \emph{IEEE Control Systems Letters}, vol.~2, no.~3, pp. 543--548, 2018.

\bibitem{fabiani2022reliably}
F.~Fabiani and P.~J. Goulart, ``Reliably-stabilizing piecewise-affine neural network controllers,'' \emph{IEEE Transactions on Automatic Control}, vol.~68, no.~9, pp. 5201--5215, 2022.

\bibitem{schwan2023stability}
R.~Schwan, C.~N. Jones, and D.~Kuhn, ``Stability verification of neural network controllers using mixed-integer programming,'' \emph{IEEE Transactions on Automatic Control}, vol.~68, no.~12, pp. 7514--7529, 2023.

\bibitem{gonzalez2023neural}
C.~Gonzalez, H.~Asadi, L.~Kooijman, and C.~P. Lim, ``Neural networks for fast optimisation in model predictive control: A review,'' \emph{arXiv preprint arXiv:2309.02668}, 2023.

\bibitem{karg2023efficient}
M.~S.~B. Karg, ``Efficient approximations of model predictive control laws via deep learning,'' Ph.D. dissertation, Technische Universit{\"a}t Dortmund, 2023.

\bibitem{shen2022optimal}
Z.~Shen, H.~Yang, and S.~Zhang, ``Optimal approximation rate of {ReLU} networks in terms of width and depth,'' \emph{Journal de Math{\'e}matiques Pures et Appliqu{\'e}es}, vol. 157, pp. 101--135, 2022.

\bibitem{opschoor2022exponential}
J.~A. Opschoor, C.~Schwab, and J.~Zech, ``Exponential relu dnn expression of holomorphic maps in high dimension,'' \emph{Constructive Approximation}, vol.~55, no.~1, pp. 537--582, 2022.

\bibitem{ali2021approximation}
M.~Ali and A.~Nouy, ``Approximation of smoothness classes by deep rectifier networks,'' \emph{SIAM Journal on Numerical Analysis}, vol.~59, no.~6, pp. 3032--3051, 2021.

\bibitem{berner2021modern}
J.~Berner, P.~Grohs, G.~Kutyniok, and P.~Petersen, ``The modern mathematics of deep learning,'' \emph{arXiv preprint arXiv:2105.04026}, pp. 86--114, 2021.

\bibitem{devore2021neural}
R.~DeVore, B.~Hanin, and G.~Petrova, ``Neural network approximation,'' \emph{Acta Numerica}, vol.~30, pp. 327--444, 2021.

\bibitem{borrelli2017predictive}
F.~Borrelli, A.~Bemporad, and M.~Morari, \emph{Predictive control for linear and hybrid systems}.\hskip 1em plus 0.5em minus 0.4em\relax Cambridge University Press, 2017.

\bibitem{rawlings2017model}
J.~B. Rawlings, D.~Q. Mayne, and M.~Diehl, \emph{Model predictive control: theory, computation, and design}.\hskip 1em plus 0.5em minus 0.4em\relax Nob Hill Publishing Madison, WI, 2017, vol.~2.

\bibitem{li2025relu}
X.~Li and K.~You, ``{ReLU} neural networks for approximating model predictive control: Complexity and stability guarantees,'' in \emph{American Control Conference (ACC)}.\hskip 1em plus 0.5em minus 0.4em\relax IEEE, 2025, pp. 27--32.

\bibitem{you2011attainability}
K.~You, W.~Su, M.~Fu, and L.~Xie, ``Attainability of the minimum data rate for stabilization of linear systems via logarithmic quantization,'' \emph{Automatica}, vol.~47, no.~1, pp. 170--176, 2011.

\bibitem{shen2019deep}
Z.~Shen, H.~Yang, and S.~Zhang, ``Deep network approximation characterized by number of neurons,'' \emph{arXiv preprint arXiv:1906.05497}, 2019.

\bibitem{wang2009fast}
Y.~Wang and S.~Boyd, ``Fast model predictive control using online optimization,'' \emph{IEEE Transactions on Control Systems Technology}, vol.~18, no.~2, pp. 267--278, 2009.

\bibitem{kuvcera1972discrete}
V.~Ku{\v{c}}era, ``The discrete riccati equation of optimal control,'' \emph{Kybernetika}, vol.~8, no.~5, pp. 430--447, 1972.

\bibitem{paszke2019pytorch}
A.~Paszke, S.~Gross, F.~Massa, A.~Lerer, J.~Bradbury, G.~Chanan, T.~Killeen, Z.~Lin, N.~Gimelshein, L.~Antiga \emph{et~al.}, ``{PyTorch}: An imperative style, high-performance deep learning library,'' \emph{Advances in Neural Information Processing Systems}, vol.~32, 2019.

\bibitem{gurobi}
\BIBentryALTinterwordspacing
{Gurobi Optimization, LLC}, ``{Gurobi Optimizer Reference Manual},'' 2024. [Online]. Available: \url{https://www.gurobi.com}
\BIBentrySTDinterwordspacing

\bibitem{hanin2019universal}
B.~Hanin, ``Universal function approximation by deep neural nets with bounded width and {ReLU} activations,'' \emph{Mathematics}, vol.~7, no.~10, p. 992, 2019.

\end{thebibliography}

\begin{IEEEbiography}
[{\includegraphics[width=1in,height=1.25in,clip,keepaspectratio]{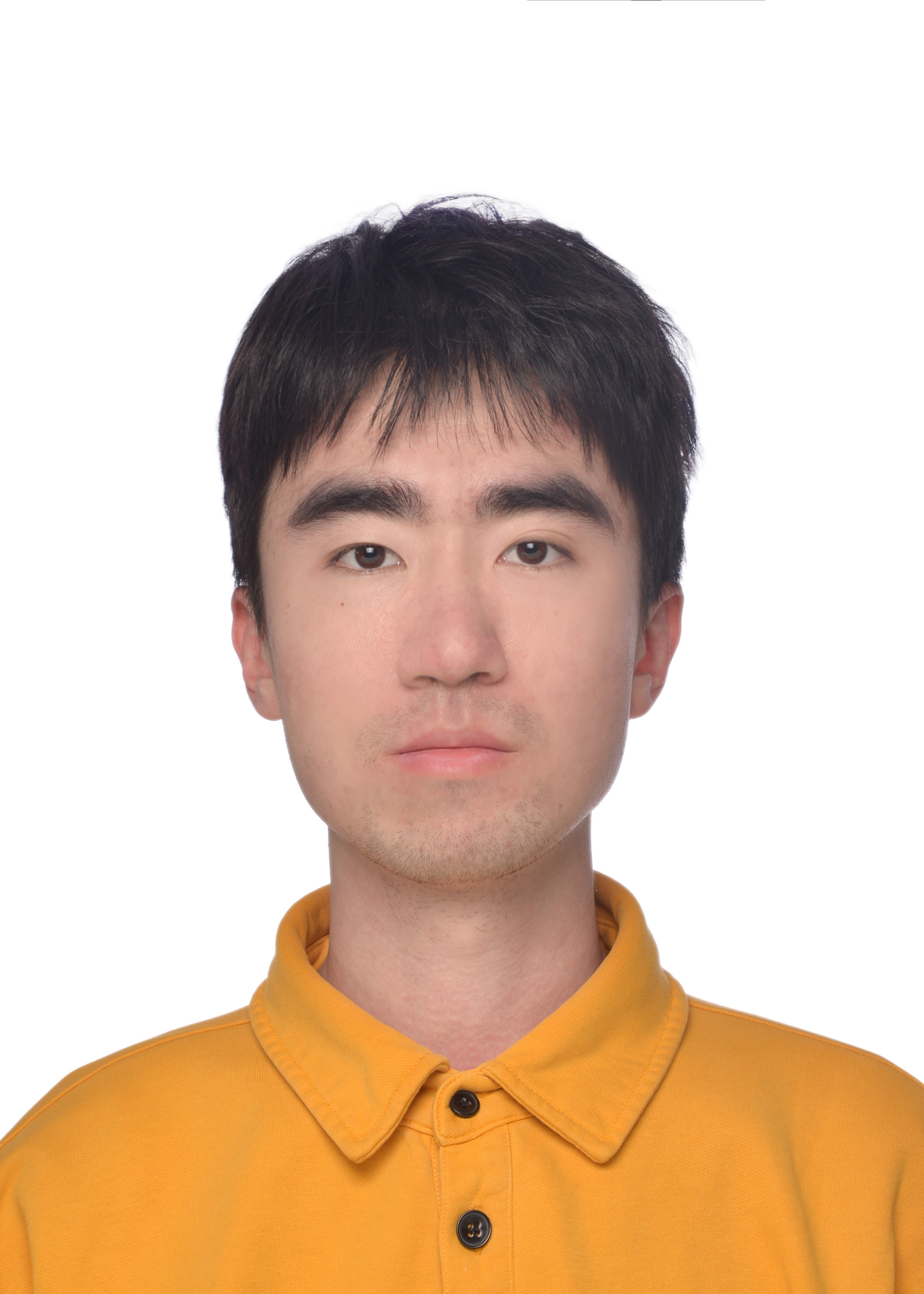}}]
{Xingchen Li} (Student Member, IEEE) received the B.S. degree from the Department of Automation, Tsinghua University, Beijing, China, in 2021. He is currently pursuing the Ph.D. degree at the Department of Automation, Tsinghua University, Beijing, China. His research interests lie at the intersection of machine learning, optimization, and control theory. He focuses on developing and analyzing learning-based methods for fast and reliable decision-making, including the use of GPU acceleration for high-performance optimization and control.\vspace{-12pt}
\end{IEEEbiography}

\begin{IEEEbiography}[{\includegraphics[width=1in,height=1.25in,clip,keepaspectratio]{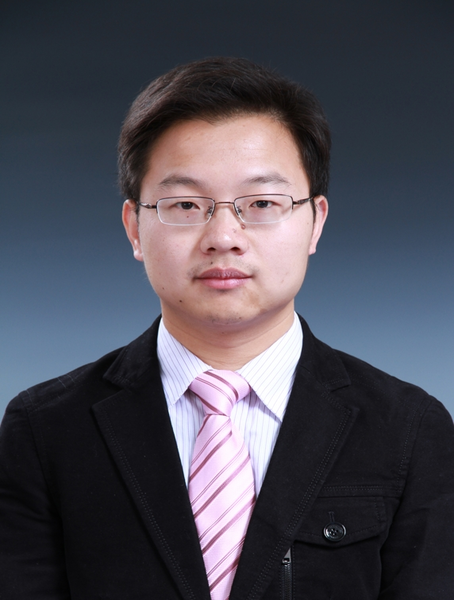}}]
{Keyou You} (Senior Member, IEEE) received the B.S. degree in Statistical Science from Sun Yat-sen University, Guangzhou,China, in 2007 and the Ph.D. degree in Electrical and Electronic Engineering from Nanyang Technological University (NTU), Singapore, in 2012. After briefly working as a Research Fellow at NTU, he joined Tsinghua University in Beijing, China where he is now a Full Professor in the Department of Automation. He held visiting positions at Politecnico di Torino, Hong Kong University of Science and Technology, University of Melbourne and etc.

Prof. You’s research interests focus on the intersections between control, optimization and learning as well as their applications in autonomous systems. He received the Guan Zhaozhi award at the 29th Chinese Control Conference in 2010 and the ACA (Asian Control Association) Temasek Young Educator Award in 2019. He received the National Science Funds for Excellent Young Scholars in 2017, and for Distinguished Young Scholars in 2023. Currently, he is an Associate Editor for {\em Automatica} and {\em IEEE Transactions on Control of Network Systems}.
\end{IEEEbiography}

\end{document}